\newtheorem{assumption}{Assumption}
\newtheorem{theorem}{Theorem}
\newtheorem{remark}{Remark}
\newtheorem{example}{Example}
\newtheorem{proposition}{Proposition}
\newtheorem{lemma}{Lemma}
\newtheorem{definition}{Definition}
\def\th{\widehat{\tau}}
\def\Kh{\widehat{K}}
\def\Pr{\mathbb{P}}
\def\D{\widehat{\mathcal{T}}}
\def\R{\widehat{\mathcal{R}}}
\def\T{\mathscr{T}}
\def\bbR{\mathbb{R}}
\def\bfx{{x}}
\def\bfS{\mathbf{S}}
\def\E{\mathbb{E}}
\def\bv{{v}}
\def\bomega{\omega}
\def\bbI{1}
\def\bZ{\kappa}
\def\e{e}
\def\Var{Cov}
\def\bfY{\bm{\bZ}}
\def\bSigma{\Sigma}
\def\Var{{\rm Var}}
\def\Cov{{\rm Cov}}
\def\tg{\widetilde{g}}
\def\bfepsilon{\bm{\epsilon}}
\def\bL{\mathbb{L}}
\def\bV{\mathbb{V}}
\def\S{S}
\begin{document}
\title{\bf\Large TUNE: Algorithm-Agnostic Inference after Changepoint Detection}
\author{Yinxu Jia, Jixuan Liu, Guanghui Wang, Zhaojun Wang, and Changliang Zou \\
{\small\it School of Statistics and Data Science, Nankai University, China} \\
}
\date{}
\maketitle
\baselineskip 20pt

\begin{abstract}
In multiple changepoint analysis, assessing the uncertainty of detected changepoints is crucial for enhancing detection reliability---a topic that has garnered significant attention. Despite advancements through selective p-values, current methodologies often rely on stringent assumptions tied to specific changepoint models and detection algorithms, potentially compromising the accuracy of post-detection statistical inference. We introduce TUNE (Thresholding Universally and Nullifying change Effect), a novel algorithm-agnostic approach that uniformly controls error probabilities across detected changepoints. TUNE sets a universal threshold for multiple test statistics, applicable across a wide range of algorithms, and directly controls the family-wise error rate without the need for selective p-values. Through extensive theoretical and numerical analyses, TUNE demonstrates versatility, robustness, and competitive power, offering a viable and reliable alternative for model-agnostic post-detection inference.
\end{abstract}

\vspace{0.2cm}
\noindent{\bf Keywords}: Bootstrap; Family-wise error rate; Max-type statistic; Multiple testing; Post-detection inference

\section{Introduction}\label{sec:intro}

Changepoint analysis is essential for detecting significant changes in the parameters or distributions within data streams. The detection of multiple changepoints has become increasingly relevant with the growth of temporal data, serving as a pivotal element in modeling, estimation, and inference.

Research over the past two decades has primarily focused on developing diverse algorithms to localize changepoints across various models. These algorithms fall into two main categories: dynamic programming-based algorithms \citep{Killick+Rebecca+Fearnhead+etal-2012-p1590}, estimating all changepoints simultaneously, and greedy algorithms, identifying changepoints sequentially. The latter includes binary segmentation (BS) and its variants \citep{Fryzlewicz-2014-p2243,kovacs+Buhlmann-2023-p249}, as well as moving-window methods \citep{Niu+Zhang-2012-p1306,Eichinger+Kirch-2018-p526}. For an extensive review, see \cite{truong-2020-p107299}. Recent developments have expanded the scope of model settings to encompass high-dimensional mean and covariance models \citep{bai2010common, jirak2015, cho2015multiple, wang+Samworth+2018+p57, wang2018change, dette2022estimating, liu2020unified, yu2021finite, zhang2022adaptive, wang+Feng-2023-p936}, high-dimensional linear models \citep{lee2016lasso, leonardi2016computationally, kaul2019efficient, wang2021statistically, MR4784067}, and nonparametric models \citep{MR3210993, lung2015homogeneity, MR3180567, MR4048973, chen2023graph}. Additionally, modern machine learning techniques are increasingly integrated into detection algorithms \citep{londschien2023random, li2024automatic}.

Recent literature emphasizes the consistency of point estimators, which often relies on stringent assumptions regarding specific models and detection algorithms. However, perfect model recovery is not always attainable in practice, resulting in potential inaccuracies or errors in subsequent statistical inference based on detected changepoints. This has spurred growing interest in evaluating the uncertainty of detected changepoints to enhance the reliability of detection outcomes. In this context, \cite{frick2014multiscale} pioneered the Simultaneous MUltiscale Changepoint Estimator (SMUCE), which utilizes multiscale likelihood ratio statistics to control the family-wise error rate (FWER), thus mitigating the risk of overestimating the true number of changepoints. Later developments include the work by \cite{pein2017heterogeneous} and \cite{jula2022multiscale}. Moreover, \cite{Li2016fdr} adapted SMUCE by incorporating local quantiles for controlling the false discovery rate (FDR). These SMUCE-based methods primarily focus on the number of changepoints rather than their precise locations. In addition, \cite{hao+Niu+Zhang-2013-p1553} employed moving sum (MOSUM) statistics alongside the Benjamini–Hochberg procedure to control FDR. Furthermore, \cite{MR4766015} introduced the Narrowest Significance Pursuit (NSP) approach to identify localized regions, each guaranteed to contain a changepoint, while maintaining a predefined global significance level; see also \cite{Fang+Li+Siegmund-2020-p1615}. However, these methods are often tailored to specific models and algorithms. Methods like sample-splitting or cross-validation, as explored by \cite{chen2023data}, \cite{wang2022data}, and \cite{chen2024uncertainty}, introduce flexibility but at the expense of statistical power. While these strategies provide a robust framework for uncertainty assessment across various models and detection algorithms, a significant concern persists: using the full sample for changepoint detection is generally preferable to avoid accuracy losses, given the discrepancies that often arise between results obtained from full and partial samples.

To validate the detection of a changepoint by an algorithm, a straightforward approach involves conducting a two-sample test, such as the classical t-test for mean changes, using samples from adjacent segments around the detected changepoint. However, this method typically yields invalid p-values due to the reuse of data for both detection and subsequent testing---a phenomenon recognized as \textit{``double dipping''} in the literature \citep{kriegeskorte2009circular}. Recent advancements have shifted towards the use of \textit{selective inference} \citep{Fithian+Sun+Taylor-2017}, as demonstrated in studies by \cite{Hyun+GSell+Tibshirani-2018-p1053}, \cite{Hyun+Lin+GSell+Tibshirani-2021-p1037}, \cite{Duy+Toda+Sugiyama+Takeuchi-2020-p11356}, \cite{Jewell+Fearnhead+Witten-2022-p1082}, and \cite{Carrington+Fearnhead-2023}. These studies introduce the concept of the \textit{selective p-value}, which is the p-value conditioned on the selection of the test itself. To make the conditional probability computationally feasible, these approaches often require conditioning on additional information. However, overconditioning can lead to a reduction in statistical power \citep{Jewell+Fearnhead+Witten-2022-p1082,Carrington+Fearnhead-2023}. Selective methods are generally designed for specific models, primarily mean change models for univariate normal data. Adapting these methods to different detection algorithms, such as fixed-step generalized lasso \citep{Hyun+GSell+Tibshirani-2018-p1053}, fixed-step BS \citep{Hyun+Lin+GSell+Tibshirani-2021-p1037}, and Pruned Exact Linear Time \citep[PELT,][]{Killick+Rebecca+Fearnhead+etal-2012-p1590} with a fixed penalty \citep{Jewell+Fearnhead+Witten-2022-p1082,Duy+Toda+Sugiyama+Takeuchi-2020-p11356}, requires substantial customization. Although extensions to other settings are possible, they necessitate conditioning on more information, which increases computational demands and may further reduce statistical power. For instance, \cite{Hyun+GSell+Tibshirani-2018-p1053} and \cite{Hyun+Lin+GSell+Tibshirani-2021-p1037} developed information criterion-based methods to adaptively select the number of steps, and \cite{Hyun+Lin+GSell+Tibshirani-2021-p1037} proposed a sampling strategy that conditions on a sufficient statistic of the error variance to tackle unknown variances. Additionally, \cite{Jewell+Fearnhead+Witten-2022-p1082} recommended using a robust error variance estimator when variances are unknown.

The review and discussion thus far highlight several unresolved gaps. Firstly, the adaptation of selective methods to a wide range of state-of-the-art detection algorithms such as segment neighborhood \citep{auger+Lawrence-1989-p39}, PELT, seeded BS \citep{kovacs+Buhlmann-2023-p249}, etc., remains a significant challenge. This challenge is particularly pronounced when various model selection criteria, like %elbow plot inspection \citep{lavielle2005using}, 
Bayesian information criteria \citep[BIC,][]{yao1988estimating} and cross-validation \citep{Zou+Wang+Li-2020-p413}, are employed to determine the number of changepoints. Secondly, designing new selective methods that are compatible with more general data structures or distributions proves intricate. This complexity raises questions about how to extend beyond the mean change model for univariate normal data, particularly given the complexities of modern datasets. Finally, addressing the issue of multiplicity in evaluating the uncertainty of multiple detected changepoints continues to be a critical concern. While some selective methods incorporate Bonferroni corrections \citep{Hyun+Lin+GSell+Tibshirani-2021-p1037}, they lack theoretical guarantees for controlling specific error rates, especially when the number of detected changepoints is determined in a data-driven manner rather than being pre-specified.

\subsection{Our Contributions}

This paper introduces TUNE (Thresholding Universally and Nullifying change Effect), a simple yet effective approach designed to address the limitations of selective methods in post-detection inference. TUNE consists of two main elements: the construction of a test statistic for each detected changepoint and the establishment of a threshold to assess its significance. The key idea is to overestimate the null distribution of each test statistic by exploring all possible changepoint locations, thereby achieving uniform control over error probabilities across all detected changepoints. There is potential concern about the over-conservativeness of TUNE, often arising from experiences in regression modeling where the model space can expand exponentially with the covariate dimension. However, these issues are less pronounced in the context of inference post changepoint detection, where the model space is inherently limited by the sample size. Both theoretical and numerical analyses validate that TUNE consistently controls the error rate while maintaining competitive power relative to existing methods.

The main contributions of this paper are outlined as follows:

\begin{itemize}
\item TUNE focuses on directly controlling the FWER, which remains well-defined despite the data-dependent nature of specifying null hypotheses.

\item The control of FWER is achieved through a universal threshold for multiple tests, which can be computed flexibly and efficiently, independent of the detection algorithm employed. This threshold can be determined by approximating the distribution of the maximum of a sequence of localized two-sample statistics under a hypothetical global ``null'' scenario---no changes in the parameter of interest would have occurred, which connects to MOSUM methods for changepoint testing and presents an area of independent interest. We offer new theoretical insights into this approximation for high-dimensional parametric changepoint models by employing multiplier bootstrap methods with difference-based centering techniques.

\item The choices of two-sample statistics can be highly flexible, allowing adaptation to scenarios involving high-dimensionality, nonparametric settings, or additional contextual information from the changepoint model.
\end{itemize}

\subsection{Notations}

For an integer $n>0$, let $[n]$ denote the set $\{1,\ldots,n\}$. The notation $a_n\asymp b_n$ implies the existence of positive constants $c$ and $C$ such that $c\le|a_n/b_n|\le C$. For a vector $x\in \bbR^d$, define the $L_p$-norm $\Vert x\Vert_p=(\sum_{i=1}^d \vert x_i\vert^p)^{1/p}$ for a positive integer $p$, and the $L_{\infty}$-norm $\Vert x\Vert_{\infty}=\max_{i=1, \dots, d}\vert x_i\vert$. {For a symmetric matrix $A\in\bbR^{d\times d}$, let $\|A\|_2=\sup_{x\in\bbR^d}\|Ax\|_2/\|x\|_2$ and $\|A\|_\infty=\max_{i, j=1,\dots, d}|A_{ij}|$.} The unit sphere in $\mathbb{R}^{d-1}$ is denoted by $\mathbb{S}^{d-1}=\{z\in \bbR^{d-1}:\Vert z\Vert_2=1\}$. For a sequence of data $\{Z_i:i\in[n]\}$ and an index set $I\subset(0,n]$, let $Z_I=\{Z_i:i\in I\}$ and define the sample mean of $Z_I$ as $\bar{Z}_I=|I|^{-1}\sum_{i\in I}Z_i$, where $|I|$ is the cardinality of $I$. The abbreviation ``iid'' stands for ``independent and identically distributed.'' The notation $N(\mu,\Sigma)$ denotes the normal distribution with mean $\mu$ and covariance matrix $\Sigma$.

\subsection{Structure}

The remainder of this paper is structured as follows. Section \ref{sec:framework} presents a general framework for post-detection inference, focusing on algorithm-agnostic FWER control. Section \ref{sec:example_and_extension} delves into a variety of changepoint models and data structures, illustrating the provable versatility of the proposed framework. Simulation studies and real-data analyses are detailed in Section \ref{sec:simul} and Section \ref{sec:realdata}, respectively. Section \ref{sec:conclusion} concludes the paper. All theoretical proofs are provided in the supplementary material.

\section{General Framework}\label{sec:framework}

\subsection{Post-Detection Inference via Multiple Testing}

Consider a sequence of independent observations $\{Z_i: i\in[n]\}$, each taking values from a set $\mathcal{Z}$ and has a distribution $P_i$. Of interest is to detect changes in a specific statistical property, or a \textit{parameter} $\theta_i\equiv\theta(P_i)$, such as the mean $\theta(P_i)=\int zdP_i$ or the distribution $\theta(P_i)=P_i$. We consider a general multiple changepoint model:
\begin{align}\label{MCP_model}
\theta_i=\theta^*_k,\ i\in(\tau^*_{k-1},\tau^*_k],\ k\in[K^*+1],
\end{align}
where the data is partitioned into $K^*+1$ segments by the changepoints $\{\tau^*_k: k\in[K^*]\}$, with the conventions $\tau^*_0=0$ and $\tau^*_{K^*+1}=n$. Observations within each segment are characterized by a common parameter $\theta^*_k$, where $\theta^*_{k-1}\ne\theta^*_k$ for $k\in[K^*]$. A changepoint detection algorithm identifies a set of potential changepoints, say
\[
\D=\{\th_1,\ldots,\th_{\Kh}\}.
\]
The algorithm may be tailored to detect specific types of changes, such as mean shifts, through global or local procedures, incorporating various model selection criteria to determine the number of changepoints. It may also leverage advanced machine learning techniques to accommodate complex data distributions. See Section \ref{sec:intro} for more details. Our objective is to evaluate the reliability of each detected changepoint, while controlling specific error rates.

To this end, we employ a multiple testing framework to examine the presence of any true changepoint within a local neighborhood of each detected changepoint:
\begin{align}\label{H0}
H_{0,\th_j}: \theta_{\th_j-h+1}=\cdots=\theta_{\th_j+h}\ \text{versus}\ H_{1,\th_j}:\ \text{not}\ H_{0,\th_j},\quad\text{for}\ j=1,\ldots,\Kh,
\end{align}
where $h>0$ is a predetermined window size. If $H_{0,\th_j}$ holds, suggesting no true changepoint within the neighborhood $(\th_j-h,\th_j+h]$, then $\th_j$ is deemed a \textit{true null}. The rejection of $H_{0,\th_j}$ implies a \textit{reliable} detection outcome. We define $\R = \{\th_j\in\D: H_{0,\th_j}\ \text{is rejected}\}$ as the set of detected changepoints confirmed as reliable.

\begin{example}[Univariate mean change model]\label{ex:uni_mean}
A fundamental instance of Model (\ref{MCP_model}) concerns the detection of mean changes in univariate data, where  $\mathcal{Z}=\mathbb{R}$ and $Z_i=\theta_i+\epsilon_i$ with iid noises $\epsilon_i$ from a distribution $P_\epsilon$ with mean zero. Recent advancements in post-detection inference have primarily focused on this model with $P_\epsilon=N(0,\sigma^2)$, assuming a known variance $\sigma^2$. This model has been explored through various specific detection algorithms \citep{Hyun+GSell+Tibshirani-2018-p1053, Hyun+Lin+GSell+Tibshirani-2021-p1037, Duy+Toda+Sugiyama+Takeuchi-2020-p11356, Jewell+Fearnhead+Witten-2022-p1082, Carrington+Fearnhead-2023}.
\end{example}

\begin{remark}[On specifying true nulls]
One might consider defining a true null as simply $\th_j$ is not a true changepoint, i.e., $\th_j\ne\tau^*_k$ for any $k\in[K^*+1]$. However, such a definition is overly stringent given the difficulties inherent in precisely locating a changepoint. For instance, in Example \ref{ex:uni_mean} with $K^*=1$, the best achievable precision is that $|\th_1-\tau^*_1|=O_P(1)$ \citep{csorgo1997limit}, indicating that $\th_1$ can deviate from $\tau^*_1$ by several points, yet still provide valuable information about changes. Alternative definitions of true nulls, such as no changes occurring between estimated changepoints immediately before and after $\th_j$, are discussed further in Section \ref{sec:H0_3}.
\end{remark}

The multiple testing framework (\ref{H0}) poses substantial challenges in defining meaningful individual error rates due to the \textit{data-dependent} nature of each null hypothesis. However, the FWER can be well-defined, independent of the detection algorithm used. Define $\mathcal{H}$ as the set of all potential changepoints that are true nulls:
\[
\mathcal{H}=\{\tau\in[n-1]:(\tau-h,\tau+h)\cap\{\tau^*_k\}_{k=1}^{K^*}=\emptyset\}.
\]
Thus, $H_{0,\th_j}$ is true if and only if $\th_j\in\mathcal{H}$. The FWER is then given by
\begin{align}\label{FWER}
{\rm FWER}\equiv\Pr(\text{there exists some $\th_j\in\R\cap\mathcal{H}$}),
\end{align}
where $\Pr$ denotes the probability taken with respect to all random quantities involved, encompassing the actual data distribution $\prod_{i=1}^nP_i$ and the inherent randomness of the detection algorithm utilized (e.g., random intervals in Wild BS \citep{Fryzlewicz-2014-p2243} and random data splitting in cross-validation-based model selection criteria \citep{Zou+Wang+Li-2020-p413}). This definition raises a critical question: can we directly control the FWER, particularly given the inherent challenges of multiplicity, regardless of the detection algorithm employed?

\subsection{Algorithm-Agnostic FWER Control}\label{sec:FWER}

For each null hypothesis $H_{0,\th_j}$, where $\th_j\in\D$, a testing procedure involves a test statistic and a threshold. To clarify, for any $\tau\in[n-1]$, we define the test statistic as
\[
T_\tau=\T(Z_{(\tau-h,\tau+h]}),
\]
where $\T$ is a generic operator applied to the observations within the local neighborhood $(\tau-h,\tau+h]$. This statistic typically represents a \textit{two-sample} test statistic that quantifies the discrepancies in the parameter of interest (cf. Model (\ref{MCP_model})) between the segments $Z_{(\tau-h,\tau]}$ and $Z_{(\tau,\tau+h]}$. For instance, in Example \ref{ex:uni_mean} where the noises are $N(0,\sigma^2)$ with a known $\sigma^2$, an appropriate test statistic is
\begin{align*}
T_{\tau,\rm mean}=\sqrt{h/2}\left\vert\bar{Z}_{(\tau-h,\tau]}-\bar{Z}_{(\tau,\tau+h]}\right\vert/\sigma,
\end{align*}
where a larger value indicates a rejection of $H_{0,\tau}$. Additional two-sample test statistics and alternative formulations of the operator $\T$ are discussed in Section \ref{sec:example_and_extension}.

Let $\alpha\in(0,1)$ be a prescribed significance level. Rather than deriving specific thresholds for each test statistic $T_{\th_j}$ to individually control Type-I error rates (which may be ambiguous in our context) or selective error rates, we advocate for identifying a subset of the detected changepoints considered reliable via
\[
\R=\{\th_j\in\D: T_{\th_j}>t_{\rm u,\alpha}\},
\]
where the \textit{universal threshold} $t_{\rm u,\alpha}$ is determined so that the FWER (\ref{FWER}) is controlled at $\alpha$.

The key idea is to overestimate the probability of any false rejection, by exploring all potential changepoint locations and transferring probability evaluations from localized nulls (cf. (\ref{H0})) to a hypothetical global ``null'' scenario. Specifically, we analyze the distribution of $\max_{\tau=h,\ldots,n-h}T_\tau$ under a probability $\Pr_0$, which restricts $\Pr$ to situations where no changes in the parameter of interest would have occurred across all data segments. Notably, the $\Pr_0$ considers a hypothetical scenario where $\theta^*_1=\cdots=\theta^*_{K^*+1}$, contrasting with $\Pr$ that reflects the dynamics of actual changepoints. The threshold is established by ensuring
\begin{align}\label{thresh}
\Pr_0(\max_{\tau=h,\ldots,n-h}T_\tau>t_{\rm u,\alpha})\le\alpha+o(1).
\end{align}
In the aforementioned example,
\[
\Pr_0(\max_{\tau=h,\ldots,n-h}T_{\tau,\rm mean}>t_{\rm u,\alpha})=\Pr(\max_{\tau=h,\ldots,n-h}\sqrt{h/2}\left\vert\bar{\xi}_{(\tau-h,\tau]}-\bar{\xi}_{(\tau,\tau+h]}\right\vert>t_{\rm u,\alpha}),
\]
where $\xi_i=\epsilon_i/\sigma$ are iid as $N(0,1)$. The threshold $t_{\rm u,\alpha}$ can be computed through simulations such that $ \Pr_0(\max_{\tau=h,\ldots,n-h}T_\tau>t_{\rm u,\alpha})\approx\alpha$. Generally, the establishment of the threshold entails approximating the distribution of a MOSUM-like statistic---derived by taking the maximum of a sequence of localized two-sample statistics---in scenarios without changepoints. Further discussion on how this threshold is determined to satisfy Eq. (\ref{thresh}) across various changepoint models and test statistics is elaborated in Section \ref{sec:example_and_extension}.

We designate our post-detection inference method as \textit{TUNE}, an acronym for \textit{Thresholding Universally and Nullifying change Effect}. This approach turns out to uniformly control the FWER, irrespective of the detection algorithm used, thus offering \textit{algorithm-agnostic} robustness. This capability depends on a critical property of the test statistics utilized, which we term ``\textit{nullifiability}''. Nullifiability serves to bridge the gap between $\Pr$ and $\Pr_0$.

\begin{assumption}[Nullifiability]\label{asmp:null}
$\Pr\big(\max_{\tau\in\mathcal{H}}T_\tau>t_{\rm u,\alpha}\big)
= \Pr_0\big(\max_{\tau\in\mathcal{H}}T_\tau>t_{\rm u,\alpha}\big) + o(1)$.
\end{assumption}

This assumption imposes a high-level requirement on the test statistics. In Example \ref{ex:uni_mean}, the statistic $T_{\tau,\rm mean}=\sqrt{h/2}\left\vert\bar{Z}_{(\tau-h,\tau]}-\bar{Z}_{(\tau,\tau+h]}\right\vert/\sigma$ readily fulfills this assumption because for any true null $\tau\in\mathcal{H}$, $T_{\tau,\rm mean}=\sqrt{h/2}\left\vert\bar{\epsilon}_{(\tau-h,\tau]}-\bar{\epsilon}_{(\tau,\tau+h]}\right\vert/\sigma$ is independent of any $\theta^*_k$. Consequently, $\Pr\big(\max_{\tau\in\mathcal{H}}T_{\tau,\rm mean}>t\big)=\Pr_0\big(\max_{\tau\in\mathcal{H}}T_{\tau,\rm mean}>t\big)$ for any $t$. This independence is intrinsic to two-sample statistics that compare sample means, cancelling out the common underlying population mean; refer to Sections \ref{sec:mean}--\ref{sec:score}. In general, asymptotic independence of the parameter of interest, such that the limiting distribution of $\max_{\tau\in\mathcal{H}}T_\tau$ is free of any $\theta^*_k$, is favorable and motivates a wide range of two-sample statistics that compare parameter estimators capable of admitting (asymptotic) linear expansions, with appropriate normalization; see Sections \ref{sec:Wald} and \ref{sec:nonpara}. Additional examples that conform to this assumption and further discussion of this principle are elaborated in Section \ref{sec:example_and_extension}.

\begin{theorem}\label{thm:FWER}
If the threshold is set according to Eq. (\ref{thresh}) and Assumption \ref{asmp:null} is satisfied, then FWER $\le\alpha+o(1)$.
\end{theorem}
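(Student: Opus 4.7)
\noindent\textbf{Proof plan for Theorem \ref{thm:FWER}.}
The plan is to reduce the FWER to a tail probability of a single MOSUM-like maximum over the set of true nulls, then use nullifiability to switch the probability measure from $\Pr$ to $\Pr_0$, and finally apply the threshold calibration in Eq. (\ref{thresh}).

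First, I would unwind the definitions. By construction, $\R=\{\th_j\in\D:T_{\th_j}>t_{\rm u,\alpha}\}$, so the event $\{\R\cap\mathcal{H}\ne\emptyset\}$ occurs exactly when there exists some $\th_j\in\D\cap\mathcal{H}$ with $T_{\th_j}>t_{\rm u,\alpha}$. Since $\D\cap\mathcal{H}\subseteq\mathcal{H}$, this event is contained in $\{\max_{\tau\in\mathcal{H}}T_\tau>t_{\rm u,\alpha}\}$, and hence
\begin{equation*}
{\rm FWER}=\Pr(\R\cap\mathcal{H}\ne\emptyset)\le\Pr\bigl(\max_{\tau\in\mathcal{H}}T_\tau>t_{\rm u,\alpha}\bigr).
\end{equation*}
The key point here is that although the detected set $\D$ is random and algorithm-dependent, the inequality holds for any realization of $\D$, which is precisely what makes the bound algorithm-agnostic.

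Next, I would invoke Assumption \ref{asmp:null} (nullifiability) to replace the true-data-generating probability $\Pr$ by the global-null probability $\Pr_0$, incurring only an $o(1)$ error:
\begin{equation*}
\Pr\bigl(\max_{\tau\in\mathcal{H}}T_\tau>t_{\rm u,\alpha}\bigr)=\Pr_0\bigl(\max_{\tau\in\mathcal{H}}T_\tau>t_{\rm u,\alpha}\bigr)+o(1).
\end{equation*}
Since $\mathcal{H}\subseteq\{h,\ldots,n-h\}$ (the range over which $T_\tau$ is well-defined), enlarging the index set only increases the maximum, so
\begin{equation*}
\Pr_0\bigl(\max_{\tau\in\mathcal{H}}T_\tau>t_{\rm u,\alpha}\bigr)\le\Pr_0\bigl(\max_{\tau=h,\ldots,n-h}T_\tau>t_{\rm u,\alpha}\bigr)\le\alpha+o(1),
\end{equation*}
where the last inequality is exactly the threshold condition Eq. (\ref{thresh}). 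Chaining the three bounds yields ${\rm FWER}\le\alpha+o(1)$.

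The theorem itself is essentially a bookkeeping argument, so there is no serious obstacle; the real difficulty lies in two places that are \emph{outside} this statement. First, verifying Assumption \ref{asmp:null} for concrete models and test statistics (mean, score, Wald, nonparametric) is nontrivial and is what the later sections must address. Second, computing or approximating the global-null threshold $t_{\rm u,\alpha}$ so that Eq. (\ref{thresh}) holds---typically via a multiplier bootstrap with difference-based centering in the high-dimensional setting flagged in the introduction---is the substantive technical component. Given those two ingredients, the passage from the inclusion of events through nullifiability to the calibration bound is immediate, and the monotonicity step (enlarging $\mathcal{H}$ to $\{h,\ldots,n-h\}$) is what makes the control hold uniformly regardless of where the true changepoints lie or which algorithm produced $\D$.
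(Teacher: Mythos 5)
Your proposal is correct and follows essentially the same four-step argument as the paper: bound the FWER by the maximum over all potential true nulls in $\mathcal{H}$, apply Assumption \ref{asmp:null} to pass from $\Pr$ to $\Pr_0$, enlarge the index set to $\{h,\ldots,n-h\}$, and invoke the threshold calibration in Eq. (\ref{thresh}). Your added remarks on where the real technical work lies (verifying nullifiability and calibrating $t_{\rm u,\alpha}$) match the paper's own commentary.
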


The proof of Theorem \ref{thm:FWER} can be succinctly articulated as follows:
\begin{align*}
\begin{aligned}
\text{FWER} &= 
\Pr\big(T_{\th_j}>t_{\rm u,\alpha}\ \text{for some}\ \th_j\in\mathcal{H}\big)\\
&\stackrel{(i)}{\le} \Pr\big(\max_{\tau\in\mathcal{H}}T_\tau>t_{\rm u,\alpha}\big)\\
&\stackrel{(ii)}{=} \Pr_0\big(\max_{\tau\in\mathcal{H}}T_\tau>t_{\rm u,\alpha}\big)+o(1)
\stackrel{(iii)}{\le} \Pr_0\big(\max_{\tau=h,\ldots,n-h}T_\tau>t_{\rm u,\alpha}\big)+o(1)
\stackrel{(iv)}{\le} \alpha+o(1).
\end{aligned}
\end{align*}
Inequality (i) extends the focus from detected true nulls to all potential true nulls. Equality (ii) applies Assumption \ref{asmp:null}. Inequality (iii) widens the consideration to include all possible changepoint locations. Finally, Inequality (iv) is justified by the threshold determination strategy (cf. Eq. (\ref{thresh})). A key feature of this proof is its algorithm-agnostic nature, which is achieved by removing any dependency on detected changepoints through Inequality (i).

The scaling step (i) of the proof might prompt concerns regarding the potential conservativeness of the TUNE method. Such concerns often arise from experiences with simultaneous inference methods in regression scenarios, where the space of null models could expand exponentially with the covariate dimension $d$ (potentially as large as $2^d$) \citep{berk2013valid}. However, in the context of changepoint inference addressed here, this issue is much less severe. The number of all possible true nulls is generally limited to at most $n$, and taking a maximum over the corresponding statistics often yields a $\sqrt{\log n}$ inflation. Our numerical analyses further demonstrate that TUNE consistently controls the FWER while remaining competitive power compared to selective methods.

\begin{remark}[{On power}]
Consider a toy example: in Example \ref{ex:uni_mean}, we employ a MOSUM detection procedure. This procedure determines the presence of at least one changepoint by evaluating whether $\max_{h\le\tau\le n-h}T_{\tau,\rm mean}>t_{\rm det,\alpha}$, where the detection threshold $t_{\rm det,\alpha}$ is chosen such that $\Pr_0(\max_{h\le\tau\le n-h}T_{\tau,\rm mean}>t_{\rm det,\alpha})\to \alpha$. Under mild conditions, $t_{\rm det,\alpha}\asymp\sqrt{\log(n/h)}$, as supported by Theorem 2.1 in \cite{Eichinger+Kirch-2018-p526}. The consistency of this test is assured if $\min_{k\in[K^*]}|\theta^*_{k+1}-\theta^*_k|/\{h^{-1/2}(\log n)^{1/2}\}\to\infty$; see Theorem 2.2 in \cite{Eichinger+Kirch-2018-p526}. Upon a rejection by this test, the procedure identifies a changepoint at $\th_1=\arg\max_{h\le\tau\le n-h}T_{\tau,\rm mean}$. For the subsequent post-detection inference, employing the TUNE method with $T_\tau=T_{\tau,\rm mean}$ and, in particular, $t_{\rm u,\alpha}=t_{\rm det,\alpha}$ immediately ensures that ${\rm FWER}\to\alpha$, according to Eq. (\ref{thresh}) and Theorem \ref{thm:FWER}. In fact, this method can re-detect any non-null $\th_1$ during the post-detection phase at nearly the same rate as during the initial changepoint testing stage; refer to Section \ref{sec:max_CUSUM} for detailed discussions.
\end{remark}

\section{Examples and Extensions}\label{sec:example_and_extension}

\subsection{Two-Sample Test Statistics and Thresholds}\label{sec:two_sample}

Theorem \ref{thm:FWER} lays the groundwork for achieving algorithm-agnostic control of the FWER, prompting vital inquiries into its practical applications: what types of two-sample test statistics can be ``nullified'' to satisfy Assumption \ref{asmp:null}, and how should universal thresholds be determined according to Eq. (\ref{thresh})? This section delves into various changepoint models and data structures, providing practical methods for selecting suitable two-sample test statistics and thresholds. To facilitate theoretical discussions, we will assume that the number of true changepoints, $K^*$, is fixed.

\subsubsection{Warm-Up: Revisiting Example \ref{ex:uni_mean}}\label{sec:mean_uni_normal}

Example \ref{ex:uni_mean} with normal noises $N(0,\sigma^2)$ for a known $\sigma^2$ is thoroughly discussed in post-detection inference literature and serves as an instructive baseline. As demonstrated in Section \ref{sec:FWER}, the two-sample mean test statistic $T_\tau=T_{\tau,\rm mean}$ is employed, and the threshold $t_{\rm u,\alpha}$ is set as the upper-$\alpha$ quantile of the distribution of
\[
\max_{\tau=h,\ldots,n-h}\sqrt{h/2}\left\vert\bar{\xi}_{(\tau-h,\tau]}-\bar{\xi}_{(\tau,\tau+h]}\right\vert,
\]
which can be efficiently simulated. In fact, the TUNE method can be applied to Example \ref{ex:uni_mean} without the normality and known variance assumptions, as illustrated via Example \ref{ex:Wald} below, which includes Example \ref{ex:uni_mean} as a special instance.

\begin{remark}[On computational complexities]
Post-detection inference in this example involves computing $T_{\th_j,\rm mean}$ for each $j\in[\Kh]$, with a computational complexity of $O(\min\{\Kh h,n\})$. Our method, TUNE, differs from selective methods in how thresholds are determined. TUNE sets a universal threshold, either via simulation---with a complexity of $O(Bn)$, where $B$ is the number of replications and $n$ is the computational cost per replication---or through utilization of the asymptotic distribution (cf. Section \ref{sec:Wald}), achieving a constant complexity of $O(1)$. This complexity is consistent, regardless of the detection algorithm used. In contrast, selective methods' complexities may vary depending on the specific algorithms. For example, the $k$-step BS described in \cite{Jewell+Fearnhead+Witten-2022-p1082}, which detects $\Kh=k$ changepoints, necessitates defining a conditional set at each detected changepoint---a union $I_j$ of several intervals, each demanding a complexity of at least $O(nk)$ (to implement another $k$-step BS). The number of these intervals, $|I_j|$, is dynamically determined, reaching up to a maximum of $2^k k!$. Consequently, the overall computational demand of this process can escalate to $O(nk\sum_{j=1}^k|I_j|)$---accounting for multiplicity, which could become considerably intensive when numerous changepoints are detected.
\end{remark}

\subsubsection{Estimating-Equation-Induced Parametric Changepoint Models}\label{sec:Wald}

\begin{example}[Changes in parameters induced by estimating equations]\label{ex:Wald}
Consider a scenario where each parameter $\theta^*_k\in\mathbb{R}^d$ is determined as the unique solution to an estimating equation $\E\{\psi_\theta(Z_i)\}=0$ for $i\in(\tau^*_{k-1},\tau^*_k]$, where $\psi_\theta$ is typically a known vector-valued function with $d$ coordinates. This framework accommodates various parameter types such as means (with $\psi_\theta(Z_i)=Z_i-\theta$), medians, and regression coefficients in linear regression, among others. Of interest is to detect changes in these parameters.
\end{example}

Within a segment $Z_I$ for $I\subset(0,n]$, a natural parameter estimator is the M-estimator, defined as the solution, say $\widehat{\theta}_I$, to the estimating equation $\sum_{i\in I}\psi_\theta(Z_i)=0$. For $i\in(\tau^*_{k-1},\tau^*_k]$, denote $V_i=V^*_k=\E\{\psi^{\prime\top}_{\theta^*_k}(Z_i)\}$, $\Sigma_i=\Sigma^*_k=\Var\{\psi_{\theta^*_k}(Z_i)\}$, and $\Gamma_i=\Gamma^*_k=V_k^{*-1}\Sigma^*_k(V_k^{*-1})^\top$. We employ the two-sample Wald test statistic
\begin{align*}
T_{\tau,\rm Wald} = \sqrt{h/2}\|\widehat{\Gamma}_\tau^{-1/2}(\widehat{\theta}_{(\tau-h,\tau]}-\widehat{\theta}_{(\tau,\tau+h]})\|_2,
\end{align*}
where $\widehat{\Gamma}_\tau$ is a reasonable estimator of $\Gamma_\tau$. In the absence of any changepoints, classical M-estimation theory implies that $T_{\tau,\rm Wald}^2$ is asymptotically distributed as a chi-squared distribution with $d$ degrees of freedom, independent of the underlying model parameters. Inspecting the proof of Theorem \ref{thm:Wald}, under mild conditions, the limiting distribution of $\max_{\tau\in\mathcal{H}}T_{\tau,\rm Wald}$ remains pivotal, thus fulfilling Assumption \ref{asmp:null}. To determine the universal threshold, it is necessary to establish the limiting distribution of $\max_{\tau=h,\ldots,n-h}T_{\tau,\rm Wald}$ under the probability $\Pr_0$, restricted to $\theta^*_1=\cdots=\theta^*_{K^*+1}$. Interestingly, this task parallels that in changepoint \textit{testing} problems, namely approximating the distribution of MOSUM statistics under the null hypothesis of no changes \citep{Eichinger+Kirch-2018-p526,Kirch2024}. Despite different contexts, the MOSUM-like nature of this setup and our scaling and nullifying ideas achieve the same end. This limiting distribution has already been characterized in \cite{Kirch2024}. The following assumption is required. 

\begin{assumption}\label{asmp:Wald}
(i) (Moments) There exists a constant $\nu>0$ such that $\E\{\|\psi_{\theta^*_k}(Z_i)\|_2^{2+\nu}\}<\infty$ for all $i\in(\tau^*_{k-1},\tau^*_k]$ and $k\in[K^*+1]$.
(ii) (Linearity) $\max_{\tau^*_{k-1}<\tau\le\tau^*_k-h}\|-\sqrt{h}V^*_k(\widehat{\theta}_{(\tau,\tau+h]}-\theta^*_k)-h^{-1/2}\sum_{i\in(\tau,\tau+h]}\psi_{\theta^*_k}(Z_i)\|_2 = o_P(\{\log(n/h)\}^{-1/2})$ for all $k\in[K^*+1]$.
(iii) (Window size) $n/h\to\infty$ and $\{n^{2/(2+\delta)}\log n\}/h\to 0$ for some $0<\delta<\nu$. 
(iv) (Variance estimators) $\max_{\tau\in(\tau^*_{k-1}+h,\tau^*_k-h]}\|\widehat{\Gamma}_\tau-\Gamma^*_k\|_2=o_P(\{\log(n/h)\}^{-1})$ for all $k\in[K^*+1]$.
\end{assumption}

Assumption \ref{asmp:Wald} is adapted from \cite{Kirch2024}. The existence of moments in Assumption \ref{asmp:Wald}(i) supports a strong invariance principle \citep{Einmahl1987invariance}, allowing partial sums of $\psi_{\theta^*_k}$ to be uniformly approximated by those of a Wiener process. Assumption \ref{asmp:Wald}(ii) posits that these partial sums are the leading terms in asymptotic linear expansions of the M-estimators, which can be justified under classical regularity conditions (see, for example, Regularity condition 2 in \cite{Kirch2024}). Assumption \ref{asmp:Wald}(iii) manages the scaling of the window size $h$, and Assumption \ref{asmp:Wald}(iv) ensures consistent covariance estimators. 

Under Assumption \ref{asmp:Wald} and $\Pr_0$, $a(n/h)\max_{\tau=h,\ldots,n-h}T_{\tau,\rm Wald}-b(n/h)$ converges to a Gumbel extreme value distribution $G$ in distribution, i.e., $\Pr(G\le t)=\exp\{-2\exp(-t)\}$, where $a(x)=\sqrt{2\log x}$, $b(x)=2\log x+(d/2)\log\log x-\log\big((2/3)\Gamma(d/2)\big)$, and $\Gamma$ is the gamma function.

\begin{theorem}\label{thm:Wald}
Suppose Assumption \ref{asmp:Wald} holds. The TUNE method, using the test statistic $T_{\tau,\rm Wald}$ and setting the threshold $t_{\rm u,\alpha}=\{b(n/h)+G^{-1}(1-\alpha)\}/a(n/h)$, satisfies FWER $\le\alpha+o(1)$.
\end{theorem}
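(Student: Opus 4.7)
By Theorem \ref{thm:FWER}, the claim reduces to checking (a) the threshold condition \eqref{thresh} and (b) Assumption \ref{asmp:null}. Part (a) is immediate from the Gumbel convergence cited just above the theorem: substituting $t_{\rm u,\alpha}=\{b(n/h)+G^{-1}(1-\alpha)\}/a(n/h)$ gives
\[
\Pr_0\bigl(\max_{\tau=h,\ldots,n-h}T_{\tau,\rm Wald}>t_{\rm u,\alpha}\bigr)
=\Pr_0\bigl(a(n/h)\max_\tau T_{\tau,\rm Wald}-b(n/h)>G^{-1}(1-\alpha)\bigr)\longrightarrow\alpha.
\]
The bulk of the proof is verifying (b).

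The plan for (b) is to reduce $T_{\tau,\rm Wald}$ on $\mathcal{H}$ to a pivotal studentized MOSUM and then use pivotality to couple to Brownian motion. For any $\tau\in\mathcal{H}$, the window $(\tau-h,\tau+h]$ lies entirely in one segment $(\tau^*_{k-1},\tau^*_k]$, so $T_{\tau,\rm Wald}$ depends only on segment-$k$ data. Applying Assumption \ref{asmp:Wald}(ii) to each half-window and subtracting, combining with the variance consistency in Assumption \ref{asmp:Wald}(iv) and the algebraic identity $\|\Gamma^{*-1/2}_kV^{*-1}_kv\|_2=\|\Sigma^{*-1/2}_kv\|_2$, yields uniformly over $\tau\in\mathcal{H}$
\begin{align*}
T_{\tau,\rm Wald} = \Bigl\|\Sigma^{*-1/2}_k(2h)^{-1/2}\Bigl[\textstyle\sum_{i\in(\tau-h,\tau]}\psi_{\theta^*_k}(Z_i)-\sum_{i\in(\tau,\tau+h]}\psi_{\theta^*_k}(Z_i)\Bigr]\Bigr\|_2 + r_\tau,
\end{align*}
with $\max_{\tau\in\mathcal{H}}|r_\tau|=o_P(\{\log(n/h)\}^{-1/2})$. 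Because the Gumbel normalization inflates magnitudes only by $a(n/h)=\sqrt{2\log(n/h)}$, these remainders contribute $o_P(1)$ after scaling and may be dropped.

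The leading term involves the iid, mean-zero, identity-covariance vectors $\xi_i:=\Sigma^{*-1/2}_k\psi_{\theta^*_k}(Z_i)$ within each segment $k$. Assumption \ref{asmp:Wald}(i)'s $(2+\nu)$-moment bound enables the multivariate strong invariance principle of \cite{Einmahl1987invariance}, which couples the segment-$k$ partial sums of $\xi_i$ to an independent standard $d$-dimensional Wiener process $W^{(k)}$ on $(\tau^*_{k-1},\tau^*_k]$; Assumption \ref{asmp:Wald}(iii) forces the coupling error, after division by $\sqrt{h}$, to be $o_P(\{\log(n/h)\}^{-1/2})$. The concatenation of the independent pieces $\{W^{(k)}\}_k$ is distributionally a single standard Wiener process $W$ on $(0,n]$, so after coupling $\max_{\tau\in\mathcal{H}}T_{\tau,\rm Wald}$ equals (up to $o_P(1)$) the continuous functional $\max_{\tau\in\mathcal{H}}(2h)^{-1/2}\|2W(\tau)-W(\tau-h)-W(\tau+h)\|_2$. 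Under $\Pr_0$ the same coupling on the single pooled sample yields a standard Wiener process $\widetilde W$ on $(0,n]$ with identical law, and the same functional is applied; hence $\max_{\tau\in\mathcal{H}}T_{\tau,\rm Wald}$ has the same asymptotic distribution under $\Pr$ and $\Pr_0$, establishing Assumption \ref{asmp:null}.

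The principal obstacle is securing the linearization and variance rates uniformly over $\Theta(n)$ windows at the demanding precision $o_P(\{\log(n/h)\}^{-1/2})$ and $o_P(\{\log(n/h)\}^{-1})$: any slower rate would be amplified beyond control by the $\sqrt{2\log(n/h)}$ Gumbel centering, so the sharp uniform rates encoded in Assumption \ref{asmp:Wald}(ii)--(iv) are indispensable. Once they are in hand, the coupling-plus-pivotality chain above closes the gap between $\Pr$ and $\Pr_0$, and Theorem \ref{thm:FWER} delivers the FWER bound.
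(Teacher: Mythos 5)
Your proposal is correct and follows essentially the same route as the paper: verify the threshold condition from the stated Gumbel limit under $\Pr_0$, then establish nullifiability by linearizing $T_{\tau,\rm Wald}$ into the pivotal normalized score sums $\Sigma^{*-1/2}_k\{\sum_{i\in(\tau-h,\tau]}\psi_{\theta^*_k}(Z_i)-\sum_{i\in(\tau,\tau+h]}\psi_{\theta^*_k}(Z_i)\}$ via Assumptions \ref{asmp:Wald}(ii) and (iv) and coupling these to a Wiener process with the strong invariance principle, so that the leading term has the same law under $\Pr$ and $\Pr_0$. Your explicit identity $\|\Gamma^{*-1/2}_kV^{*-1}_kv\|_2=\|\Sigma^{*-1/2}_kv\|_2$ is a correct and welcome clarification of a step the paper leaves implicit.
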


Theorem \ref{thm:Wald} underscores that our proposed Wald-type inference scheme achieves asymptotically valid FWER control within the context of Example \ref{ex:Wald}. This method relies on the consistent and change-robust estimation of the covariance matrices $\Gamma^*_k$, which can be challenging in practice, particularly in high-dimensional settings. In Sections \ref{sec:mean}--\ref{sec:score}, we explore alternative \textit{score}-based strategies and discuss the application of multiplier bootstrap techniques, which are particularly beneficial in high-dimensional changepoint models.

\subsubsection{High-Dimensional Mean Changepoint Models}\label{sec:mean}

\begin{example}[Mean changes in high-dimensional data]\label{ex:mean}
Consider $\mathcal{Z}=\mathbb{R}^d$ and $Z_i=\theta_i+\epsilon_i$ with iid noises $\epsilon_i$ from a distribution $P_\epsilon$ with mean zero, which extends Example \ref{ex:uni_mean} to multidimensional scenarios. In recent years, there has been a growing interest in detecting mean changes in high-dimensional settings where $d\to\infty$. Various methods for aggregating componentwise evidence of mean changes have been explored, including the use of the $\ell_2$-norm \citep{bai2010common}, $\ell_\infty$-norm \citep{jirak2015}, and adaptive strategies \citep{cho2015multiple,liu2020unified,zhang2022adaptive,wang+Feng-2023-p936}.
\end{example}

Consider the two-sample mean test statistic
\begin{align*}
T_{\tau,{\rm mean},g}=g\big(\sqrt{h/2}(\bar{Z}_{(\tau-h,\tau]}-\bar{Z}_{(\tau,\tau+h]})\big),
\end{align*}
where $g\ge 0$ acts as an aggregation function that combines componentwise mean differences, such as the $\ell_\infty$-norm. This statistic is inherently independent of the underlying mean $\theta^*_k$ for a true null $\tau$, ensuring that 
\[
\Pr\big(\max_{\tau\in\mathcal{H}}T_{\tau,{\rm mean},g}>t\big)
= \Pr_0\big(\max_{\tau\in\mathcal{H}}T_{\tau,{\rm mean},g}>t\big)
= \Pr\big(\max_{\tau\in\mathcal{H}}g\big(\sqrt{h/2}(\bar{\epsilon}_{(\tau-h,\tau]}-\bar{\epsilon}_{(\tau,\tau+h]})\big)>t\big),
\]
for any $t$, thus satisfying Assumption \ref{asmp:null}.

To set the threshold $t_{\rm u,\alpha}$ as defined in Eq. (\ref{thresh}), it is essential to approximate the distribution of $\max_{h\leq\tau\leq n-h}T_{\tau,{\rm mean},g}$ under $\Pr_0$, that is, the distribution of
\begin{align}\label{ts_null:mean_g}
\max_{h\leq\tau\leq n-h}g\big(\sqrt{h/2}(\bar{\epsilon}_{(\tau-h,\tau]}-\bar{\epsilon}_{(\tau,\tau+h]})\big).
\end{align}
We propose employing a \textit{multiplier bootstrap} approach assisted by \textit{difference}-based strategies to facilitate this. The bootstrap statistic is defined as:
\begin{align*}
T_{\tau,\rm boots} = g\Big(\sqrt{h/2}\big(h^{-1}\sum_{i=\tau-h+1}^\tau \e_i(Z_{i+1}-Z_i)/\sqrt{2}-h^{-1}\sum_{i=\tau+1}^{\tau+h}\e_i(Z_i-Z_{i-1})/\sqrt{2}\big)\Big),
\end{align*}
with iid $N(0,1)$ multipliers $\e_i$. The threshold is then identified as the conditional upper $\alpha$-quantile of $\max_{h\leq\tau\leq n-h}T_{\tau,\rm boots}$ given the data, i.e.,
\begin{align}\label{thresh_boots}
t_{\rm u,\alpha} = \inf\{t: \Pr(\max_{h\leq\tau\leq n-h}T_{\tau,\rm boots}>t\mid \{Z_i\}_{i=1}^n)\le\alpha\},
\end{align}
which can be efficiently simulated in practice.

\begin{remark}[On multiplier bootstraps]
Investigating the limiting distribution of (\ref{ts_null:mean_g}) is of independent interest in changepoint testing scenarios. The choice of $g$ may necessitate specific dependence structures among components of $\epsilon_i$. This distribution could involve nuisance parameters hinging on high-dimensional noise distribution $P_\epsilon$, which presents challenges for their consistent estimation in the presence of multiple changepoints \citep{MR4528482}.

We engage high-dimensional bootstrap techniques \citep{chernozhukov+2023+p427}, which are increasingly utilized in high-dimensional two-sample testing and changepoint testing problems \citep{Xue2020distribution,yu2021finite,liu2020unified}. However, these methods are not directly applicable here. In two-sample testing, the distribution of $g(\bar{\epsilon}_{(\tau-h,\tau]}-\bar{\epsilon}_{(\tau,\tau+h]})$ for a given $\tau$ can be approximated by centering each dataset around its sample mean, regardless of the potential heterogeneity between the datasets $Z_{(\tau-h,\tau]}$ and $Z_{(\tau,\tau+h]}$ \citep{Xue2020distribution}. Changepoint testing introduces additional complexity due to unknown boundaries created by multiple changepoints. In cases of a single changepoint, the objective is to approximate the distribution of $\max_{h\le\tau\le n-h}g(\bar{\epsilon}_{(0,\tau]}-\bar{\epsilon}_{(\tau,n]})$. \cite{yu2021finite} and \cite{liu2020unified} recommended centering the data by sample means across different potential changepoint locations, using $Z_i-\bar{Z}_I$ for $i\in I$ where $I=(0,\tau]$ or $(\tau,n]$. This method successfully mimics the target distribution under the null hypothesis of no change, but may fall short under the presence of a changepoint. In our context, the challenge is to approximate the distribution of (\ref{ts_null:mean_g}) irrespective of the presence of multiple changepoints.
\end{remark}

\begin{assumption}\label{asmp:boots}
(i) (Aggregation function) For any $t\ge 0$, the set $\{x:g(x)\le t\}$ is $s$-sparsely convex for some integer $s>0$ (cf. Definition \ref{def:s-sparsity} in Section \ref{apdx:sec:details}). 
(ii) (Noises) For all $i\in[n]$ and any $\bv\in\mathbb{S}^{d-1}$ with $\lVert\bv\rVert_0\leq s$, $\E\{(\bv^\top\epsilon_i)^2\}\geq b$ for a constant $b>0$. There exists a sequence of constants $B_n\geq 1$, possibly diverging, such that for all $i\in[n]$ and $j\in[d]$, $\E(\lvert\epsilon_{ij}\rvert^{2+k})\leq B_n^k$ for $k=1,2$, and $\E\{\exp(\lvert\epsilon_{ij}\rvert/B_n)\}\leq 2$.
\end{assumption}

Assumption \ref{asmp:boots}(i) requires that all sublevel sets of the aggregation function $g$ should be $s$-sparsely convex, which holds for $g(x)=\|x\|_\infty$ with $s=1$ and $g(x)=\|x\|_2$ with $s=d$. Assumption \ref{asmp:boots}(ii) imposes specific constraints on the noise characteristics: noises within the $s$-sparsely convex set have variances bounded away from zero, the third and fourth moments of noise components do not increase too rapidly, and these components exhibit sub-exponential tails. These conditions mirror those often encountered in the high-dimensional bootstrap literature \citep{chern2017central}. 

\begin{theorem}\label{thm:mean}
Suppose that Assumption \ref{asmp:boots} holds with a fixed $s$. The TUNE method, using the test statistic $T_{\tau,{\rm mean},g}$ and setting the threshold $t_{\rm u,\alpha}$ as in Eq. (\ref{thresh_boots}), satisfies
\[
{\rm FWER} \leq \alpha + (nhd)^{-c} + C\{(B_n^2+\Delta_\theta^2)\log^7(nhd)/h\}^{1/6},
\]
for some positive constants $c$ and $C$, where $\Delta_\theta=\max_{k\in[K^*]}\|\theta^*_{k+1}-\theta^*_k\|_\infty^2$.
\end{theorem}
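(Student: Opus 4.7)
The plan is to apply Theorem \ref{thm:FWER}, which reduces the task to two components: verifying the nullifiability condition (Assumption \ref{asmp:null}) and establishing that the data-driven threshold of Eq. (\ref{thresh_boots}) satisfies Eq. (\ref{thresh}) with an explicit rate. Nullifiability is essentially automatic: for any true null $\tau \in \mathcal{H}$, the window $(\tau-h,\tau+h]$ contains no true changepoint, so $\bar Z_{(\tau-h,\tau]} - \bar Z_{(\tau,\tau+h]} = \bar\epsilon_{(\tau-h,\tau]} - \bar\epsilon_{(\tau,\tau+h]}$ and $T_{\tau,{\rm mean},g}$ is a measurable function of the noise alone. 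Hence the joint law of $\{T_{\tau,{\rm mean},g}\}_{\tau \in \mathcal{H}}$ under $\Pr$ and under $\Pr_0$ coincide exactly, so Assumption \ref{asmp:null} holds with zero error and the only real work lies in the bootstrap approximation.

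The remaining task is to bound the Kolmogorov distance between the law under $\Pr_0$ of $M := \max_{h\le\tau\le n-h} g(\sqrt{h/2}(\bar\epsilon_{(\tau-h,\tau]} - \bar\epsilon_{(\tau,\tau+h]}))$ and the conditional law given the data of $M^\sharp := \max_{h\le\tau\le n-h} T_{\tau,{\rm boots}}$. I would bridge the two through a common Gaussian surrogate $M^\ast := \max_\tau g(\zeta_\tau)$, where $\zeta_\tau$ is centered Gaussian sharing the exact covariance of the noise-difference vector $\sqrt{h/2}(\bar\epsilon_{(\tau-h,\tau]} - \bar\epsilon_{(\tau,\tau+h]})$. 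Under Assumption \ref{asmp:boots} (sub-exponential tails, variance lower bound along $s$-sparsely-convex directions, and $s$-sparse-convexity of the sublevel sets of $g$), the high-dimensional CLT over $s$-sparsely-convex sets of \cite{chernozhukov+2023+p427} delivers $\sup_t |\Pr_0(M \le t) - \Pr(M^\ast \le t)| = O(\{B_n^2 \log^7(nhd)/h\}^{1/6})$. Since $M^\sharp$ is already conditionally the maximum over $\tau$ of a Gaussian functional, comparing it to $M^\ast$ reduces to the Gaussian-to-Gaussian comparison inequality, which bounds $\sup_t |\Pr(M^\sharp \le t \mid \{Z_i\}) - \Pr(M^\ast \le t)|$ by a power of $\|\widehat{\Sigma} - \Sigma^\ast\|_\infty$ up to polylogarithmic factors, where $\widehat{\Sigma}$ and $\Sigma^\ast$ denote the bootstrap and target covariance operators respectively.

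The main obstacle will be controlling $\|\widehat{\Sigma} - \Sigma^\ast\|_\infty$ in the presence of the $K^\ast$ true changepoints, which is exactly what the difference-based centering is designed to absorb. Writing $Z_{i+1}-Z_i = (\theta_{i+1}-\theta_i) + (\epsilon_{i+1}-\epsilon_i)$, only $O(K^\ast)$ of the $h$ consecutive differences composing $\widehat{\Sigma}$ straddle a true jump; each contributes a rank-one mean bias whose entries are bounded in modulus by $\|\theta^*_{k+1}-\theta^*_k\|_\infty^2 \le \Delta_\theta$, so after averaging this bias is at most $O(K^\ast \Delta_\theta/h)$ in $\|\cdot\|_\infty$. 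The remaining noise-only outer products concentrate around their population mean at rate $O(B_n^2\sqrt{\log(nhd)/h})$ by a sub-exponential Bernstein argument combined with a union bound over the $O(nd^2)$ coordinates, which furnishes the residual $(nhd)^{-c}$ tail. The delicate accounting step is confirming that difference-based centering isolates the contamination to this $\Delta_\theta$-scaled bias, rather than to a contribution growing with the raw magnitudes $\|\theta^*_k\|_\infty$ themselves; propagating the resulting covariance estimate through the Gaussian comparison, combining it with the CLT error above, and invoking Theorem \ref{thm:FWER} then yields the advertised FWER bound.
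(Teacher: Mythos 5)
Your proposal is correct and follows essentially the same route as the paper: the paper derives Theorem~\ref{thm:mean} as the special case $\S_i=Z_i$ (hence $\Delta_\Sigma=0$) of Theorem~\ref{thm:score}, whose proof is exactly your three-step bridge --- a high-dimensional CLT over $s$-sparsely convex sets applied to the stacked vector of local statistics, a Gaussian comparison controlled by the sup-norm discrepancy $\eta_n$ between the bootstrap and target covariances, and a decomposition of the difference-based covariance estimator isolating the jump contamination into an $O(\Delta_\theta/h)$ bias plus $(B_n^2+\Delta_\theta^2)$-scaled cross and concentration terms. The only cosmetic difference is that you verify exact (rather than asymptotic) nullifiability directly for the mean statistic, which the paper also notes in Section~\ref{sec:mean}.
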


This theorem confirms that our TUNE method achieves asymptotically valid control of the FWER, independent of the detection algorithm used. It is proved by examining
\[
\sup_{t\geq 0}\Big\vert\Pr_0\big(\max_{h\leq\tau\leq n-h}T_{\tau,{\rm mean},g}\le t\big)-\Pr\big(\max_{h\leq\tau\leq n-h}T_{\tau,\rm boots}\le t\mid \{Z_i\}_{i=1}^n\big)\Big\vert,
\]
which characterizes the approximation accuracy of the multiplier bootstrap strategy to the target distribution, regardless of the presence of multiple changepoints. The proof  hinges on the fact that all sublevel sets of the composition function $\max\circ g$ remain sparsely convex, and thus it suffices to analyze a weighted sum of vectorized high-dimensional random vectors hitting these sets. Therefore, high-dimensional central limit theorems can be applied \citep{chern2017central}. It is then essential to demonstrate that difference-based strategies provide consistent covariance matrix estimation over these sparsely convex sets. In particular, Theorem \ref{thm:mean} is applicable for a fixed dimension $d$ with $g(x)=\|x\|_2$, offering a robust alternative to the Wald-type diagnostics discussed in Section \ref{sec:Wald}.

\subsubsection{Score-Induced Parametric Changepoint Models}\label{sec:score}

\begin{example}[Changes in parameters induced by scores]\label{ex:score}
Consider scenarios where \textit{scores}, $\S_i\equiv\S(Z_i)$, reflect changes in the parameters $\theta_i$ through their expectations. Specifically, if $\theta^*_k\ne\theta^*_{k+1}$, then $\mu^*_k\ne\mu^*_{k+1}$, where $\mu^*_k=\E(\S_i)$ for $i\in(\tau^*_{k-1},\tau^*_k]$. Denote $\Sigma^*_k=\Var(\S_i)$ for $i\in(\tau^*_{k-1},\tau^*_k]$, which may remain constant across segments ($\Sigma^*_k=\Sigma^*_{k+1}$) or vary ($\Sigma^*_k\ne\Sigma^*_{k+1}$). This framework includes Example \ref{ex:mean} by selecting $\S_i=Z_i$, where $\Sigma^*_k=\Sigma^*_{k+1}$. Alternatively, consider $Z_i=(y_i,X_i)\in\mathcal{Z}=\mathbb{R}\times\mathbb{R}^d$, where $y_i=X_i^\top\theta^*_k+\epsilon_i$, with covariates $X_i$ being iid with mean $0$ and covariance matrix $\Sigma_X$, and noises $\epsilon_i$ being iid with mean $0$. Choosing $\S_i=y_iX_i$ leads to $\mu^*_k-\mu^*_{k+1}=\Sigma_X(\theta^*_k-\theta^*_{k+1})$, where $\Sigma^*_k\ne\Sigma^*_{k+1}$ since $\Sigma^*_k$ depends on $\theta^*_k$.
\end{example}

Given the motivation from $T_{\tau,{\rm mean},g}$, consider the test statistic
\[
T_{\tau,\rm score}=g\big(\sqrt{h/2}(\bar{\S}_{(\tau-h,\tau]}-\bar{\S}_{(\tau,\tau+h]})\big),
\]
which, however, may not satisfy Assumption \ref{asmp:null} due to potential variations in $\Sigma^*_k$. A normalization similar to $T_{\tau,\rm Wald}$ might address this issue by providing asymptotic pivotalness, as demonstrated in Theorem \ref{thm:Wald}(i). However, this approach faces challenges in high-dimensional settings. {To circumvent this problem, we propose modifying $\Pr_0$ in Assumption \ref{asmp:null} to $\Pr^*_0$, which focuses solely on nullifying the means of the scores, i.e., $\mu^*_1=\cdots=\mu^*_{K^*+1}$. Consequently, it holds that $\Pr\big(\max_{\tau\in\mathcal{H}}T_{\tau,\rm score}>t\big) = \Pr^*_0\big(\max_{\tau\in\mathcal{H}}T_{\tau,\rm score}>t\big)$ for all $t$. By employing arguments from the proof of Theorem \ref{thm:FWER}, we establish the threshold $t_{\rm u,\alpha}$ such that
\[
\Pr^*_0\Big(\max_{h\leq\tau\leq n-h}g\big(\sqrt{h/2}(\bar{\S}_{(\tau-h,\tau]}-\bar{\S}_{(\tau,\tau+h]})\big)>t_{\rm u,\alpha}\Big)\to\alpha.
\]
Theorem \ref{thm:score} demonstrates that by applying the multiplier bootstrap method proposed in Section \ref{sec:mean} with $Z_i$ replaced by $S_i$, we can control the FWER asymptotically.}

\begin{theorem}\label{thm:score}
Suppose that Assumption \ref{asmp:boots} holds with $\epsilon_i=\S_i-\E(\S_i)$ and a fixed $s$. The TUNE method, using the test statistic $T_{\tau,\rm score}$ and setting the threshold $t_{\rm{u},\alpha}$ as in Eq. (\ref{thresh_boots}) but with $Z_i$ in $T_{\tau,\rm boots}$ replaced by $S_i$, satisfies
\[
{\rm FWER} \leq \alpha + (nhd)^{-c} +C\big(\{(B_n^2+\Delta_\mu^2)\log^7(nhd)/h\}^{1/6}+\{\Delta_\Sigma\log^2(nhd)/h\}^{1/3}\big),
\]
for some positive constants $c$ and $C$, where $\Delta_\mu=\max_{k\in[K^*]}\|\mu^*_{k+1}-\mu^*_k\|_\infty^2$ and $\Delta_\Sigma=\max_{k\in[K^*]}\|\Sigma^*_{k+1}-\Sigma^*_k\|_\infty$.
\end{theorem}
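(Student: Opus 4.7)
The plan is to follow the FWER-decomposition skeleton of Theorem~\ref{thm:FWER} with $\Pr_0$ replaced by $\Pr^*_0$, and then establish a bootstrap distributional approximation in the spirit of Theorem~\ref{thm:mean} that can accommodate within-window heterogeneity of the score covariances $\Sigma_i$.

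First, I would reduce the FWER to a distributional comparison. For any true null $\tau\in\mathcal{H}$ the window $(\tau-h,\tau+h]$ contains no changepoint, so both $\mu_i$ and $\Sigma_i$ are constant there and $T_{\tau,\rm score}$ is a function of $\{S_i-\mu_i\}_{i\in(\tau-h,\tau+h]}$ alone; the joint law of $\{S_i-\mu_i\}_{i=1}^n$ is identical under $\Pr$ and $\Pr^*_0$, which gives
\[
\Pr\big(\max_{\tau\in\mathcal{H}}T_{\tau,\rm score}>t\big)=\Pr^*_0\big(\max_{\tau\in\mathcal{H}}T_{\tau,\rm score}>t\big)
\]
for every $t$. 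Following the scaling step of Theorem~\ref{thm:FWER} together with the bootstrap calibration (\ref{thresh_boots}) with $Z_i$ replaced by $S_i$, it then suffices to bound
\[
{\rm FWER}\le\Pr^*_0\big(\max_{h\le\tau\le n-h}T_{\tau,\rm score}>t_{\rm u,\alpha}\big)\le\alpha+\Xi_n,
\]
where $\Xi_n$ collects the Kolmogorov gap between the $\Pr^*_0$ law of $\max_\tau T_{\tau,\rm score}$ and the bootstrap conditional law of $\max_\tau T_{\tau,\rm boots}$, plus an anti-concentration penalty.

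Second, I would bound $\Xi_n$ by a two-step Gaussian--bootstrap comparison along the lines of the proof of Theorem~\ref{thm:mean}. The joint vector $(T_{\tau,\rm score})_{\tau=h,\ldots,n-h}$ is $(\max\circ g)$ applied to a concatenated sum of independent centered score blocks, and the sublevel sets of $\max\circ g$ remain $s$-sparsely convex whenever those of $g$ are. This makes the high-dimensional sparsely-convex CLT of \cite{chern2017central} applicable both to the target law under $\Pr^*_0$ and to the bootstrap conditional law, which is a Gaussian with covariance built from the differences $(S_{i\pm1}-S_i)/\sqrt{2}$. The $\{(B_n^2+\Delta_\mu^2)\log^7(nhd)/h\}^{1/6}$ term arises from the Lindeberg/moment conditions of this CLT, in which the $\Delta_\mu^2$ piece absorbs the finitely many boundary indices where the differenced centering straddles a mean jump.

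Third, I would control the extra mismatch produced by the heterogeneity of $\Sigma_i$. The target block covariance is $h^{-1}\sum_{i\in W}\Sigma_i$, while the bootstrap covariance concentrates around $h^{-1}\sum_{i\in W}\E[(S_{i\pm 1}-S_i)(S_{i\pm 1}-S_i)^\top]/2$. Within a single segment these agree; at the $O(K^*)=O(1)$ indices straddling a true changepoint the difference picks up an extra term bounded in $\ell_\infty$ by $\Delta_\Sigma+\Delta_\mu^2$, so the overall covariance mismatch is $O((\Delta_\Sigma+\Delta_\mu^2)/h)$ in $\ell_\infty$ norm. Feeding this into the sparsely-convex Gaussian comparison lemma of \cite{chern2017central}, which converts an $\ell_\infty$ covariance gap $\delta$ into a Kolmogorov gap of order $(\delta\log^2(nhd))^{1/3}$, yields the $\{\Delta_\Sigma\log^2(nhd)/h\}^{1/3}$ term. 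Concentration of the sample differenced covariance around its expectation---obtained from the sub-exponential tail bounds on the components of $S_i-\E(S_i)$ granted by Assumption~\ref{asmp:boots}(ii)---contributes the residual $(nhd)^{-c}$.

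The main obstacle is the third step: separating the two sources of bias in the differenced bootstrap covariance (mean jumps of order $\Delta_\mu^2$ and variance jumps of order $\Delta_\Sigma$) and propagating them through the comparison at distinct rates. In Theorem~\ref{thm:mean} only mean jumps contribute and $\Sigma^*_k$ is constant, so a single Lindeberg-route bound with $1/6$ power suffices; retaining the sharper $1/3$ power for the $\Delta_\Sigma$ contribution forces me to route the bias through the Gaussian comparison lemma at the covariance level rather than the moment level, and to verify that the resulting $\ell_\infty$ covariance bound remains sharp enough that the $1/h$ decay survives the cube root, uniformly over the sparsely convex family even when $s=d$ in the $\ell_2$-aggregation case.
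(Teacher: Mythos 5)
Your proposal is correct and follows essentially the same route as the paper: nullifiability of $T_{\tau,\rm score}$ under $\Pr^*_0$, the scaling argument of Theorem~\ref{thm:FWER}, a sparsely-convex Gaussian CLT for the concatenated score sums, and a Gaussian comparison step driven by an $\ell_\infty$ bound on the mismatch between the target and bootstrap covariances, with the mean-jump and covariance-jump biases entering at the $1/6$ and $1/3$ powers respectively. The only minor discrepancy is bookkeeping: in the paper the $\Delta_\mu^2$ contribution enters through the concentration bound on the differenced bootstrap covariance (the cross term between mean jumps and noise) and is then propagated through the Gaussian comparison lemma, rather than through the Lindeberg conditions of the direct CLT, whose error term carries only $B_n^2$.
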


\subsubsection{Nonparametric Changepoint Models}\label{sec:nonpara}

\begin{example}[Changes in distributions]
Consider detecting distributional changes where $\theta_i=P_i$. Over recent years, the development of nonparametric changepoint detection methods has significantly expanded, such as those based on ranks \citep{MR3210993,lung2015homogeneity}, graphs \citep{chen2023graph}, energy distances \citep{MR3180567}, kernels \citep{MR4048973}, and random forests \citep{londschien2023random}, offering robust alternatives to traditional parametric methods.
\end{example}

In univariate scenarios where $\mathcal{Z}=\mathbb{R}$, we employ the two-sample Wilcoxon/Mann-Whitney test statistic:
\begin{align*}
T_{\tau,\text{WMW}}=\sum_{i=\tau+1}^{\tau+h}R_{\tau,i},
\end{align*}
where $R_{\tau,i}=\sum_{j=\tau-h+1}^{\tau+h}\bbI_{\{Z_j\le Z_i\}}$ is the rank of $Z_i$ among $Z_{(\tau-h,\tau+h]}$. The key observation is that the distribution of $\max_{\tau\in(\tau^*_{k-1}+h,\tau^*_k-h]}T_{\tau,\text{WMW}}$ relies only on the ranks of $Z_i$ within $(\tau^*_{k-1},\tau^*_k]$, free of the specific distribution $P^*_k$. This distribution-freeness property ensures that $\Pr\big(\max_{\tau\in\mathcal{H}}T_\tau>t\big) = \Pr_0\big(\max_{\tau\in\mathcal{H}}T_\tau>t\big)$ for any $t$, thereby satisfying Assumption \ref{asmp:null}. To establish the universal threshold $t_{\rm u,\alpha}$, it is noted that $\max_{h\leq \tau\leq n-h}T_{\tau,\rm WMW}$ is also distribution-free under $\Pr_0$ due to the rank-based nature. By simulating analogues---say $T_{\tau,\rm simul}$---of $T_{\tau,\rm WMW}$ with $\{Z_i:i\in[n]\}$ replaced by iid $N(0,1)$ random variables, the threshold can be set as
\begin{align}\label{thresh_simul}
t_{\rm u,\alpha} = \inf\{t:\Pr(\max_{h\le\tau\le n-h}T_{\tau,\rm simul}>t)\le\alpha\},
\end{align}
achievable through simulations.

\begin{theorem}\label{thm:rank}
If $Z_i\in\mathbb{R}$ has a continuous distribution, then the TUNE method, using the test statistic $T_{\tau,\rm{WMW}}$ and setting the threshold $t_{\rm u,\alpha}$ as in Eq. (\ref{thresh_simul}), satisfies $\text{FWER}\le\alpha$.
\end{theorem}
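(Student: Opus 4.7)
The plan is to verify that both hypotheses of Theorem \ref{thm:FWER} hold \emph{exactly} (with no $o(1)$ slack) for the Wilcoxon/Mann-Whitney statistic under the stated continuity assumption, and then apply the four-step chain in the proof of Theorem \ref{thm:FWER} verbatim.

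First, I would establish Assumption \ref{asmp:null} in the strong form $\Pr(\max_{\tau\in\mathcal{H}}T_{\tau,\rm WMW}>t)=\Pr_0(\max_{\tau\in\mathcal{H}}T_{\tau,\rm WMW}>t)$ for every $t$. The key observation is that $T_{\tau,\rm WMW}$ depends on $Z_{(\tau-h,\tau+h]}$ only through their mutual ranks within that window. By the definition of $\mathcal{H}$, the window $(\tau-h,\tau+h]$ lies inside a single segment $(\tau^*_{k-1},\tau^*_k]$, so its observations are iid from the continuous law $P^*_k$ and their ranks are uniform over $S_{2h}$. To lift this marginal pivotalness to a joint statement, I would partition $\mathcal{H}$ into maximal clusters under the relation ``the windows overlap''. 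Within a cluster $\tau_{i_1}<\cdots<\tau_{i_m}$, the union $(\tau_{i_1}-h,\tau_{i_m}+h)$ is itself an interval free of $\{\tau^*_k\}_{k=1}^{K^*}$---the overlap of two changepoint-free open intervals is again a changepoint-free interval, extended transitively---so all observations used by the cluster are iid from a single continuous distribution and the joint rank structure within the cluster is pivotal. Across different clusters the underlying observations are disjoint and hence independent. Exactly the same factorization holds under $\Pr_0$, so the joint laws agree.

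Second, I would verify Eq.~(\ref{thresh}) exactly. Under $\Pr_0$ all $Z_i$ are iid continuous, so applying the same argument with $\mathcal{H}$ replaced by $\{h,\ldots,n-h\}$ shows that the law of $\max_{h\le\tau\le n-h}T_{\tau,\rm WMW}$ is distribution-free and coincides with the law of $\max_{h\le\tau\le n-h}T_{\tau,\rm simul}$ driven by iid $N(0,1)$ surrogates. The definition of $t_{\rm u,\alpha}$ in Eq.~(\ref{thresh_simul}) as an upper-$\alpha$ quantile of this (discrete) distribution therefore yields $\Pr_0(\max_{h\le\tau\le n-h}T_{\tau,\rm WMW}>t_{\rm u,\alpha})\le\alpha$ with no approximation error. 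Substituting these two exact identities into steps (ii) and (iv) of the proof of Theorem \ref{thm:FWER} gives $\text{FWER}\le\alpha$ sharply.

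The main subtlety, and the step I would write out most carefully, is the clustering argument for Assumption \ref{asmp:null}: marginal distribution-freeness of each $T_{\tau,\rm WMW}$ is standard, but $\max_{\tau\in\mathcal{H}}T_{\tau,\rm WMW}$ is a \emph{joint} functional of a family whose windows may overlap while their centers sit near different changepoints, so a priori the cluster's observations could come from several distinct $P^*_k$. The fact that overlap between two elements of $\mathcal{H}$ forces co-membership in a changepoint-free super-interval---and therefore in a single segment---is what promotes pointwise pivotalness to a genuine joint one. Everything else is bookkeeping and a direct invocation of Theorem \ref{thm:FWER}.
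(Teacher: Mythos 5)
Your proof is correct and follows essentially the same route as the paper's: exact distribution-freeness of the within-window ranks at every true null, independence across disjoint changepoint-free blocks, and an exact (no $o(1)$) run through the four-step chain of Theorem \ref{thm:FWER}. The only cosmetic difference is that the paper groups the null locations directly by segment, $\mathcal{H}_k=(\tau^*_{k-1}+h,\tau^*_k-h]$, so that every window centered in $\mathcal{H}_k$ automatically lies inside the $k$-th segment and your maximal-overlap clustering step is not needed; the two decompositions are equivalent.
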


In multivariate settings where $\mathcal{Z}=\mathbb{R}^d$, adapting the TUNE method is interesting yet challenging. One effective approach is to use a test statistic based on componentwise ranks \citep{lung2015homogeneity}:
\[
T_{\tau,{\rm compr}} = \|\sqrt{2}h^{-3/2}\widehat{\Sigma}_{\tau}^{-1/2}\sum_{i=\tau+1}^{\tau+h}\{R_{\tau,i}-(2h+1)/2\}\|_2,
\]
where $\widehat{\Sigma}_\tau = 2h^{-1}\sum_{i=\tau-h+1}^{\tau+h}\{\widehat{F}_\tau(Z_i)-1/2\}\{\widehat{F}_\tau(Z_i)-1/2\}^\top$ and $\widehat{F}_\tau(t)=(2h)^{-1}\sum_{j=\tau-h+1}^{\tau+h}\bbI_{\{Z_j\leq t\}}$.

\begin{theorem}\label{thm:compr}
Let $Z_i\in\mathbb{R}^d$ have a continuous distribution function $F_i$ with $\Var\{F_i(Z_i)\}$ positive definite and possessing a finite largest eigenvalue. Assume that $n/h\to\infty$ and $\{n^{2/(2+\nu)}\log^3n\}/h\to 0$ for some $\nu>0$. The TUNE method, using the statistic $T_{\tau,{\rm compr}}$ and setting the threshold $t_{\rm u,\alpha}=\inf\{t:\Pr(\sup_{1\leq s<\infty}2^{-1/2}\|\{W(s)-W(s-1)\}-\{W(s+1)-W(s)\}\|_2>t)\le\alpha\}$, satisfies $\text{FWER}\leq \alpha+o(1)$, where $W$ denotes a $d$-dimensional standard Wiener process.
\end{theorem}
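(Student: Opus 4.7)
The plan is to invoke Theorem \ref{thm:FWER} applied to the test statistic $T_{\tau,\rm compr}$. This requires two ingredients: (i) the nullifiability of $T_{\tau,\rm compr}$ in the sense of Assumption \ref{asmp:null}, and (ii) verification that the proposed $t_{\rm u,\alpha}$ satisfies the threshold condition (\ref{thresh}) under $\Pr_0$. Together these give $\text{FWER}\le\alpha+o(1)$.

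For step (i), the key observation is that for any $\tau\in\mathcal{H}$ the whole window $(\tau-h,\tau+h]$ is contained in a single true segment, say segment $k$, so the $Z_i$'s inside are iid from $P^*_k$. The componentwise ranks $R_{\tau,i}$ and the empirical CDF $\widehat F_\tau$ are invariant under strictly monotone coordinatewise transformations, so a marginal probability integral transform reduces the analysis to uniform marginals and isolates the copula of $P^*_k$ as the only feature of $P^*_k$ on which $T_{\tau,\rm compr}$ depends. That copula dependence is absorbed asymptotically by the studentization $\widehat\Sigma_\tau^{-1/2}$, because $\widehat\Sigma_\tau$ consistently estimates $4\Var\{F^*_k(Z)\}$ within the segment (the factor $4$ coming from the $2h^{-1}$ weighting of a $2h$-term sum). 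Since $\mathcal{H}$ keeps every $\tau$ at least $h$ away from any true changepoint, windows associated with $\tau$'s in different segments are disjoint, and the corresponding pieces of $T_{\tau,\rm compr}$ are therefore independent. Consequently the asymptotic joint distribution of $\{T_{\tau,\rm compr}:\tau\in\mathcal{H}\}$ coincides under $\Pr$ and $\Pr_0$, which is exactly Assumption \ref{asmp:null}.

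For step (ii), work under $\Pr_0$ with all $Z_i$ iid from a continuous $F$ and write $\Sigma=\Var\{F(Z)\}$. The rank sum inside $T_{\tau,\rm compr}$ is a two-sample Mann--Whitney $U$-statistic, whose Hoeffding/H\'ajek projection yields
\[
\sum_{i=\tau+1}^{\tau+h}\{R_{\tau,i}-(2h+1)/2\}
= h\sum_{i=\tau+1}^{\tau+h}\{F(Z_i)-1/2\} - h\sum_{i=\tau-h+1}^{\tau}\{F(Z_i)-1/2\} + r_\tau,
\]
with remainder $r_\tau=o_P(h)$ uniformly in $\tau$. The moment condition implicit in the scaling $\{n^{2/(2+\nu)}\log^3 n\}/h\to 0$ is what activates Einmahl's multivariate strong invariance principle \citep{Einmahl1987invariance}, coupling the partial-sum process $t\mapsto\sum_{i=1}^t\{F(Z_i)-1/2\}$ to a $d$-dimensional Wiener process $W^\Sigma$ with covariance $\Sigma$, uniformly on $[0,n]$. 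Combined with the uniform consistency $\widehat\Sigma_\tau\to 4\Sigma$ (a consequence of uniform Glivenko--Cantelli for $\widehat F_\tau-F$), the identity $(4\Sigma)^{-1/2}W^\Sigma=\tfrac{1}{2}W$ for the standard Wiener process $W=\Sigma^{-1/2}W^\Sigma$, and the Brownian rescaling $h^{-1/2}W(h\cdot)\stackrel{d}{=}W(\cdot)$, the change of variable $\tau=hs$ converts
\[
\max_{h\le\tau\le n-h}T_{\tau,\rm compr}\overset{d}{\to}\sup_{1\le s<\infty}2^{-1/2}\|\{W(s)-W(s-1)\}-\{W(s+1)-W(s)\}\|_2.
\]
By definition of $t_{\rm u,\alpha}$, this gives $\Pr_0(\max_\tau T_{\tau,\rm compr}>t_{\rm u,\alpha})=\alpha+o(1)$.

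The principal technical obstacle is uniformity in $\tau$. Both the H\'ajek remainder $r_\tau$ and the studentization error $\widehat\Sigma_\tau-4\Sigma$ must be controlled at a rate faster than $(\log(n/h))^{-1/2}$ in order to survive the maximum over $\Theta(n/h)$ values of $\tau$; this is precisely the precision that Einmahl's coupling delivers under the stated moment assumption, and the window-size condition $\{n^{2/(2+\nu)}\log^3 n\}/h\to 0$ is the scaling that makes this bookkeeping go through. By contrast, the nullifiability step is comparatively light: once the window is restricted to a single segment the pivotal limit follows from a studentized CLT segment by segment, and independence across segments is exact rather than asymptotic.
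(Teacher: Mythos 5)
Your proposal follows essentially the same route as the paper's proof: both reduce the Mann--Whitney rank sum to its Hoeffding/H\'ajek projection, couple the projected partial sums to a $d$-dimensional Wiener process via Einmahl's strong invariance principle, control the studentization error $\widehat{\Sigma}_\tau-\Sigma_\tau$ uniformly at rate $o_P(\{\log(n/h)\}^{-1/2})$, and verify Assumption \ref{asmp:null} by showing that $\max_{\tau\in\mathcal{H}}T_{\tau,\rm compr}$ admits the same pivotal Wiener-process representation under $\Pr$ and $\Pr_0$. The only minor divergence is that the paper controls the degenerate U-statistic remainder with a Bernstein-type exponential inequality for degenerate U-statistics (Arcones--Gin\'e) rather than crediting that uniformity to the coupling itself, but this does not alter the structure of the argument.
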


The proof of Theorem \ref{thm:compr} relies on uniform linear expansions of U-statistics, and application of invariance principles for these linear terms (cf. Theorem \ref{thm:Wald}). Under $\Pr_0$, $\max_{h \leq \tau \leq n-h} T_{\tau,\rm compr}$ is asymptotically identically distributed as $\sup_{1 \leq t < \infty} 2^{-1/2}\|\{W(t)-W(t-1)\}-\{W(t+1)-W(t)\}\|_2$, and thus is asymptotically distribution-free.

Alternatively, one may consider distance- or kernel-based test statistics, such as the energy distance-based statistic \citep{MR3180567}
\[
T_{\tau,\text{dist}}
=\frac{2}{h^2}\sum_{i=\tau-h+1}^\tau\sum_{j=\tau+1}^{\tau+h}d(Z_i,Z_j)
-\frac{1}{h(h-1)}\sum_{\tau-h<i\ne j\le\tau}d(Z_i,Z_j)
-\frac{1}{h(h-1)}\sum_{\tau<i\ne j\le\tau+h}d(Z_i,Z_j),
\]
where $d(Z_i,Z_j)=\|Z_i-Z_j\|_2$. The asymptotic distribution of $\max_{h\le\tau\le n-h}T_{\tau,\text{dist}}$ under $\Pr_0$ typically hinges on the unknown common underlying distribution, which complicates the setting of the universal threshold $t_{\rm u,\alpha}$. To address this issue, the permutation method proposed by \cite{MR3180567} may be used. This method involves recomputing $T_{\tau,\rm dist}$ with $\{Z_i\}$ replaced by $\{Z_{\pi_i}\}$ for a permutation $\pi=(\pi_1,\ldots,\pi_n)$ of the indices $[n]$, denoted as $T_{\tau,\rm perm}$. The threshold is then set as $t_{\rm u,\alpha}=\inf\{t:\Pr(\max_{h\le\tau\le n-h}T_{\tau,\rm perm}>t)\le\alpha\}$, which can be reliably estimated through simulation. Our simulation studies suggest that distance-based methods demonstrate robust performance in multivariate scenarios. However, a crucial aspect, particularly the verification of Assumption \ref{asmp:null}, remains a challenge, which necessitates further research.

\subsection{Other Strategies}

\subsubsection{Maximum of CUSUM Statistics}\label{sec:max_CUSUM}

Employing the maximum over a sequence of two-sample (or cumulative sum (CUSUM)) statistics offers an alternative to the single two-sample test statistic $T_\tau$, naturally connecting to changepoint testing literature. This approach involves comparing samples $Z_{(\tau-h,\ell]}$ and $Z_{(\ell,\tau+h]}$ across $\ell\in(\tau-h,\tau+h)$. We will illustrate this idea via Example \ref{ex:uni_mean} with $\sigma=1$.

Consider the maximum statistic
\begin{align*}
M_{\tau,\rm mean}=\max_{\tau-h+\lambda<\ell<\tau+h-\lambda}\sqrt{\frac{\{\ell-(\tau-h)\}(\tau+h-\ell)}{2h}}\left\vert\bar{Z}_{(\tau-h,\ell]}-\bar{Z}_{(\ell,\tau+h]}\right\vert,
\end{align*}
where $\lambda\in(0,h)$ is a boundary parameter. This statistic inherently satisfies Assumption \ref{asmp:null} as it cancels out the underlying mean parameter for a true null $\tau$. Following procedures in Sections \ref{sec:mean_uni_normal}--\ref{sec:Wald}, the threshold $t_{\rm u,\alpha}$ can be set as the upper-$\alpha$ quantile of the distribution of
\[
M_\epsilon=\max_{\tau=h,\ldots,n-h}\max_{\tau-h+\lambda<\ell<\tau+h-\lambda}\sqrt{\frac{\{\ell-(\tau-h)\}(\tau+h-\ell)}{2h}}\left\vert\bar{\epsilon}_{(\tau-h,\ell]}-\bar{\epsilon}_{(\ell,\tau+h]}\right\vert,
\]
which can be simulated due to the nature of pivotalness under proper normalization.

Utilizing $M_{\tau,\rm mean}$ can potentially enhance the power for post-detection inference compared to a single two-sample statistic by better capturing changes in the mean within a given neighborhood. Consider a class of alternatives where a single true changepoint exists within some neighborhood $(\tau-h,\tau+h]$ of a given $\tau\in[n-1]$:
\begin{align*}
\mathcal{H}_{1,\tau}(m_n)=\{H_{1,\tau}&:\theta_{\tau-h+1}=\cdots=\theta_{\tau^*_k}\ne\theta_{\tau^*_k+1}=\cdots=\theta_{\tau+h}\ \text{for some}\ k\in[K^*],\\
&\text{with}\ \{\tau^*_k-(\tau-h)\}\wedge\{(\tau+h)-\tau^*_k\}\asymp m_n\},
\end{align*}
where $m_n$ specifies the minimal sample size of the two heterogeneous segments $Z_{(\tau-h,\tau^*_k]}$ and $Z_{(\tau^*_k,\tau+h]}$. Define the separation rate
\[
\rho_n=\sqrt{\frac{m_n(2h-m_n)}{2h}}\left\vert\theta^*_{k-1}-\theta^*_k\right\vert.
\]
An ideal test against $H_{1,\tau}\in\mathcal{H}_{1,\tau}(m_n)$ would be a two-sample z-test comparing samples $Z_{(\tau-h,\tau^*_k]}$ and $Z_{(\tau^*_k,\tau+h]}$, as if $\tau^*_k$ were known. Classical large-sample theory suggests that this test is consistent when $\rho_n\to\infty$. In contrast, a two-sample z-test comparing samples $Z_{(\tau-h,\tau]}$ and $Z_{(\tau,\tau+h]}$ requires $m_n/\sqrt{2h}\left\vert\theta^*_{k-1}-\theta^*_k\right\vert\to\infty$, equivalently, $\sqrt{m_n/(2h-m_n)}\rho_n\to\infty$, to achieve consistency. If the samples are unbalanced (i.e., $m_n/(2h-m_n)\to 0$), a substantially larger signal is required for consistency compared to the ideal scenario. Proposition \ref{prop:power} demonstrates that our TUNE procedure, utilizing the maximum statistic $M_{\tau,\rm mean}$ and threshold $t_{\rm u,\alpha}$, achieves consistency if $\rho_n/\sqrt{\log(n/h)}\to\infty$, matching the ideal two-sample rate up to a logarithm factor. This approach effectively re-detects the maximal departure in mean change within the neighborhood $(\tau-h,\tau+h]$, especially beneficial when the originally detected changepoint is offset from the true changepoint. Moreover, this rate aligns with the minimax detection rate in classical changepoint testing contexts, adjusted for a logarithm factor \citep{csorgo1997limit}.

\begin{assumption}\label{asmp:power}
(i) (Moments) There exists a constant $\nu>0$ such that $\E(|\epsilon|^{2+\nu})<\infty$. (ii) (Window size) $n/h\to\infty$ and $\{n^{2/(2+\delta)}\log n\}/h\to 0$ for $0<\delta<\nu$. (iii) (Boundary) $\lambda\asymp h$.
\end{assumption}

\begin{proposition}\label{prop:power}
For any $\tau$ such that $H_{1,\tau}\in\mathcal{H}_{1,\tau}(m_n)$ for some $m_n>0$, $\Pr(M_{\tau,\rm mean}>t_{\rm u,\alpha})\to 1$ provided that $\rho_n/\sqrt{\log(n/h)}\to\infty$.
\end{proposition}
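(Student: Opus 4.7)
The plan is to lower-bound $M_{\tau,\mathrm{mean}}$ by the value of its inner CUSUM at a carefully chosen split point $\ell^*$, and then compare this bound with the universal threshold $t_{\mathrm{u},\alpha}$, which by MOSUM-type extreme-value asymptotics is of order $\sqrt{\log(n/h)}$. Under $H_{1,\tau}\in\mathcal{H}_{1,\tau}(m_n)$ there is a unique true changepoint $\tau^*_k\in(\tau-h,\tau+h]$ with $\min\{\tau^*_k-(\tau-h),(\tau+h)-\tau^*_k\}\asymp m_n$; I would take $\ell^*=\tau^*_k$ whenever it lies in the feasible range $(\tau-h+\lambda,\tau+h-\lambda)$, and otherwise take $\ell^*$ to be the feasible endpoint closest to $\tau^*_k$.

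Writing $a=\ell^*-(\tau-h)$ and $b=(\tau+h)-\ell^*$, I would decompose the inner CUSUM at $\ell^*$ as $\mu_{\tau,\ell^*}+\eta_{\tau,\ell^*}$, where $\mu_{\tau,\ell^*}=\sqrt{ab/(2h)}(\bar\theta_{(\tau-h,\ell^*]}-\bar\theta_{(\ell^*,\tau+h]})$ is deterministic and $\eta_{\tau,\ell^*}=\sqrt{ab/(2h)}(\bar\epsilon_{(\tau-h,\ell^*]}-\bar\epsilon_{(\ell^*,\tau+h]})$ is the noise fluctuation. Since $\tau^*_k$ is the unique changepoint in $(\tau-h,\tau+h]$, plugging $\ell^*=\tau^*_k$ gives $\bar\theta_{(\tau-h,\ell^*]}=\theta^*_k$ and $\bar\theta_{(\ell^*,\tau+h]}=\theta^*_{k+1}$, and by definition of the separation rate $|\mu_{\tau,\ell^*}|=\rho_n$. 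For the noise, $\eta_{\tau,\ell^*}$ is a centered sum of independent variables whose variance equals $(ab/(2h))(1/a+1/b)=(a+b)/(2h)=1$, so Chebyshev's inequality (sharpened, if desired, by a Rosenthal-type bound using the $(2+\nu)$-moment condition of Assumption \ref{asmp:power}(i)) yields $\eta_{\tau,\ell^*}=O_P(1)$. Hence $M_{\tau,\mathrm{mean}}\ge|\mu_{\tau,\ell^*}|-|\eta_{\tau,\ell^*}|=\rho_n-O_P(1)$.

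To control $t_{\mathrm{u},\alpha}$, I would invoke the strong invariance principle of \cite{Einmahl1987invariance} under Assumption \ref{asmp:power}(i)--(ii) to approximate partial sums of $\epsilon_i$ uniformly by those of a Wiener process, reducing $M_\epsilon$ to the maximum of a Gaussian CUSUM functional indexed by $O(nh)$ pairs $(\tau,\ell)$; classical MOSUM/CUSUM extreme-value asymptotics (cf. Theorem 2.1 of \cite{Eichinger+Kirch-2018-p526}) then give $t_{\mathrm{u},\alpha}=O(\sqrt{\log(n/h)})$. Combining this with the lower bound from the previous paragraph and the hypothesis $\rho_n/\sqrt{\log(n/h)}\to\infty$ yields $\Pr(M_{\tau,\mathrm{mean}}>t_{\mathrm{u},\alpha})\to 1$. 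The main obstacle is the boundary regime $m_n<\lambda$, in which $\ell^*\ne\tau^*_k$ and the closest feasible split straddles the changepoint, so the achievable signal is diluted by roughly a factor $m_n/\lambda$; verifying that under Assumption \ref{asmp:power}(iii) ($\lambda\asymp h$) the signal remains of order $\rho_n$ up to a bounded constant requires careful bookkeeping of the CUSUM weights. A secondary technical point is making the invariance-principle step for the threshold uniform over all $(\tau,\ell)$ using only $(2+\nu)$-moment tails on the noises.
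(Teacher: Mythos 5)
Your proposal matches the paper's proof: both lower-bound $M_{\tau,\rm mean}$ by the CUSUM evaluated at $\ell=\tau^*_k$, split it into the deterministic signal of size $\rho_n$ and a noise fluctuation, and compare against $t_{\rm u,\alpha}=O(\sqrt{\log(n/h)})$ obtained from the invariance principle; the paper simply bounds the fluctuation crudely by $M_\epsilon=O_P(\sqrt{\log(n/h)})$ rather than your sharper $O_P(1)$ Chebyshev bound, which makes no difference under the hypothesis $\rho_n/\sqrt{\log(n/h)}\to\infty$. The boundary regime $m_n<\lambda$ that you flag as the main obstacle is not treated in the paper either---its proof tacitly assumes $\tau^*_k$ lies in the feasible range $(\tau-h+\lambda,\tau+h-\lambda)$---so you are, if anything, more careful on that point than the source.
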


For any detected changepoint $\th_j\in\D$, let $\widehat{\Delta}_{jk}=|\th_j-\tau^*_k|$. As a consequence of Proposition \ref{prop:power}, if $\Pr(H_{1,\th_j}\in\mathcal{H}_{1,\th_j}(m_n))\to 1$, i.e., $\Pr(\widehat{\Delta}_{jk}\asymp h-m_n)\to 1$, then $\Pr(M_{\th_j,\rm mean}>t_{\rm u,\alpha})\to 1$.

\subsubsection{Self-Normalization Techniques}\label{sec:self-normalization}

In the examples explored thus far, our primary focus has been on independent datasets. This underlines the necessity for further research to adapt and extend the Wald- and score-based methods proposed in Sections \ref{sec:two_sample} to time series data. Such extensions could be realized by adapting techniques from \cite{Eichinger+Kirch-2018-p526} and \cite{Kirch2024} within the context of Example \ref{ex:Wald}. Additionally, extending normal approximation and bootstrap techniques for high-dimensional data \citep{MR3718156} might also be feasible for Example \ref{ex:score} by employing properly selected centering terms to adjust for change effects. These methods generally requires consistent and change-robust estimation of long-run (co)variances, which can be challenging.
Another promising approach involves leveraging recent advances in self-normalization techniques, which avoids the complexities of long-run covariance estimation while accommodating temporal dependence; see \cite{shao2015sn} for a comprehensive review. We illustrate how this technique can be incorporated into our proposed framework in Example \ref{ex:uni_mean}, providing a practical example of its application in a time series setting.

For $0\leq a< \ell\leq b\leq n$, define $L_{\ell, a, b}=(\ell-a)(b-\ell)/(b-a)^{3/2}(\bar{Z}_{(a, \ell]}-\bar{Z}_{(\ell, b]})$. Consider the locally self-normalized two-sample statistic
\begin{align*}
S_{\tau,\rm mean}=\frac{L_{\tau, \tau-h, \tau+h}^2}{h^{-1}(\sum_{j=\tau-h+1}^\tau L_{j, \tau-h, \tau}^2+\sum_{j=\tau+1}^{\tau+h} L_{j, \tau, \tau+h}^2)},
\end{align*}
which satisfies Assumption \ref{asmp:null}. According to Proposition \ref{prop:sn}, the limiting distribution of $\max_{h\leq\tau\leq n-h}S_{\tau,\rm mean}$ under $\Pr_0$ is pivotal, facilitating the determination of the threshold $t_{\rm u,\alpha}$ through simulation. Similar locally self-normalized statistics for testing the presence of a changepoint have been explored by \cite{shao2010testing}, \cite{MR4515555}, and \cite{cheng2024general_SN}. Proposition \ref{prop:sn}(i) adapts from Theorem 3.1 in \cite{cheng2024general_SN}.

\begin{proposition}\label{prop:sn}
Suppose that there exists a constant $\nu>0$ such that $\sum_{i=1}^n\epsilon_i-\sigma W(n)=O(n^{1/(2+\nu)})$ almost surely, where $\sigma^2=\lim_{n\to\infty}\Var(\sum_{i=1}^n\epsilon_i)/n\in(0,\infty)$ is the long-run variance, and $W$ is the standard Wiener process. If $n/h\to\infty$ and $\{n^{2/(2+\nu)}\log n\}/h\to0$, then: (i) Under $\Pr_0$, $\max_{h\leq\tau\leq n-h}S_{\tau,\rm mean}$ converges to $\sup_{1\leq t<\infty}{\bL_{t, t-1, t+1}^2}/{\bV_{t, t-1, t+1}}$ in distribution, where $\bL_{t, u_1, u_2}=(u_2-u_1)^{-3/2}[(u_2-t)\{W(t)-W(u_1)\}-(t-u_1)\{W(u_2)-W(t)\}]$ and $\bV_{t, t-1, t+1}=\int_{t-1}^t\bL_{u, t-1, t}^2du+\int_t^{t+1}\bL_{u, t, t+1}^2du$. (ii) The TUNE method, using the test statistic $S_{\tau,\rm mean}$ and setting the threshold $t_{u,\alpha}=\inf\{t:\Pr(\sup_{1\leq s<\infty}{\bL_{s, s-1, s+1}^2}/{\bV_{s, s-1, s+1}}>t)\le\alpha\}$, satisfies FWER $\le\alpha+o(1)$.
\end{proposition}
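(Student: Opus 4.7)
The plan is to establish (i) via a uniform strong-invariance argument, and then derive (ii) as a direct application of Theorem \ref{thm:FWER} after checking Assumption \ref{asmp:null} for $S_{\tau,\rm mean}$. For part (i), I would exploit the fact that under $\Pr_0$ the parameter $\theta_i$ is constant over all $i\in[n]$, so $\bar{Z}_{(a,\ell]}-\bar{Z}_{(\ell,b]}=\bar{\epsilon}_{(a,\ell]}-\bar{\epsilon}_{(\ell,b]}$ and each $L_{\ell,a,b}$ becomes a linear combination of partial sums of $\epsilon_i$. The hypothesis $\sum_{i=1}^k\epsilon_i=\sigma W(k)+O(k^{1/(2+\nu)})$ almost surely then lets me replace these partial sums by $\sigma W(\cdot)$ with a remainder that is negligible, uniformly in $k\le n$, under the window-size condition $\{n^{2/(2+\nu)}\log n\}/h\to0$. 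After the time rescaling $t=\tau/h$ (and $u=\ell/h$ for the inner index), a direct calculation shows that $L_{\ell,a,b}$ matches $\sigma\bL_{u,\cdot,\cdot}$ of the rescaled Wiener process up to the invariance-principle error; the self-similarity $W(hs)\stackrel{d}{=}h^{1/2}W'(s)$ absorbs the $h$-scaling. Taking the ratio in $S_{\tau,\rm mean}$ cancels the $\sigma^2$, leaving a pivotal quantity. The continuous mapping theorem, applied to the functional $(x,y)\mapsto x^2/y$ on the bridge-type process, yields the asserted convergence of $\max_{h\leq\tau\leq n-h}S_{\tau,\rm mean}$ to $\sup_{1\leq t<\infty}\bL_{t,t-1,t+1}^2/\bV_{t,t-1,t+1}$. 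Since this argument mirrors Theorem 3.1 of \cite{cheng2024general_SN}, one can invoke their result with only cosmetic modifications for the local two-sided window used here.

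For part (ii), I would first verify Assumption \ref{asmp:null} for $S_{\tau,\rm mean}$. For any true null $\tau\in\mathcal{H}$, the entire local window $(\tau-h,\tau+h]$ contains no true changepoint, so $\theta_i$ is constant over that window. Each building block $\bar{Z}_{(a,\ell]}-\bar{Z}_{(\ell,b]}$ appearing in both the numerator and denominator of $S_{\tau,\rm mean}$ then collapses to the noise difference $\bar{\epsilon}_{(a,\ell]}-\bar{\epsilon}_{(\ell,b]}$, which is functionally independent of any $\theta^*_k$. Hence the joint law of $\{S_{\tau,\rm mean}:\tau\in\mathcal{H}\}$ is identical under $\Pr$ and $\Pr_0$, and Assumption \ref{asmp:null} holds with exact equality (no $o(1)$ remainder). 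Combined with part (i), the chosen threshold $t_{u,\alpha}$ satisfies $\Pr_0(\max_{h\leq\tau\leq n-h}S_{\tau,\rm mean}>t_{u,\alpha})\to\alpha$, verifying Eq. (\ref{thresh}). Theorem \ref{thm:FWER} then immediately delivers FWER $\le\alpha+o(1)$.

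The main obstacle is the uniform invariance step in (i): one must control the approximation of every $L_{\ell,a,b}$ by its Wiener-process analogue simultaneously over $O(n)$ choices of $\tau$ and $O(h)$ inner indices $\ell$, while ensuring that the self-normalizing denominator stays bounded away from zero so that the limiting ratio is well-defined. The denominator concern is benign because $\int_{t-1}^t\bL_{u,t-1,t}^2du+\int_t^{t+1}\bL_{u,t,t+1}^2du$ integrates a nondegenerate bridge-type process over fixed unit-length windows and is almost surely strictly positive; the uniform approximation is handled by the rate in the strong-invariance hypothesis combined with the window-size assumption. Once (i) is in hand, (ii) reduces to the algebraic observation that self-normalization cancels the nuisance mean, making the verification of Assumption \ref{asmp:null} essentially immediate.
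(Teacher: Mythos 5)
Your proposal is correct and follows essentially the same route as the paper: part (i) via the strong invariance principle applied uniformly after the time rescaling $t=\tau/h$, with the continuous mapping theorem converting the Riemann sums in the self-normalizer into $\bV_{t,t-1,t+1}$, and part (ii) by noting that the constant local mean cancels exactly for every $\tau\in\mathcal{H}$ so that Assumption \ref{asmp:null} holds with equality, after which Theorem \ref{thm:FWER} applies. The paper's proof is terser (it delegates the nullifiability check to the main text and the invariance details to its Wiener-process lemma), but the substance is identical.
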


\subsection{True Nulls Regarding Neighboring Changepoints}\label{sec:H0_3}

Until now, our focus has been on evaluating true nulls as defined in (\ref{H0}), which assesses the presence of a changepoint within a local neighborhood $\th_j\pm h$ for a predetermined window size $h$. This allows practitioners to specify the desired precision for detecting changepoint locations \citep{Jewell+Fearnhead+Witten-2022-p1082}. Our numerical results also affirm that the FWER can be effectively controlled across a broad range of $h$ values; see Section \ref{SM:h} of Supplemental Material. In the literature, an alternative definition of true nulls considers changes between neighboring changepoints \citep{Hyun+Lin+GSell+Tibshirani-2021-p1037,Jewell+Fearnhead+Witten-2022-p1082}, specifically,
\begin{align}\label{H0_3}
\begin{aligned}
\widetilde{H}_{0,\th_j}: \theta_{\th_{j-1}+1}=\cdots=\theta_{\th_{j+1}}\ \text{versus}\ \widetilde{H}_{1,\th_j}:\ \text{not}\ \widetilde{H}_{0,\th_j},\quad\text{for}\ j=1,\ldots,\Kh.
\end{aligned}
\end{align}
Under this definition, a rejection of $\widetilde{H}_{0,\th_j}$ might occur due to changes that are far away from $\th_j$. This is one of the reasons why \cite{Jewell+Fearnhead+Witten-2022-p1082} recommend employing (\ref{H0}) as opposed to (\ref{H0_3}). However, considering $\widetilde{H}_{0,\th_j}$ can sometimes be informative, particularly during initial screening stages where precise localization of changepoints is less critical. Our general principle of TUNE can be extended to such scenarios with slight modifications. 

In this context, consider the two-sample test statistic as $T_{\tau\in(\underline{\tau},\overline{\tau}]}=\T(\{Z_i\}_{i=\underline{\tau}+1}^{\tau},\{Z_i\}_{i=\tau+1}^{\overline{\tau}})$ for a generic operator $\T$, and report $\R=\{\th_j\in\D:T_{\th_j\in(\th_{j-1},\th_{j+1}]}>t_{\rm u,\alpha}\}$ as a subset of $\D$ deemed reliable. For instance, in Example \ref{ex:uni_mean} with $\sigma=1$, one can employ
\[
T_{\tau\in(\underline{\tau},\overline{\tau}],\rm mean}=\frac{\bar{Z}_{(\underline{\tau},\tau]}-\bar{Z}_{(\tau,\overline{\tau}]}}{\sqrt{(\tau-\underline{\tau})^{-1}+(\overline{\tau}-\tau)^{-1}}}.
\]
Define $\widetilde{\mathcal{H}}$ as the set of all potential segments containing no changepoints:
\[
\widetilde{\mathcal{H}}=\{(\underline{\tau},\overline{\tau}]\subset(0,n): (\underline{\tau},\overline{\tau})\cap\{\tau_k^*\}_{k=1}^{K^*} = \emptyset\}.
\]
Thus, $\widetilde{H}_{0,\th_j}$ holds if and only if $(\th_{j-1},\th_{j+1}]\in\widetilde{\mathcal{H}}$. The FWER is then defined by
\begin{align*}
{\rm FWER}\equiv\Pr(\text{there exists some $\th_j\in\R$ and $(\th_{j-1},\th_{j+1}]\in\widetilde{\mathcal{H}}$}).
\end{align*}
Proposition \ref{prop:H0_3} demonstrates that this approach provides valid FWER control, with nullifiable test statistics---extending Assumption \ref{asmp:null}---and properly selected thresholds.

\begin{proposition}\label{prop:H0_3}
If (i) (Threshold selection) $\Pr_0\big(\max_{0\le\underline{\tau}<\tau<\overline{\tau}\le n}T_{\tau\in(\underline{\tau},\overline{\tau}]}>t_{\rm{u},\alpha}\big)\leq\alpha+o(1)$ and (ii) (Nullifiability) $\Pr\big(\max_{{0\le\underline{\tau}<\tau<\overline{\tau}\le n;(\underline{\tau},\overline{\tau}]\in\widetilde{\mathcal{H}}}}T_{\tau\in(\underline{\tau},\overline{\tau}]}>t_{\rm u,\alpha}\big) = \Pr_0\big(\max_{{0\le\underline{\tau}<\tau<\overline{\tau}\le n;(\underline{\tau},\overline{\tau}]\in\widetilde{\mathcal{H}}}}T_{\tau\in(\underline{\tau},\overline{\tau}]}>t_{\rm u,\alpha}\big) + o(1)$, then FWER $\le\alpha+o(1)$.
\end{proposition}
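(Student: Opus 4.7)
The plan is to mirror the four-step argument used for Theorem~\ref{thm:FWER}, since Proposition~\ref{prop:H0_3} is structurally the same claim but with the single-index set of candidate true nulls replaced by the three-index family of triples $(\underline{\tau},\tau,\overline{\tau})$ arising from (\ref{H0_3}). The conceptual content is identical: scale up from detected changepoints to all potential ones, swap from $\Pr$ to $\Pr_0$ via nullifiability, further enlarge the max from true-null segments to all segments, and conclude via the threshold selection. Only the bookkeeping over the triple index differs.

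First I would expand the event defining FWER. By the construction of $\R$ and of $\widetilde{\mathcal{H}}$, the event ``there exists some $\th_j\in\R$ with $(\th_{j-1},\th_{j+1}]\in\widetilde{\mathcal{H}}$'' entails the existence of a triple $(\underline{\tau},\tau,\overline{\tau})=(\th_{j-1},\th_j,\th_{j+1})$ satisfying both $(\underline{\tau},\overline{\tau}]\in\widetilde{\mathcal{H}}$ and $T_{\tau\in(\underline{\tau},\overline{\tau}]}>t_{\rm u,\alpha}$. Dropping the requirement that the triple come from consecutive estimated changepoints (the analogue of inequality (i) in Theorem~\ref{thm:FWER}, and precisely the step that removes any dependence on the detection algorithm) yields
\[
\text{FWER} \le \Pr\Big(\max_{0\le\underline{\tau}<\tau<\overline{\tau}\le n;\,(\underline{\tau},\overline{\tau}]\in\widetilde{\mathcal{H}}} T_{\tau\in(\underline{\tau},\overline{\tau}]} > t_{\rm u,\alpha}\Big).
\]
Next I would apply hypothesis (ii) to replace $\Pr$ by $\Pr_0$ at the cost of an additive $o(1)$. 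Then I would further enlarge the index set by dropping the constraint $(\underline{\tau},\overline{\tau}]\in\widetilde{\mathcal{H}}$, which can only increase the maximum and hence the probability. Finally, hypothesis (i) bounds the resulting expression by $\alpha+o(1)$. Chaining these four inequalities yields FWER $\le\alpha+o(1)$.

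The proof itself is only a four-line chain, so the real difficulty lies not in Proposition~\ref{prop:H0_3} as stated but in verifying its hypotheses for concrete test statistics. In the single-index framework of Section~\ref{sec:FWER}, nullifiability typically follows from pointwise cancellation of the underlying $\theta^*_k$ (as in $T_{\tau,\rm mean}$) combined with a joint distributional approximation over $\tau$. Here one must control a three-dimensional family $\{T_{\tau\in(\underline{\tau},\overline{\tau}]}\}$, so any MOSUM-style bootstrap or invariance principle of the type used in Sections~\ref{sec:mean}--\ref{sec:score} would need to be strengthened to deliver uniform approximation over varying window endpoints $\underline{\tau},\overline{\tau}$, not just the center $\tau$. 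I expect that to be the main technical obstacle when instantiating the proposition, although it does not enter the proof of the proposition itself.
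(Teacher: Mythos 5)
Your proposal is correct and follows exactly the same four-step chain as the paper's own proof: bound FWER by the maximum over all true-null triples $(\underline{\tau},\tau,\overline{\tau})$, invoke nullifiability to pass from $\Pr$ to $\Pr_0$, enlarge to all triples, and apply the threshold-selection condition. Your closing remarks on the difficulty of verifying the hypotheses for concrete statistics are apt but, as you note, lie outside the proposition itself.
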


In Example \ref{ex:uni_mean} with $N(0,1)$ noises, the statistic $T_{\tau\in(\underline{\tau},\overline{\tau}],\rm mean}$ immediately satisfies the nullifiability condition, and the threshold can be approximated via simulation. Such statistics resembles scanning statistics used for changepoint detection. For example, \cite{Fang+Li+Siegmund-2020-p1615} investigated approximations of tail probabilities of $\max_{0\le\underline{\tau}<\tau<\overline{\tau}\le n}T_{\tau\in(\underline{\tau},\overline{\tau}],\rm mean}$ under a null hypothesis that no changes occur.

\section{Simulation Studies}\label{sec:simul}

To evaluate the performance of the TUNE method for post-detection inference, we conduct simulation studies across various changepoint models and data configurations, employing diverse detection algorithms. These simulations involve scenarios with equally spaced changepoints, i.e., $\tau_k^*=kn/(K^*+1)$ for $k\in[K^*]$, targeting a nominal FWER of $\alpha=5\%$. Each scenario is replicated $200$ times to compute empirical FWER and power, with power defined as the expected ratio of the number of true non-nulls deemed reliable to the number of true non-nulls detected, that is, $\E(|\R\cap\mathcal{H}^c|/|\D\cap\mathcal{H}^c|)$, where $\mathcal{H}^c=\{\tau\in[n-1]:\tau\not\in\mathcal{H}\}$.

\subsection{Univariate Mean Change Models}

Consider Example \ref{ex:uni_mean}. In the detection phase, we employ two variants of the BS algorithm---$k$-step BS and BS with a threshold $\lambda$---alongside the PELT algorithm with a penalty $\gamma$. These tuning parameters---$k$, $\lambda$, and $\gamma$---are either fixed at theoretically optimal values assuming normality ($K^*$, $\sqrt{2\log n}$, $\log n$, respectively) or determined via BIC ($\widehat{k}_{\rm BIC}$, $\widehat{\lambda}_{\rm BIC}$, and $\widehat{\gamma}_{\rm BIC}$, respectively); refer to \cite{yao1988estimating} and \cite{Fryzlewicz-2014-p2243}.

For post-detection inference, we implement two selective methods: JFW \citep{Jewell+Fearnhead+Witten-2022-p1082} and CF \citep{Carrington+Fearnhead-2023}, originally designed for fixed tuning parameters settings in BS and PELT but here adaptably applied for BIC-tuned parameters. In cases of unknown error variance, a robust variance estimator, as suggested by \cite{Jewell+Fearnhead+Witten-2022-p1082}, is used. As a comparative baseline, we use a sample-splitting approach, deploying odd-indexed observations for detection and even-indexed ones for inference, although this approach may reduce detection accuracy compared to both the selective methods and our TUNE method, which utilize the full dataset. To adjust for multiplicity, both the sample-splitting and selective methods incorporate Bonferroni corrections.

\subsubsection{IID Normal Noises with Known Variance}

\textbf{Scenario I} investigates an idealized setup, i.e., Example \ref{ex:uni_mean} with $P_{\epsilon}=N(0,\sigma^2)$, assuming a known variance $\sigma^2=1$. In this scenario, we conduct experiments with $n=500$ and $K^*=4$. We fix $\theta_1^*=1$, and set $\theta_{k+1}^*-\theta_k^*=(-1)^{k+1}\Delta$ for $k\in[K^*]$. To ensure comparability among different inference methods, we specify $h=10$. A more detailed examination of the effect of varying $h$ is presented in Section \ref{SM:h} of Supplemental Material.

\begin{figure}[!h]
\centering
\includegraphics[width=\textwidth]{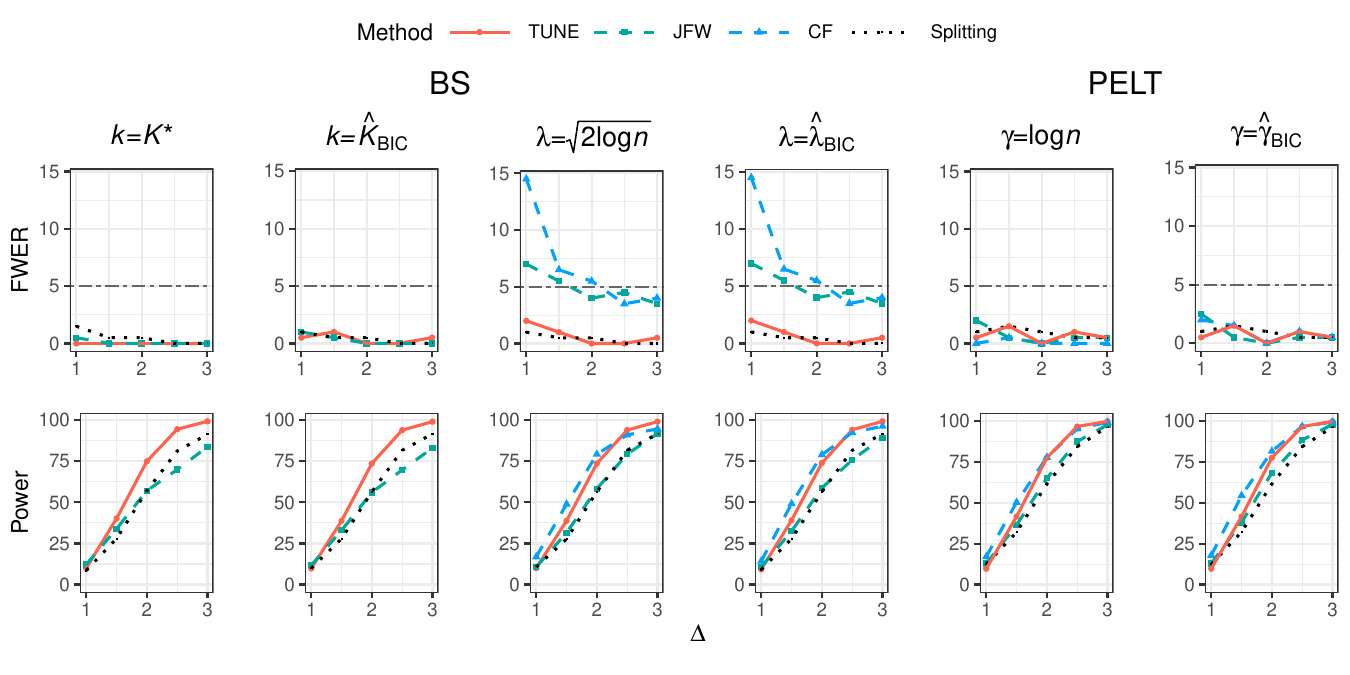}
\caption{Empirical FWER and power (in $\%$) under Scenario I for the TUNE, selective, and sample-splitting methods. The dot-dashed line represents the nominal FWER of $5\%$.}
\label{fig:I_FWER_power}
\end{figure}

Figure \ref{fig:I_FWER_power} showcases the empirical FWER and power across various detection schemes. The sample-splitting method consistently maintains FWER control across all configurations. However, it compromises the accuracy of changepoint detection (refer to Figure \ref{fig:I_hausdorff} in Supplemental Material) and power, due to the insufficient data usage for both detection and inference. On the other hand, the selective methods, JFW and CF, generally control FWER effectively through Bonferroni adjustments but exhibit increased FWERs when using the BS algorithm with BIC-selected thresholds, particularly in scenarios with weak signals. This increase is likely due to inflated selective errors when integrating model selection criteria; see Table \ref{tab:I_selective} in Supplemental Material. In addition, the CF method displays superior power relative to JFW, owing to less conditioning \citep{Carrington+Fearnhead-2023}. In contrast, our TUNE method (cf. Section \ref{sec:mean_uni_normal}) demonstrates robust FWER control across all settings, surpasses both the sample-splitting and the JFW method in power, and delivers comparable power to the CF method. The running times of all methods, detailed in Figure \ref{fig:I_time} in Supplemental Material, highlights that the TUNE method is more computational efficient relative to the selective approaches.

\subsubsection{Departures from Idealized Settings}

\textbf{Scenario II} explores deviations from the idealized conditions of Scenario I by introducing factors such as nonnormality, nonconstant variance, temporal correlations, or the presence of outliers. Specific sub-scenarios include:
\begin{itemize}
\item Scenario II(i) modifies Example \ref{ex:uni_mean} by adopting a t-distribution with $6$ degrees of freedom for $P_\epsilon$, standardized to ensure zero mean and unit variance.
\item Scenario II(ii) employs Example \ref{ex:Wald} with scores $\psi_\theta(Z_i)=Z_i/\theta-1$, corresponding to a Poisson distribution $P_i$ with parameter $\theta_i$.
\item Scenario II(iii) discards the assumption of independent noise in Example \ref{ex:uni_mean}. Instead, $\epsilon_i=0.5(1+\varepsilon_i)\epsilon_{i-1}+\varepsilon_i$, with $\varepsilon_i$ being iid from $N(0,1/2)$.
\item Scenario II(iv) revisits Example \ref{ex:uni_mean} but with  $P_\epsilon=0.8N(0,1)+0.2N(5,1)$.
\end{itemize}
For all sub-scenarios, experiments are conducted with $n=500$ and $K^*=4$, setting $\theta^*_k$ as in Scenario I, except in Scenario II(ii) where $\theta_{k+1}^*-\theta_k^*=(-1)^{k+1}\delta$ with $\Delta=\delta/\sqrt{(\delta+2)/2}$ to accommodate variance heterogeneity. We vary $\Delta$ to assess performance across various signal levels. We specify $h=20$.

For changepoint detection, the BS algorithm with threshold $\lambda=\sqrt{2\log n}$ is utilized, incorporating the detection statistic $T_{\tau,\rm mean}$ as in Scenario I but with $\sigma$ replaced by the robust estimator as used in JFW and CF. This detection setup may not be  ideally tailored for the considered sub-scenarios, potentially leading to inaccurate detection outcomes. Figure \ref{fig:II_FWER_power} depicts the empirical FWER and power for the selective methods, JFW and CF, alongside the TUNE method as in Section \ref{sec:mean_uni_normal} but replacing $\sigma^2$ by the aforementioned error variance estimator. All three methods encounter significant FWER inflation and apparently elevated power in Scenarios II(ii)--(iv), largely due to inappropriate selection of the detection and inference statistics.

\begin{figure}[!h]
\centering
\includegraphics[width=\textwidth]{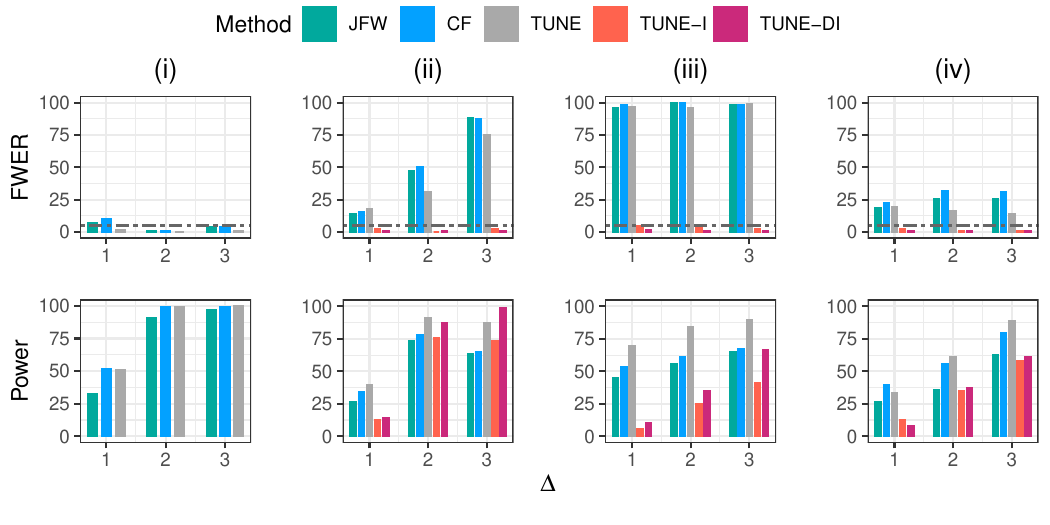}
\caption{Empirical FWER and power (in $\%$) under Scenario II for selective and TUNE methods.}
\label{fig:II_FWER_power}
\end{figure}

Adapting the JFW and CF approaches to flexible detection algorithms and inferential schemes poses great challenges. In contrast, the TUNE method offers flexible adaptability for each specific sub-scenario. We offer two strategies. \textit{Strategy I} (``I'' for modifying the inference statistic) implements customized inferential schemes, following guidelines from Section \ref{sec:Wald} for Scenarios II(ii), Section \ref{sec:self-normalization} for Scenario II(iii), and Section \ref{sec:nonpara} for Scenario II(iv), respectively. For Scenario II(ii), difference-based variance estimators $\widehat\sigma^2_{\tau}=\{2(2h-1)\}^{-1}\sum_{i=\tau-h}^{\tau+h-1}(Z_{i+1}-Z_i)^2$ are employed. With Strategy I, the TUNE-I method successfully maintains FWER control across Scenario II(ii)--(iv), delivering satisfactory power performance. \textit{Strategy DI} builds on Strategy I by further incorporating tailored changepoint detection procedures for each sub-scenario---``D'' for modifying the detection scheme. Specifically, for Scenario II(ii), the BS algorithm is implemented with a cross-validation-based criterion to determine the number of changepoints \citep{Zou+Wang+Li-2020-p413}, using the detection statistic $T_{\tau,\rm Wald}$. Scenario II(iii) utilizes a self-normalization-based detection algorithm to adjust for temporal correlations \citep{MR4515555}. For Scenario II(iv), a rank-based detection method \citep{MR3210993,MR3647098} is applied for handling data with outliers. With Strategy DI, the TUNE-DI method attains improved power due to more accurate detection outcomes, while maintaining robust FWER control, affirming the TUNE method's algorithm-agnostic robustness and reliability across diverse settings.

\subsubsection{Alternative True Null Settings}

We assess the effectiveness of the TUNE method as outlined in Section \ref{sec:H0_3}, particularly in handling post-detection inference related to true null hypotheses as in (\ref{H0_3}). We compare its performance against the selective methods, JFW and CF. The experiments are conducted as in Scenario I. The BS algorithm with threshold $\lambda=\sqrt{2\log n}$ is employed for changepoint detection. Figure \ref{fig:H0_3} presents the empirical FWER and power. All methods effectively control FWER in this scenario, while TUNE achieves higher power compared to the selective methods. The selective methods' loss of power is partly due to overconditioning in the calculation of selective p-values associated with such true nulls.

\begin{figure}[!h]
\centering
\includegraphics[width=.6\textwidth]{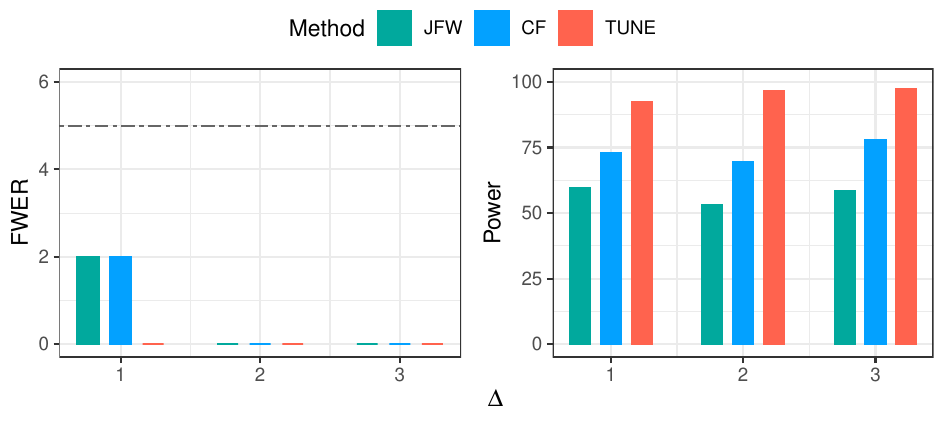}
\caption{Empirical FWER and power (in $\%$) regrading true nulls as in (\ref{H0_3}).}
\label{fig:H0_3}
\end{figure}

\subsection{Beyond Univariate Mean Changepoint Models}

This section delves into post-detection inference for more complex multivariate mean and regression changepoint models:
\begin{itemize}
\item Scenario III employs Example \ref{ex:mean} with $n=500$, $d\in\{5,200\}$, and $P_\epsilon=N(0,\Sigma)$, where $\Sigma=(0.5^{\vert i-j\vert})$.
\item Scenario IV focuses on a regression model as detailed in Example \ref{ex:score}, with $n=1000$, $d\in\{5,50\}$, $X_i$ iid from $N(0,\Sigma_X)$, and $P_\epsilon=N(0,1)$, where $\Sigma_X=(0.5^{\vert i-j\vert})$.
\item Scenario V extends Scenario III, introducing outliers by setting $P_{\epsilon}=\prod_{j=1}^dP_{\epsilon_j}$ with independent marginal distributions $P_{\epsilon_j}=0.9N(0,1)+0.1N(10,1)$, with $n=500$ and $d=5$.
\end{itemize}

For all scenarios, $K^*=4$. The mean and regression coefficient parameters are set such that for $j\le 5$, $\theta_{k,j}^*=1+(-1)^{k+1}\Delta/2$, and for $j>5$, $\theta_{k,j}^*=0$, for all $k\in[K^*+1]$. We vary $\Delta$ to explore different levels of signal. 

In the changpoint detection phase, for Scenario III, the inspect detection algorithm \citep{wang+Samworth+2018+p57} is used, for Scenario IV, the $2K^*$-step BS algorithm in conjunction with the Reliever device \citep{qian+wang+zou+2023} for accelerated computations is employed, and for Scenario V, the ecp algorithm \citep{MR3180567} is implemented.

\begin{figure}[!h]
\centering
\includegraphics[width=\textwidth]{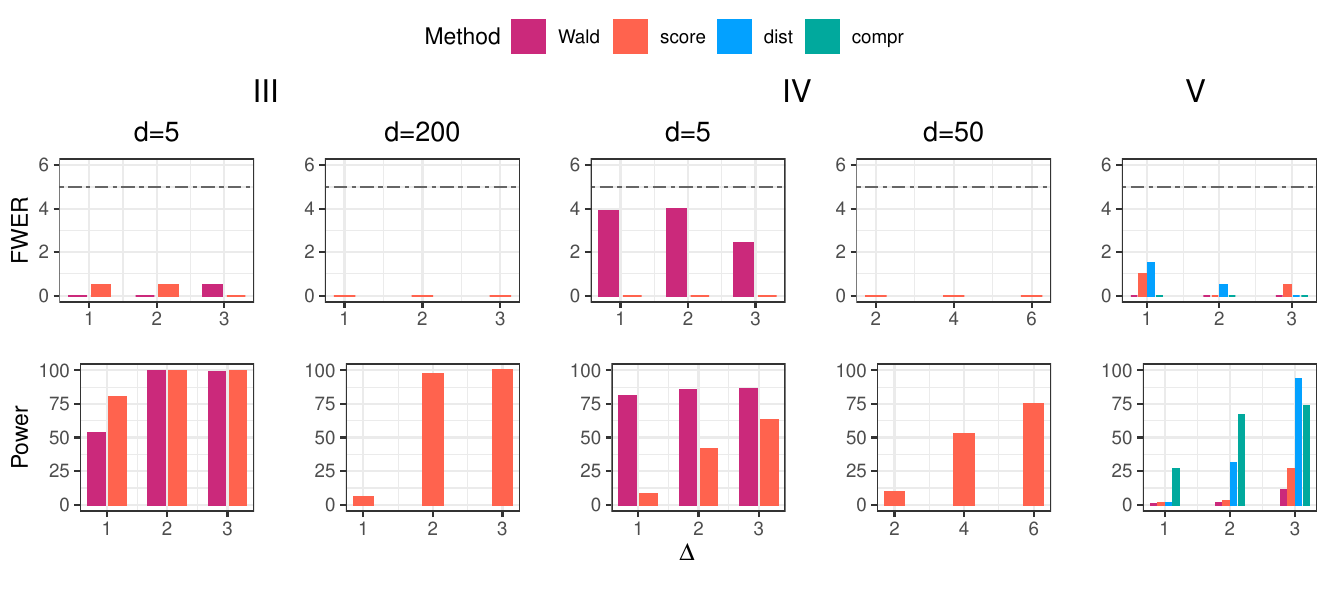}
\caption{Empirical FWER and power (in $\%$) under Scenarios III--V.}
\label{fig:others}
\end{figure}

In post-detection inference phase, the TUNE method is implemented using the score-based strategy across all scenarios (cf. Section \ref{sec:score}), selecting $g(\cdot)=\Vert\cdot\Vert_2$ for $d\le 5$ and $g(\cdot)=\Vert\cdot\Vert_\infty$ for $d\ge 50$, with $200$ bootstrap replications. For Scenarios III--IV with $d=5$, the Wald-based strategy is also applied (cf. Section \ref{sec:Wald}). Specifically, for Scenario III, $\psi_\theta(Z_i)=Z_i-\theta$ and $\widehat{\Gamma}_\tau=\{2(2n-1)\}^{-1}\sum_{i=1}^{n-1}(Z_{i+1}-Z_i)(Z_{i+1}-Z_i)^\top$, and for Scenarios IV, $\psi_\theta(Z_i)=-X_i(y_i-X_i^\top\theta)$ and $\widehat{\Gamma}_\tau=\widehat{\sigma}^{2}_{\tau}\widehat\Sigma^{-1}$, where $\widehat\Sigma$ is the sample covariance matrix based on $\{X_i\}_{i=1}^n$, and $\widehat{\sigma}^2_\tau=\{2(2h-1)\}^{-1}\sum_{i=\tau-h}^{\tau+h-1}(\widehat\epsilon_{i+1}-\widehat\epsilon_i)^2$ are based on differences of estimated residuals, $\widehat\epsilon_i=y_i-X_i^\top\widehat\theta_{(\tau-h,\tau+h]},i\in (\tau-h,\tau+h]$. Additionally, for Scenario V, the TUNE method is implemented using the statistics $T_{\tau,\rm compr}$ and $T_{\tau,\rm dist}$ (cf. Section \ref{sec:nonpara}). We specify $h=20$ for Scenarios III and V and $h=50$ for Scenario IV.

Figure \ref{fig:others} illustrates that different variants of the TUNE method provide robust FWER control across a variety of contexts, demonstrating their adaptability.

\section{Real Data Examples}\label{sec:realdata}

\subsection{GC-Content Data}\label{sec:GC}

This study analyzes GC (guanine-cytosine) content data from human chromosome $1$ to identify variations in nucleotide composition along the chromosome. The dataset, comprising $5000$ measurements, is accessed via the R package \texttt{changepoint}. For validation purposes, we hold out $2500$ even-indexed measurements, performing both changepoint detection and post-detection inference on the the remaining $n=2500$ odd-indexed measurements.

\begin{figure}[!h]
\centering
\includegraphics[width=0.8\textwidth]{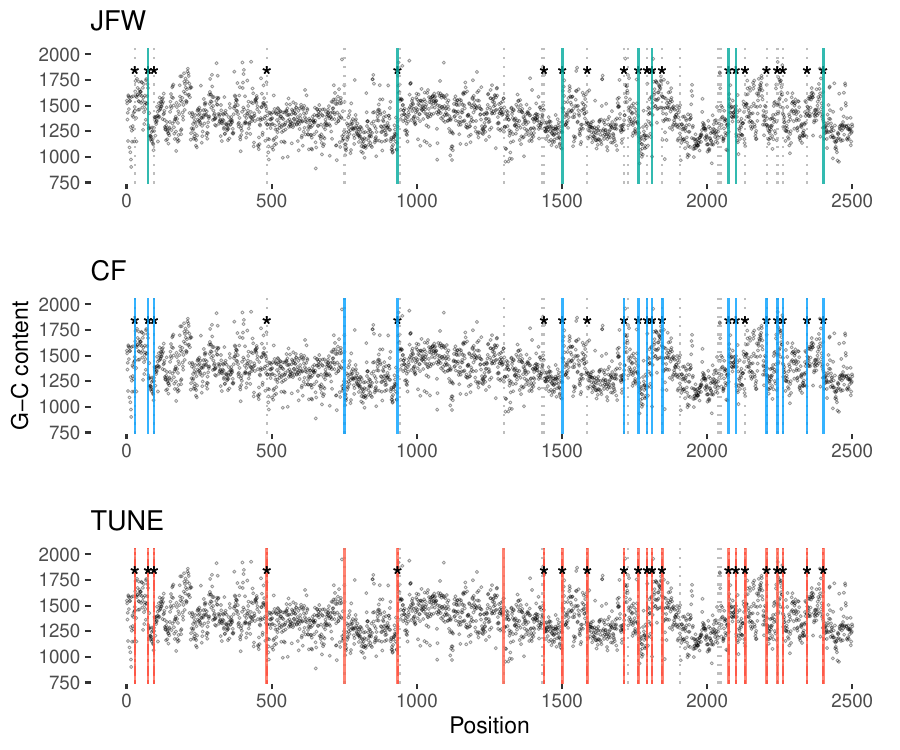}
\caption{Detected and validated changepoints in GC-Content Data.}
\label{fig:GC}
\end{figure}

Changepoint detection is implemented using the BS algorithm with threshold determined by BIC. The detected changepoints are indicated by dotted lines in each panel of Figure \ref{fig:GC}. To assess the reliability of these detected changepoints, the held-out validation dataset is employed, performing a sequence of local t-tests with a window size $h=20$, adjusted using Bonferroni corrections. The changepoints confirmed as reliable by this validation approach are also highlighted with asterisks in Figure \ref{fig:GC}.

Transitioning to the odd-indexed measurements for post-detection inference, we employ selective methods---JFW and CF---as well as our TUNE method (cf. Section \ref{sec:Wald}), maintaining $h=20$. The robust variance estimator, recommended by \cite{Jewell+Fearnhead+Witten-2022-p1082}, is used for all methods. Figure \ref{fig:GC} presents the changepoints validated by each method, marked with vertical lines. Out of the total $29$ detected changepoints, $8$ and $19$ are confirmed as reliable by the JFW and CF methods, respectively, which appear conservative. In contrast, TUNE validates $23$ changepoints, demonstrating greater sensitivity and power. Notably, of these $23$ validated changepoints by TUNE, $21$ coincide with those identified by the validation dataset, with only $2$ missed by the validation approach. This outcome affirms the TUNE method's enhanced diagnostic accuracy, while effectively controlling false positives.

\subsection{Array CGH Data}

The next study utilizes array CGH (comparative genomic hybridization) data to detect DNA sequence copy number variations in individuals with bladder tumors. The dataset, available from the R package \texttt{ecp}, including $\log$ intensity ratios fluorescent of DNA segments for $d=43$ individuals, measured across $2215$ loci. 

We follow the validation approach in Section \ref{sec:GC}, performing changepoint detection and TUNE-based inference on $n=1108$ odd-indexed observations. In the detection stage, we employ the inspect and ecp algorithms, along with the changeforest method---a classifier-based nonparametric detection approach \citep{londschien2023random}. These algorithms detect $312$, $28$, and $60$ changepoints, respectively, using their default settings. Following the recommendation in \cite{wang+Samworth+2018+p57}, we retain the most significant $30$ changepoints for both inspect and changeforest. The TUNE method, specifically tailored for this analysis, employs a score-based strategy with $g(\cdot)=\Vert\cdot\Vert_\infty$ and $B=200$ bootstrap replications, using a window size $h\in\{20,50\}$. The number of changepoints deemed reliable by the TUNE method for each algorithm is presented in Table \ref{tab:aGCH} (the column labeled ``TUNE''). 

\begin{table}[!h]
\begin{center}
\caption{Count of detected changepoints and validated changepoints in array CGH data.}
\begin{tabular}{lccccc}
\toprule
&&Detection&Validation&TUNE&Common \\
\midrule
\multirow{3}{*}{$h=20$}&inspect&30&19&10&10\\
&ecp&28&20&10&10\\
&changeforest&30&20&11&11\\\cline{2-6}
\multirow{3}{*}{$h=50$}&inspect&30&25&26&25\\
&ecp&28&23&24&23\\
&changeforest&30&26&27&25\\
\bottomrule
\end{tabular}
\label{tab:aGCH}
\end{center}
\end{table}

For validation, the $1107$ even-indexed observations are utilized. We conduct local high-dimensional two-sample mean tests, as proposed by \cite{Xue2020distribution}, to verify the reliability of the detected changepoints, using the window size $h\in\{20,50\}$. The number of changepoints validated by this test for each algorithm is also reported in Table \ref{tab:aGCH} (Column ``Validation''). Additionally, the table includes the count of changepoints identified by TUNE that coincide with those validated by the held-out dataset. These results indicate that TUNE effectively identifies reliable changepoints with minimal false positives and demonstrates enhanced power as the window size increases. Detailed visual summaries of these changepoints across several individuals are provided in Figure \ref{fig:aGCH} in Supplemental Material.

\section{Concluding Remarks}\label{sec:conclusion}

This paper introduces TUNE, a novel approach to post-detection inference in changepoint analysis, which has recently gained significant attention. TUNE is model- and algorithm-agnostic, offering robust control over the FWER irrespective of the detection algorithm employed and applicable to a diverse range of changepoint models. The core of TUNE involves a test statistic founded on the concept of nullifiability, alongside a universal threshold derived by approximating the distribution of the maximum of a sequence of localized two-sample statistics. We have outlined practical strategies for implementing this framework under various parametric changepoint models, including those in high-dimensional settings. TUNE provides a reliable and flexible alternative to selective methods for assessing the uncertainty of detected changepoints.

Our work opens several avenues for future research. Extending the principles of nullifiability and threshold determination to nonparametric changepoint models presents an exciting challenge, particularly when test statistics are developed by integrating advanced machine learning techniques \citep{li2024automatic}. Additionally, the foundational principles of TUNE could prove valuable in other areas of post-selection inference, such as post-clustering inference \citep{Gao2024cluster,chen2023selective}.

{\small
\baselineskip 10pt
\bibliographystyle{rss}
\bibliography{ref}

\begin{thebibliography}{69}
\expandafter\ifx\csname natexlab\endcsname\relax\def\natexlab#1{#1}\fi
\expandafter\ifx\csname url\endcsname\relax
  \def\url#1{\texttt{#1}}\fi
\expandafter\ifx\csname urlprefix\endcsname\relax\def\urlprefix{URL: }\fi

\bibitem[{Arcones and Gin\'{e}(1993)}]{Arcones1993limit}
Arcones, M.~A. and Gin\'{e}, E. (1993) Limit theorems for {$U$}-processes.
\newblock \textit{Ann. Probab.}, \textbf{21}, 1494--1542.

\bibitem[{Arlot et~al.(2019)Arlot, Celisse and Harchaoui}]{MR4048973}
Arlot, S., Celisse, A. and Harchaoui, Z. (2019) A kernel multiple change-point algorithm via model selection.
\newblock \textit{J. Mach. Learn. Res.}, \textbf{20}, 1--56.

\bibitem[{Auger and Lawrence(1989)}]{auger+Lawrence-1989-p39}
Auger, I.~E. and Lawrence, C.~E. (1989) Algorithms for the optimal identification of segment neighborhoods.
\newblock \textit{Bull. Math. Biol.}, \textbf{51}, 39--54.

\bibitem[{Bai(2010)}]{bai2010common}
Bai, J. (2010) Common breaks in means and variances for panel data.
\newblock \textit{J. Econometrics}, \textbf{157}, 78--92.

\bibitem[{Berk et~al.(2013)Berk, Brown, Buja, Zhang and Zhao}]{berk2013valid}
Berk, R., Brown, L., Buja, A., Zhang, K. and Zhao, L. (2013) Valid post-selection inference.
\newblock \textit{Ann. Statist.}, \textbf{41}, 802--837.

\bibitem[{Carrington and Fearnhead(2023)}]{Carrington+Fearnhead-2023}
Carrington, R. and Fearnhead, P. (2023) Improving power by conditioning on less in post-selection inference for changepoints.
\newblock \textit{arXiv preprint arXiv:2301.05636}.

\bibitem[{Chen and Chu(2023)}]{chen2023graph}
Chen, H. and Chu, L. (2023) Graph-based change-point analysis.
\newblock \textit{Annu. Rev. Stat. Appl.}, \textbf{10}, 475--499.

\bibitem[{Chen et~al.(2024)Chen, Jia, Wang and Zou}]{chen2024uncertainty}
Chen, H., Jia, Y., Wang, G. and Zou, C. (2024) Uncertainty quantification for data-driven change-point learning via cross-validation.
\newblock In \textit{Proceedings of the AAAI Conference on Artificial Intelligence}, vol.~38, 11294--11301.

\bibitem[{Chen et~al.(2023)Chen, Ren, Yao and Zou}]{chen2023data}
Chen, H., Ren, H., Yao, F. and Zou, C. (2023) Data-driven selection of the number of change-points via error rate control.
\newblock \textit{J. Amer. Statist. Assoc.}, \textbf{118}, 1415--1428.

\bibitem[{Chen et~al.(2022)Chen, Wang and Wu}]{MR4528482}
Chen, L., Wang, W. and Wu, W.~B. (2022) Inference of breakpoints in high-dimensional time series.
\newblock \textit{J. Amer. Statist. Assoc.}, \textbf{117}, 1951--1963.

\bibitem[{Chen(2018)}]{chen2018gaussian}
Chen, X. (2018) Gaussian and bootstrap approximations for high-dimensional {U}-statistics and their applications.
\newblock \textit{Ann. Statist.}, \textbf{46}, 642--678.

\bibitem[{Chen and Witten(2023)}]{chen2023selective}
Chen, Y.~T. and Witten, D.~M. (2023) Selective inference for {$k$}-means clustering.
\newblock \textit{J. Mach. Learn. Res.}, \textbf{24}, 1--41.

\bibitem[{Cheng and Chan(2024)}]{cheng2024general_SN}
Cheng, C.~H. and Chan, K.~W. (2024) A general framework for constructing locally self-normalized multiple-change-point tests.
\newblock \textit{J. Bus. Econom. Statist.}, \textbf{42}, 719--731.

\bibitem[{Chernozhukov et~al.(2017)Chernozhukov, Chetverikov and Kato}]{chern2017central}
Chernozhukov, V., Chetverikov, D. and Kato, K. (2017) Central limit theorems and bootstrap in high dimensions.
\newblock \textit{Ann. Probab.}, \textbf{45}, 2309--2352.

\bibitem[{Chernozhukov et~al.(2023)Chernozhukov, Chetverikov, Kato and Koike}]{chernozhukov+2023+p427}
Chernozhukov, V., Chetverikov, D., Kato, K. and Koike, Y. (2023) High-dimensional data bootstrap.
\newblock \textit{Annu. Rev. Stat. Appl.}, \textbf{10}, 427--449.

\bibitem[{Cho and Fryzlewicz(2015)}]{cho2015multiple}
Cho, H. and Fryzlewicz, P. (2015) Multiple-change-point detection for high dimensional time series via sparsified binary segmentation.
\newblock \textit{J. R. Stat. Soc. Ser. B. Stat. Methodol.}, \textbf{77}, 475--507.

\bibitem[{Cs\"{o}rg\H{o} and Horv\'{a}th(1997)}]{csorgo1997limit}
Cs\"{o}rg\H{o}, M. and Horv\'{a}th, L. (1997) \textit{Limit theorems in change-point analysis}.
\newblock John Wiley \& Sons, Ltd.

\bibitem[{Dette et~al.(2022)Dette, Pan and Yang}]{dette2022estimating}
Dette, H., Pan, G. and Yang, Q. (2022) Estimating a change point in a sequence of very high-dimensional covariance matrices.
\newblock \textit{J. Amer. Statist. Assoc.}, \textbf{117}, 444--454.

\bibitem[{Duy et~al.(2020)Duy, Toda, Sugiyama and Takeuchi}]{Duy+Toda+Sugiyama+Takeuchi-2020-p11356}
Duy, V. N.~L., Toda, H., Sugiyama, R. and Takeuchi, I. (2020) Computing valid p-value for optimal changepoint by selective inference using dynamic programming.
\newblock In \textit{Advances in Neural Information Processing Systems}, vol.~33, 11356--11367.

\bibitem[{Eichinger and Kirch(2018)}]{Eichinger+Kirch-2018-p526}
Eichinger, B. and Kirch, C. (2018) A {MOSUM} procedure for the estimation of multiple random change points.
\newblock \textit{Bernoulli}, \textbf{24}, 526--564.

\bibitem[{Einmahl(1987)}]{Einmahl1987invariance}
Einmahl, U. (1987) A useful estimate in the multidimensional invariance principle.
\newblock \textit{Probab. Theory Related Fields}, \textbf{76}, 81--101.

\bibitem[{Fang et~al.(2020)Fang, Li and Siegmund}]{Fang+Li+Siegmund-2020-p1615}
Fang, X., Li, J. and Siegmund, D. (2020) Segmentation and estimation of change-point models: {F}alse positive control and confidence regions.
\newblock \textit{Ann. Statist.}, \textbf{48}, 1615--1647.

\bibitem[{Fithian et~al.(2014)Fithian, Sun and Taylor}]{Fithian+Sun+Taylor-2017}
Fithian, W., Sun, D. and Taylor, J. (2014) Optimal inference after model selection.
\newblock \textit{arXiv preprint arXiv:1410.2597}.

\bibitem[{Frick et~al.(2014)Frick, Munk and Sieling}]{frick2014multiscale}
Frick, K., Munk, A. and Sieling, H. (2014) Multiscale change point inference.
\newblock \textit{J. R. Stat. Soc. Ser. B. Stat. Methodol.}, \textbf{76}, 495--580.

\bibitem[{Fryzlewicz(2014)}]{Fryzlewicz-2014-p2243}
Fryzlewicz, P. (2014) Wild binary segmentation for multiple change-point detection.
\newblock \textit{Ann. Statist.}, \textbf{42}, 2243--2281.

\bibitem[{Fryzlewicz(2024)}]{MR4766015}
--- (2024) Narrowest significance pursuit: inference for multiple change-points in linear models.
\newblock \textit{J. Amer. Statist. Assoc.}, \textbf{119}, 1633--1646.

\bibitem[{Gao et~al.(2024)Gao, Bien and Witten}]{Gao2024cluster}
Gao, L.~L., Bien, J. and Witten, D. (2024) Selective inference for hierarchical clustering.
\newblock \textit{J. Amer. Statist. Assoc.}, \textbf{119}, 332--342.

\bibitem[{Hao et~al.(2013)Hao, Niu and Zhang}]{hao+Niu+Zhang-2013-p1553}
Hao, N., Niu, Y.~S. and Zhang, H. (2013) Multiple change-point detection via a screening and ranking algorithm.
\newblock \textit{Statist. Sinica}, \textbf{23}, 1553--1572.

\bibitem[{Haynes et~al.(2017)Haynes, Fearnhead and Eckley}]{MR3647098}
Haynes, K., Fearnhead, P. and Eckley, I.~A. (2017) A computationally efficient nonparametric approach for changepoint detection.
\newblock \textit{Stat. Comput.}, \textbf{27}, 1293--1305.

\bibitem[{Hyun et~al.(2018)Hyun, G'Sell and Tibshirani}]{Hyun+GSell+Tibshirani-2018-p1053}
Hyun, S., G'Sell, M. and Tibshirani, R.~J. (2018) Exact post-selection inference for the generalized lasso path.
\newblock \textit{Electron. J. Stat.}, \textbf{12}, 1053--1097.

\bibitem[{Hyun et~al.(2021)Hyun, Lin, G'Sell and Tibshirani}]{Hyun+Lin+GSell+Tibshirani-2021-p1037}
Hyun, S., Lin, K.~Z., G'Sell, M. and Tibshirani, R.~J. (2021) Post-selection inference for changepoint detection algorithms with application to copy number variation data.
\newblock \textit{Biometrics}, \textbf{77}, 1037--1049.

\bibitem[{Jewell et~al.(2022)Jewell, Fearnhead and Witten}]{Jewell+Fearnhead+Witten-2022-p1082}
Jewell, S., Fearnhead, P. and Witten, D. (2022) Testing for a change in mean after changepoint detection.
\newblock \textit{J. R. Stat. Soc. Ser. B. Stat. Methodol.}, \textbf{84}, 1082--1104.

\bibitem[{Jirak(2015)}]{jirak2015}
Jirak, M. (2015) Uniform change point tests in high dimension.
\newblock \textit{Ann. Statist.}, \textbf{43}, 2451--2483.

\bibitem[{Jula~Vanegas et~al.(2022)Jula~Vanegas, Behr and Munk}]{jula2022multiscale}
Jula~Vanegas, L., Behr, M. and Munk, A. (2022) Multiscale quantile segmentation.
\newblock \textit{J. Amer. Statist. Assoc.}, \textbf{117}, 1384--1397.

\bibitem[{Kaul et~al.(2019)Kaul, Jandhyala and Fotopoulos}]{kaul2019efficient}
Kaul, A., Jandhyala, V.~K. and Fotopoulos, S.~B. (2019) An efficient two step algorithm for high dimensional change point regression models without grid search.
\newblock \textit{J. Mach. Learn. Res.}, \textbf{20}, 1--40.

\bibitem[{Killick et~al.(2012)Killick, Fearnhead and Eckley}]{Killick+Rebecca+Fearnhead+etal-2012-p1590}
Killick, R., Fearnhead, P. and Eckley, I.~A. (2012) Optimal detection of changepoints with a linear computational cost.
\newblock \textit{J. Amer. Statist. Assoc.}, \textbf{107}, 1590--1598.

\bibitem[{Kirch and Reckruehm(2024)}]{Kirch2024}
Kirch, C. and Reckruehm, K. (2024) Data segmentation for time series based on a general moving sum approach.
\newblock \textit{Ann. Inst. Statist. Math.}, \textbf{76}, 393--421.

\bibitem[{Kov{\'a}cs et~al.(2023)Kov{\'a}cs, B{\"u}hlmann, Li and Munk}]{kovacs+Buhlmann-2023-p249}
Kov{\'a}cs, S., B{\"u}hlmann, P., Li, H. and Munk, A. (2023) Seeded binary segmentation: {A} general methodology for fast and optimal changepoint detection.
\newblock \textit{Biometrika}, \textbf{110}, 249--256.

\bibitem[{Kriegeskorte et~al.(2009)Kriegeskorte, Simmons, Bellgowan and Baker}]{kriegeskorte2009circular}
Kriegeskorte, N., Simmons, W.~K., Bellgowan, P.~S. and Baker, C.~I. (2009) Circular analysis in systems neuroscience: the dangers of double dipping.
\newblock \textit{Nat. Neurosci.}, \textbf{12}, 535--540.

\bibitem[{Lee et~al.(2016)Lee, Seo and Shin}]{lee2016lasso}
Lee, S., Seo, M.~H. and Shin, Y. (2016) The lasso for high dimensional regression with a possible change point.
\newblock \textit{J. R. Stat. Soc. Ser. B. Stat. Methodol.}, \textbf{78}, 193--210.

\bibitem[{Leonardi and B{\"u}hlmann(2016)}]{leonardi2016computationally}
Leonardi, F. and B{\"u}hlmann, P. (2016) Computationally efficient change point detection for high-dimensional regression.
\newblock \textit{arXiv preprint arXiv:1601.03704}.

\bibitem[{Li et~al.(2016)Li, Munk and Sieling}]{Li2016fdr}
Li, H., Munk, A. and Sieling, H. (2016) F{DR}-control in multiscale change-point segmentation.
\newblock \textit{Electron. J. Stat.}, \textbf{10}, 918--959.

\bibitem[{Li et~al.(2024)Li, Fearnhead, Fryzlewicz and Wang}]{li2024automatic}
Li, J., Fearnhead, P., Fryzlewicz, P. and Wang, T. (2024) Automatic change-point detection in time series via deep learning.
\newblock \textit{J. R. Stat. Soc. Ser. B. Stat. Methodol.}, \textbf{86}, 273--285.

\bibitem[{Liu et~al.(2020)Liu, Zhou, Zhang and Liu}]{liu2020unified}
Liu, B., Zhou, C., Zhang, X. and Liu, Y. (2020) A unified data-adaptive framework for high dimensional change point detection.
\newblock \textit{J. R. Stat. Soc. Ser. B. Stat. Methodol.}, \textbf{82}, 933--963.

\bibitem[{Londschien et~al.(2023)Londschien, B\"{u}hlmann and Kov\'{a}cs}]{londschien2023random}
Londschien, M., B\"{u}hlmann, P. and Kov\'{a}cs, S. (2023) Random forests for change point detection.
\newblock \textit{J. Mach. Learn. Res.}, \textbf{24}, 1--45.

\bibitem[{Lung-Yut-Fong et~al.(2015)Lung-Yut-Fong, L\'{e}vy-Leduc and Capp\'{e}}]{lung2015homogeneity}
Lung-Yut-Fong, A., L\'{e}vy-Leduc, C. and Capp\'{e}, O. (2015) Homogeneity and change-point detection tests for multivariate data using rank statistics.
\newblock \textit{J. SFdS}, \textbf{156}, 133--162.

\bibitem[{Matteson and James(2014)}]{MR3180567}
Matteson, D.~S. and James, N.~A. (2014) A nonparametric approach for multiple change point analysis of multivariate data.
\newblock \textit{J. Amer. Statist. Assoc.}, \textbf{109}, 334--345.

\bibitem[{Niu and Zhang(2012)}]{Niu+Zhang-2012-p1306}
Niu, Y.~S. and Zhang, H. (2012) The screening and ranking algorithm to detect {DNA} copy number variations.
\newblock \textit{Ann. Appl. Stat.}, \textbf{6}, 1306--1326.

\bibitem[{Pein et~al.(2017)Pein, Sieling and Munk}]{pein2017heterogeneous}
Pein, F., Sieling, H. and Munk, A. (2017) Heterogeneous change point inference.
\newblock \textit{J. R. Stat. Soc. Ser. B. Stat. Methodol.}, \textbf{79}, 1207--1227.

\bibitem[{Qian et~al.(2023)Qian, Wang and Zou}]{qian+wang+zou+2023}
Qian, C., Wang, G. and Zou, C. (2023) Reliever: Relieving the burden of costly model fits for changepoint detection.
\newblock \textit{arXiv preprint arXiv:2307.01150}.

\bibitem[{Reckr{\"u}hm(2019)}]{reckruhm2019estimating}
Reckr{\"u}hm, K. (2019) \textit{Estimating multiple structural breaks in time series - a generalized MOSUM approach based on estimating functions}.
\newblock Phd thesis, Otto von Guericke University Magdeburg.

\bibitem[{Shao(2015)}]{shao2015sn}
Shao, X. (2015) Self-normalization for time series: {A} review of recent developments.
\newblock \textit{J. Amer. Statist. Assoc.}, \textbf{110}, 1797--1817.

\bibitem[{Shao and Zhang(2010)}]{shao2010testing}
Shao, X. and Zhang, X. (2010) Testing for change points in time series.
\newblock \textit{J. Amer. Statist. Assoc.}, \textbf{105}, 1228--1240.

\bibitem[{Truong et~al.(2020)Truong, Oudre and Vayatis}]{truong-2020-p107299}
Truong, C., Oudre, L. and Vayatis, N. (2020) Selective review of offline change point detection methods.
\newblock \textit{Sig. Proces.}, \textbf{167}, 107299.

\bibitem[{Wainwright(2019)}]{wainwright2019high}
Wainwright, M.~J. (2019) \textit{High-dimensional statistics: A non-asymptotic viewpoint}.
\newblock Cambridge University Press.

\bibitem[{Wang et~al.(2021)Wang, Zhao, Lin and Willett}]{wang2021statistically}
Wang, D., Zhao, Z., Lin, K.~Z. and Willett, R. (2021) Statistically and computationally efficient change point localization in regression settings.
\newblock \textit{J. Mach. Learn. Res.}, \textbf{22}, 1--46.

\bibitem[{Wang and Feng(2023)}]{wang+Feng-2023-p936}
Wang, G. and Feng, L. (2023) Computationally efficient and data-adaptive changepoint inference in high dimension.
\newblock \textit{J. R. Stat. Soc. Ser. B. Stat. Methodol.}, \textbf{85}, 936--958.

\bibitem[{Wang et~al.(2022)Wang, Zou and Qiu}]{wang2022data}
Wang, G., Zou, C. and Qiu, P. (2022) Data-driven determination of the number of jumps in regression curves.
\newblock \textit{Technometrics}, \textbf{64}, 312--322.

\bibitem[{Wang et~al.(2018)Wang, Zou and Yin}]{wang2018change}
Wang, G., Zou, C. and Yin, G. (2018) Change-point detection in multinomial data with a large number of categories.
\newblock \textit{Ann. Statist.}, \textbf{46}, 2020--2044.

\bibitem[{Wang and Samworth(2018)}]{wang+Samworth+2018+p57}
Wang, T. and Samworth, R.~J. (2018) High dimensional change point estimation via sparse projection.
\newblock \textit{J. R. Stat. Soc. Ser. B. Stat. Methodol.}, \textbf{80}, 57--83.

\bibitem[{Xu et~al.(2024)Xu, Wang, Zhao and Yu}]{MR4784067}
Xu, H., Wang, D., Zhao, Z. and Yu, Y. (2024) Change-point inference in high-dimensional regression models under temporal dependence.
\newblock \textit{Ann. Statist.}, \textbf{52}, 999--1026.

\bibitem[{Xue and Yao(2020)}]{Xue2020distribution}
Xue, K. and Yao, F. (2020) Distribution and correlation-free two-sample test of high-dimensional means.
\newblock \textit{Ann. Statist.}, \textbf{48}, 1304--1328.

\bibitem[{Yao(1988)}]{yao1988estimating}
Yao, Y.-C. (1988) Estimating the number of change-points via {S}chwarz' criterion.
\newblock \textit{Statist. Probab. Lett.}, \textbf{6}, 181--189.

\bibitem[{Yu and Chen(2021)}]{yu2021finite}
Yu, M. and Chen, X. (2021) Finite sample change point inference and identification for high-dimensional mean vectors.
\newblock \textit{J. R. Stat. Soc. Ser. B. Stat. Methodol.}, \textbf{83}, 247--270.

\bibitem[{Zhang and Wu(2017)}]{MR3718156}
Zhang, D. and Wu, W.~B. (2017) Gaussian approximation for high dimensional time series.
\newblock \textit{Ann. Statist.}, \textbf{45}, 1895--1919.

\bibitem[{Zhang et~al.(2022)Zhang, Wang and Shao}]{zhang2022adaptive}
Zhang, Y., Wang, R. and Shao, X. (2022) Adaptive inference for change points in high-dimensional data.
\newblock \textit{J. Amer. Statist. Assoc.}, \textbf{117}, 1751--1762.

\bibitem[{Zhao et~al.(2022)Zhao, Jiang and Shao}]{MR4515555}
Zhao, Z., Jiang, F. and Shao, X. (2022) Segmenting time series via self-normalisation.
\newblock \textit{J. R. Stat. Soc. Ser. B. Stat. Methodol.}, \textbf{84}, 1699--1725.

\bibitem[{Zou et~al.(2020)Zou, Wang and Li}]{Zou+Wang+Li-2020-p413}
Zou, C., Wang, G. and Li, R. (2020) Consistent selection of the number of change-points via sample-splitting.
\newblock \textit{Ann. Statist.}, \textbf{48}, 413--439.

\bibitem[{Zou et~al.(2014)Zou, Yin, Feng and Wang}]{MR3210993}
Zou, C., Yin, G., Feng, L. and Wang, Z. (2014) Nonparametric maximum likelihood approach to multiple change-point problems.
\newblock \textit{Ann. Statist.}, \textbf{42}, 970--1002.

\end{thebibliography}
}

\newpage

\def\thesection{S.\arabic{section}}
\setcounter{section}{0}
\def\theequation{S.\arabic{equation}}
\setcounter{equation}{0}
\def\thelemma{S.\arabic{lemma}}
\setcounter{lemma}{0}
\def\thetable{S\arabic{table}}
\setcounter{table}{0}
\def\thefigure{S\arabic{figure}}
\setcounter{figure}{0}
\def\theassumption{S.\arabic{assumption}}
\setcounter{assumption}{0}
\def\thedefinition{S.\arabic{definition}}
\setcounter{definition}{0}

\begin{center}
\bf\Large Supplementary Material for\\
``TUNE: Algorithm-Agnostic Inference after Changepoint Detection''
\end{center}

The supplementary material includes proofs of Theorems \ref{thm:Wald}--\ref{thm:compr} and Propositions \ref{prop:power}--\ref{prop:H0_3}, and additional numerical results.

\section{Theoretical Proofs}\label{apdx:sec:proofs}

It should be noted that the positive constants $C$ and $c$ are employed throughout this section and may vary from line to line.

\subsection{Proof of Theorem \ref{thm:Wald}}

It suffices to verify that the threshold satisfies Eq. \eqref{thresh} and that the statistic $T_{\tau, \rm Wald}$ fulfills Assumption \ref{asmp:null}. The conclusion then follows directly from Theorem \ref{thm:FWER}.

Under Assumption \ref{asmp:Wald} and $\Pr_0$, Theorem 1(a) from \cite{Kirch2024} establishes that $a(n/h)T_{\tau, \rm Wald}-b(n/h)$ converges in distribution to the Gumbel extreme value distribution $G$. Given that $t_{\rm{u}, \alpha}= {b(n/h)+G^{-1}(1-\alpha)}/a(n/h)$, it follows that $\Pr(\max_{h\leq \tau\leq n-h}T_{\tau, \rm Wald}>t_{\rm{u}, \alpha})\leq \alpha+o(1)$, verifying Eq. \eqref{thresh}.

We proceed to show that $T_{\tau,\rm Wald}$ satisfies Assumption \ref{asmp:null} by controlling errors due to Gaussian approximations, linear expansions, and variance estimation, following arguments similar to those in \cite{Kirch2024}. For each $k\in[K^*+1]$ and $\tau\in\mathcal{H}_k=[\tau^*_{k-1}+h,\tau^*_k-h]$, define
\[
S^{*(k)}_\tau=(2h)^{-1/2}\big\|\Sigma^{*-1/2}_k\big\{\sum_{i\in(\tau-h,\tau]}\psi_{\theta^*_k}(Z_i)-\sum_{i\in(\tau,\tau+h]}\psi_{\theta^*_k}(Z_i)\big\}\big\|_2.
\]
Under Assumptions \ref{asmp:Wald}(ii) and \ref{asmp:Wald}(iv), by the invariance principle for partial sums of independent random vectors (cf. Lemma \ref{apdx:lem:invariance}), there exists a $d$-dimensional standard Wiener process $\{W(t):t\ge 0\}$ such that, 
\[
\max_{k\in[K^*+1]}\max_{\tau\in\mathcal{H}_k}S^{*(k)}_\tau = \max_{\tau\in\mathcal{H}}(2h)^{-1/2}\|\{W(\tau)-W(\tau-h)\}-\{W(\tau+h)-W(\tau)\}\|_2+o_P(a^{-1}(n/h)),
\]
where $\mathcal{H}=\cup_{k\in[K^*+1]}\mathcal{H}_k$. Next, define
\[
W^{*(k)}_\tau = \sqrt{h/2}\|\Gamma_k^{*-1/2}(\widehat{\theta}_{(\tau-h,\tau]}-\widehat{\theta}_{(\tau,\tau+h]})\|_2.
\]
Assumption \ref{asmp:Wald}(iii) ensures that
\[
\max_{k\in[K^*+1]}\max_{\tau\in\mathcal{H}_k}W^{*(k)}_\tau=\max_{k\in[K^*+1]}\max_{\tau\in\mathcal{H}_k}S^{*(k)}_\tau+o_P(a^{-1}(n/h)).
\]
Moreover, Assumption \ref{asmp:Wald}(v) guarantees
\[
\max_{k\in[K^*+1]}\max_{\tau\in\mathcal{H}_k}T_{\tau,\rm Wald} = \max_{k\in[K^*+1]}\max_{\tau\in\mathcal{H}_k}W^{*(k)}_\tau + o_P(a^{-1}(n/h)).
\]
Combining the above results, we obtain
\begin{align*}
\max_{\tau\in\mathcal{H}}T_{\tau,\rm Wald}
&= \max_{k\in[K^*+1]}\max_{\tau\in\mathcal{H}_k}T_{\tau,\rm Wald}\\
&= \max_{\tau\in\mathcal{H}}(2h)^{-1/2}\|\{W(\tau)-W(\tau-h)\}-\{W(\tau+h)-W(\tau)\}\|_2 + o_P(a^{-1}(n/h)).
\end{align*}
Repeating the same arguments under $\Pr_0$, we similarly have
\begin{align*}
\max_{\tau\in\mathcal{H}}T_{\tau,\rm Wald}
&= \max_{\tau\in\mathcal{H}}(2h)^{-1/2}\|\{W(\tau)-W(\tau-h)\}-\{W(\tau+h)-W(\tau)\}\|_2 + o_P(a^{-1}(n/h)).
\end{align*}
Since the leading terms in both expressions are identical under $\Pr$  and $\Pr_0$, it follows that $\Pr(\max_{\tau\in\mathcal{H}}T_{\tau, \rm Wald}>t_{\rm{u}, \alpha})=\Pr_0(\max_{\tau\in\mathcal{H}}T_{\tau, \rm Wald}>t_{\rm{u}, \alpha})+o(1)$, confirming Assumption \ref{asmp:null}. 

Therefore, Theorem \ref{thm:Wald} holds.

\subsection{Proof of Theorem \ref{thm:mean}}

Theorem \ref{thm:mean} is a direct consequence of Theorem \ref{thm:score}.

\subsection{Proof of Theorem \ref{thm:score}}

It suffices to show that
\begin{align*}
\Pr_0^*&(\max_{\tau=h,\ldots,n-h}T_{\tau,\rm score}>t_{\rm u,\alpha})\\
&\le\alpha + (nhd)^{-c} +C\big(\{(B_n^2+\Delta_\mu^2)\log^7(nhd)/h\}^{1/6}+\{\Delta_\Sigma\log^2(nhd)/h\}^{1/3}\big),
\end{align*}
for some constants $c,C>0$.

Denote $\epsilon_i=\S_i-\E(\S_i)$. Let \(\bZ_i \sim N(0, \bSigma_i)\) be independent Gaussian random vectors with covariance matrices $\Sigma_i=\Var(\epsilon_i)$ for $i\in[n]$. Let $S_{\tau, \S}=(2h)^{-1/2}(\sum_{i=\tau-h+1}^\tau \S_i-\sum_{i=\tau+1}^{\tau+h}\S_i)$ be the employed statistics, \(S_{\tau, \bZ} = (2h)^{-1/2} ( \sum_{i=\tau-h+1}^{\tau} \bZ_i - \sum_{i=\tau+1}^{\tau+h} \bZ_i )\) be the Gaussian analogues, and $S_{\tau, \rm{boots}} = (4h)^{-1/2} \{ \sum_{i=\tau-h+1}^{\tau} \e_i (\S_{i+1} - \S_i) - \sum_{i=\tau+1}^{\tau+h} \e_i (\S_i - \S_{i-1}) \}$ be the bootstrap statistics. Notice that $T_{\tau,\rm score}=g(S_{\tau, \S})$ and $T_{\tau, \rm boots}=g(S_{\tau, \rm boots})$. Define $\eta_n=\max_{h\leq \tau_1, \tau_2\leq n-h}\Vert\Cov(S_{\tau_1, \bZ}, S_{\tau_2, \bZ})-\Cov(S_{\tau_1, \text{boots}}, S_{\tau_2,\text{boots}}\mid \{Z_i\}_{i=1}^n)\Vert_\infty$.

To proceed, we establish three key lemmas.

\begin{lemma}\label{lem:BEB-stat}
Under Assumption \ref{asmp:boots} with a fixed $s$, there exists a constant $C>0$ such that
\begin{align*}
\sup_{t\geq 0}\Big\lvert\Pr_0^*\big(\max_{h\leq\tau\leq n-h}T_{\tau,\rm score}\le t\big)-\Pr\big\{\max_{h\leq \tau\leq n-h}g\left(S_{\tau,\bZ}\right)\leq t\big\}\Big\rvert\leq C\{B_n^2\log^7(nhd)/h\}^{1/6}.
\end{align*}
\end{lemma}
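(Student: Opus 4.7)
The plan is to apply the high-dimensional central limit theorem for sparsely convex sets of \cite{chern2017central} to a single stacked random vector that simultaneously encodes all the local statistics $S_{\tau,\S}$. Under $\Pr^{*}_{0}$ the score means coincide, so one has the representation $S_{\tau,\S}=(2h)^{-1/2}\sum_{i=1}^{n}w_{\tau,i}\epsilon_{i}$ with $w_{\tau,i}=\bbI_{\{i\in(\tau-h,\tau]\}}-\bbI_{\{i\in(\tau,\tau+h]\}}$. I would concatenate these across $\tau\in\{h,\ldots,n-h\}$ into $Y=(S_{h,\S}^{\top},\ldots,S_{n-h,\S}^{\top})^{\top}\in\mathbb{R}^{Nd}$ with $N=n-2h+1$, so that $Y=\sum_{i=1}^{n}X_{i}$ for independent mean-zero summands $X_{i}$ whose $\tau$-th block equals $w_{\tau,i}(2h)^{-1/2}\epsilon_{i}$, each being supported on only the $2h+1$ blocks with $|\tau-i|\le h$. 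The Gaussian analogue $\widetilde Y=(S_{h,\bZ}^{\top},\ldots,S_{n-h,\bZ}^{\top})^{\top}$ is mean-zero with $\Cov(\widetilde Y)=\Cov(Y)$, because $\bZ_{i}\sim N(0,\Sigma_{i})$ matches $\Var(\epsilon_{i})$ block by block.

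Next I would recognise the relevant level sets as sparsely convex in $\mathbb{R}^{Nd}$. The event $\{\max_{h\le\tau\le n-h}g(S_{\tau,\S})\le t\}$ equals $\bigcap_{\tau}\{Y:g(Y_{\tau})\le t\}$, and by Assumption \ref{asmp:boots}(i) each factor $\{x\in\mathbb{R}^d:g(x)\le t\}$ is itself an intersection of convex pieces depending on at most $s$ coordinates. Lifting those pieces through the projection onto the $\tau$-th block and then intersecting over $\tau$ yields a subset of $\mathbb{R}^{Nd}$ that is still $s$-sparsely convex; the sparsity index is unchanged, and only the number of convex pieces inflates by a factor at most $N$, which enters the final bound only logarithmically. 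Thus the quantity in the lemma is dominated by $\sup_{A\in\mathcal{A}^{\rm sp}(s)}|\Pr^{*}_{0}(Y\in A)-\Pr(\widetilde Y\in A)|$, which is precisely what is controlled by the sparsely-convex-set Gaussian comparison of \cite{chern2017central}.

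To invoke that result I would rescale $\widetilde X_{i}=\sqrt{n}\,X_{i}$ so that $Y=n^{-1/2}\sum_{i}\widetilde X_{i}$, and verify the hypotheses: (a) for every $s$-sparse unit vector $v$ supported in a single $\tau$-block, $n^{-1}\sum_{i}\E(v^{\top}\widetilde X_{i})^{2}=v^{\top}\Var(S_{\tau,\S})v$ is an average of $v^{\top}\Sigma_{i}v$'s and is bounded below by $b$ by Assumption \ref{asmp:boots}(ii); (b) fourth-moment and sub-exponential controls with the effective envelope $\widetilde B_{n}=B_{n}\sqrt{n/(2h)}$ are inherited from Assumption \ref{asmp:boots}(ii) since $|w_{\tau,i}|\le 1$. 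Plugging these into the CCK bound of the form $C(\widetilde B_{n}^{2}\log^{7}(Ndn)/n)^{1/6}$ collapses the factor $n/(2h)$ into the denominator, and using $N\le n$ produces the claimed rate $C\{B_{n}^{2}\log^{7}(nhd)/h\}^{1/6}$. The main obstacle is conceptual rather than computational: the stacked vector has heavy cross-block dependence, since each $\epsilon_{i}$ contributes to $2h+1$ of the $N$ blocks, and a naive bound that uses the global sample size $n$ against ambient dimension $Nd$ would be loose; the key observation is that the sparsely convex class only tests variance and moment conditions along $s$-sparse directions \emph{within} a single block, so the effective sample size per direction is the block length $2h$, which is exactly what produces the $1/h$ scaling in the final rate. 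A short self-contained verification is also required to show that the lift from $d$-dimensional sparsely convex sublevel sets to $\mathbb{R}^{Nd}$ preserves the sparsity index $s$.
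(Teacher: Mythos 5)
Your proposal is essentially the paper's own argument: the paper likewise stacks the $S_{\tau,\S}$ into a single $(n-2h+1)d$-dimensional sum, observes that the sublevel sets of $\max_\tau g(\cdot_{J_\tau})$ remain $s$-sparsely convex, rescales so the effective moment envelope becomes $B_n'=(2h)^{-1/2}n^{1/2}B_n$, and invokes Proposition~3.2 of \cite{chern2017central} to obtain the $\{B_n^2\log^7(nhd)/h\}^{1/6}$ rate. The only difference is presentational (your explicit remark that the variance and moment conditions are effectively tested along within-block sparse directions), so there is nothing substantive to add.
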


\begin{lemma}\label{apdx:lem:multi-boot-diff}
Under Assumption \ref{asmp:boots} with a fixed $s$, for any sequence $\bar{\eta}_n>0$, on the event $\{\eta_n\leq \bar{\eta}_n\}$, we have
\begin{align*}
\sup_{t\geq 0}&\Big\vert\Pr\big\{\max_{h\leq \tau\leq n-h}T_{\tau, \rm boots}\leq t\mid\{Z_i\}_{i=1}^n\big\}-\Pr\big\{\max_{h\leq \tau\leq n-h}g\left(S_{\tau,\bZ}\right)\leq t\big\}\Big\vert\\
&\leq C\{\bar{\eta}_n^{1/3}\log^{2/3}(nhd)+\log^{1/2}(nhd)/h\},
\end{align*}
for some constant $C>0$.
\end{lemma}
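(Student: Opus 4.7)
The plan is to exploit that, conditional on $\{Z_i\}_{i=1}^n$, each bootstrap vector $S_{\tau,\rm boots}$ is a centered Gaussian in $\mathbb{R}^d$---being a linear combination of the iid $N(0,1)$ multipliers $\{\e_i\}$---while the Gaussian analogue $S_{\tau,\bZ}$ is unconditionally centered Gaussian. The desired Kolmogorov bound therefore reduces to a high-dimensional Gaussian-to-Gaussian comparison on a structured family of level sets, naturally addressed by the sparsely-convex comparison machinery already used in the proof of Theorem \ref{thm:score}.

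First I would stack the statistics across $h\le\tau\le n-h$ into long vectors of length $p=d(n-2h+1)$: let $\mathbf{U}_{\rm B}$ and $\mathbf{U}_{\rm G}$ denote the concatenations of $S_{\tau,\rm boots}$ and $S_{\tau,\bZ}$, respectively. The two probabilities of interest are then $\Pr(\mathbf{U}_{\rm B}\in A_t\mid\{Z_i\}_{i=1}^n)$ and $\Pr(\mathbf{U}_{\rm G}\in A_t)$, where
\[
A_t=\bigcap_{h\le\tau\le n-h}\{x\in\mathbb{R}^p:g(x_\tau)\le t\},
\]
and $x_\tau\in\mathbb{R}^d$ is the $\tau$-th block of $x$. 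By Assumption \ref{asmp:boots}(i), each factor $\{x_\tau:g(x_\tau)\le t\}$ is $s$-sparsely convex; since the constraints act on disjoint coordinate blocks, $A_t$ inherits $s$-sparse convexity in $\mathbb{R}^p$ with the same parameter.

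I would then invoke the Chernozhukov--Chetverikov--Kato Gaussian-to-Gaussian comparison for sparsely convex sets \citep{chern2017central}. Writing $\Sigma^{\rm G}=\Cov(\mathbf{U}_{\rm G})$ and $\Sigma^{\rm B}=\Cov(\mathbf{U}_{\rm B}\mid\{Z_i\}_{i=1}^n)$, the max-entry deviation $\|\Sigma^{\rm G}-\Sigma^{\rm B}\|_\infty$ is by definition equal to $\eta_n$, which on the prescribed event is at most $\bar\eta_n$. The comparison theorem then yields a Kolmogorov-distance bound of order $\bar\eta_n^{1/3}\log^{2/3}(p)\asymp\bar\eta_n^{1/3}\log^{2/3}(nhd)$, accounting for the first term in the claim. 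To ensure uniform constants, I would verify via Assumption \ref{asmp:boots}(ii) that the diagonal entries of $\Sigma^{\rm G}$ (along the relevant sparse directions) are bounded away from both $0$ and infinity, so that both the comparison and the accompanying anti-concentration step apply with universal constants independent of $n,h,d$.

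The main obstacle is the second term $\log^{1/2}(nhd)/h$, which does not involve $\bar\eta_n$ and thus must originate from an irreducible second-order step. I expect it to arise from a Nazarov-type anti-concentration inequality for the Gaussian maximum $\max_\tau g(S_{\tau,\bZ})$, applied to convert the blockwise covariance comparison into a uniform statement over thresholds $t$ and to absorb a residual variance-matching bias: the bootstrap uses the difference proxy $(\S_{i+1}-\S_i)(\S_{i+1}-\S_i)^\top/2$, whose conditional expectation matches $\Sigma_i$ except at the finitely many fixed true changepoints, where it incurs additional mean-shift and covariance-jump biases that after windowed averaging contribute at rate $1/h$. Combined with the $\log^{1/2}(nhd)$ anti-concentration factor for Gaussian maxima over sparsely convex slabs, this delivers the second term. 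The chief technical steps are thus (i) formally verifying sparse convexity of $A_t$ at the stacked level with parameter $s$; (ii) establishing the diagonal variance regularity needed by both the sparsely-convex Gaussian comparison and Nazarov's inequality; and (iii) assembling the two contributions on $\{\eta_n\le\bar\eta_n\}$ without accruing further logarithmic factors.
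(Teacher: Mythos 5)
Your approach is essentially the paper's: the paper stacks the blockwise statistics into $\bfS_{\rm boots}$ and $\bfS_{\bZ}$, observes that the event $\{\max_\tau T_{\tau,\rm boots}\le t\}$ is the preimage of an $s$-sparsely convex set under $\tg$, identifies $\eta_n$ with $\|\Var(\bfS_{\bZ})-\Var(\bfS_{\rm boots}\mid\{Z_i\}_{i=1}^n)\|_\infty$, and concludes by directly invoking the Gaussian/bootstrap comparison over sparsely convex sets (Theorem 4.1 of \cite{chern2017central}), exactly as you propose. One caveat on your reading of the second term: the mean-shift and covariance-jump biases of the difference-based proxy at the true changepoints are precisely what $\eta_n$ measures, and they are controlled separately in Lemma \ref{apdx:prop:boots_cov} (where $\Delta_\mu$ and $\Delta_\Sigma$ appear in $\bar\eta_n$), so they cannot be the source of the $\log^{1/2}(nhd)/h$ term here; that term is instead the polytope-approximation/anti-concentration residual intrinsic to the sparsely convex comparison theorem itself, consistent with the Nazarov-type mechanism you also mention.
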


\begin{lemma}\label{apdx:prop:boots_cov}
Under Assumption \ref{asmp:boots}(ii), there exist constants $c,C>0$ such that $\Pr(\eta_n>\bar{\eta}_n)\leq (nhd)^{-c}/2$, where $\bar{\eta}_n=C[\{(B_n^2+\Delta_\mu^2)\log^3(nhd)/h\}^{1/6}+(\Delta_\Sigma/h)^{1/3}]^3$.
\end{lemma}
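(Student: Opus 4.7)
The plan is to decompose $\eta_n$ into a deterministic bias term and a stochastic fluctuation term, and bound each separately. Since the multipliers $\e_i$ are iid $N(0,1)$, the conditional bootstrap covariance admits the expansion
\[
\Cov(S_{\tau_1,\rm boots},S_{\tau_2,\rm boots}\mid\{Z_i\}_{i=1}^n)=(4h)^{-1}\sum_{i} c_{i,\tau_1}\,c_{i,\tau_2}^\top,
\]
where $c_{i,\tau}$ equals $(S_{i+1}-S_i)$ when $i\in(\tau-h,\tau]$, equals $-(S_i-S_{i-1})$ when $i\in(\tau,\tau+h]$, and is zero otherwise. Adding and subtracting $(4h)^{-1}\sum_i\E_\Pr[c_{i,\tau_1}c_{i,\tau_2}^\top]$ gives $\eta_n\le\eta_n^{\rm bias}+\eta_n^{\rm fluct}$, where the bias is the uniform sup-norm gap between the expected bootstrap covariance and the Gaussian-analog covariance $(2h)^{-1}\sum_i\sigma_{i,\tau_1}\sigma_{i,\tau_2}\Sigma_i$, and the fluctuation collects the data-dependent deviation around those means.

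For the bias, substituting $\E[S_iS_j^\top]=\delta_{ij}\Sigma_i+\mu_i\mu_j^\top$ into each $\E_\Pr[c_{i,\tau_1}c_{i,\tau_2}^\top]$ reproduces the Gaussian kernel plus three kinds of remainders: telescoping covariance increments $\Sigma_i-\Sigma_{i\pm 1}$, which vanish away from the $K^*$ changepoints and are bounded by $\Delta_\Sigma$ at a changepoint; mean-jump outer products $(\mu_{i+1}-\mu_i)(\mu_{i+1}-\mu_i)^\top$ whose entries are at most $\Delta_\mu$, likewise supported only on changepoints; and analogous cross terms of the same order. Since $K^*$ is fixed, summing the $(4h)^{-1}$-weighted contributions over $i$ gives $\eta_n^{\rm bias}\le C(\Delta_\Sigma+\Delta_\mu)/h$, uniformly in $(\tau_1,\tau_2)$.

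Controlling $\eta_n^{\rm fluct}$ is the more delicate step, and the main obstacle is that the matrix-valued summands $c_{i,\tau_1}c_{i,\tau_2}^\top$ are not independent across $i$ because consecutive differences share a common score. I would partition the indices into odd and even blocks so that each subsum becomes a sum of independent matrix-valued terms, then apply a Bernstein-type inequality. Under Assumption \ref{asmp:boots}(ii) each coordinate of $\epsilon_i$ has sub-exponential tails of scale $B_n$ and fourth moment bounded by $B_n^2$, so each scalar summand is sub-Weibull with variance at most $CB_n^2$. Bernstein then yields, for each fixed quadruple $(\tau_1,\tau_2,j,k)$, a deviation of order $B_n\sqrt{\log(nhd)/h}$ with probability at least $1-(nhd)^{-c_0}$ for a suitably chosen constant $c_0$; a union bound over the $O(n^2d^2)$ pairs and matrix entries preserves this rate up to logarithmic inflation, giving $\eta_n^{\rm fluct}\le CB_n\log^{3/2}(nhd)/\sqrt{h}$ on an event of probability at least $1-(nhd)^{-c}/2$. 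Extracting the linear (rather than quadratic) dependence on $B_n$ is what forces use of the refined moment bound in Assumption \ref{asmp:boots}(ii) in place of only the sub-exponential mgf condition. Combining both parts, on this event $\eta_n\le C[B_n\log^{3/2}(nhd)/\sqrt{h}+(\Delta_\mu+\Delta_\Sigma)/h]$, which is dominated by the multinomial expansion of $\bar\eta_n=C[\{(B_n^2+\Delta_\mu^2)\log^3(nhd)/h\}^{1/6}+(\Delta_\Sigma/h)^{1/3}]^3$ since that expansion explicitly contains $\{(B_n^2+\Delta_\mu^2)\log^3(nhd)/h\}^{1/2}$ and $\Delta_\Sigma/h$ as leading terms.
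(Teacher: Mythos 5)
Your overall architecture matches the paper's: split the discrepancy between the conditional bootstrap covariance and the Gaussian-analogue covariance into deterministic pieces driven by the jumps ($\Delta_\mu$, $\Delta_\Sigma$) and stochastic pieces controlled by concentration plus a union bound over the $O(n^2d^2)$ indices $(\tau_1,\tau_2,j,k)$; the paper's proof does exactly this via the terms $I$--$IV$ and the sub-decomposition $I_1,\dots,I_6$, and your odd/even blocking is a legitimate substitute for its handling of the $1$-dependent products $\epsilon_{i+1,j}\epsilon_{i,k}$.

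There is, however, a genuine misallocation that leaves a hole. Writing $S_{i+1}-S_i=(\mu_{i+1}-\mu_i)+(\epsilon_{i+1}-\epsilon_i)$, the cross terms of the form $(\mu_{i+1,j}-\mu_{i,j})(\epsilon_{i+1,k}-\epsilon_{i,k})$ have \emph{zero expectation}, so they do not appear in $\E[c_{i,\tau_1}c_{i,\tau_2}^\top]$ and cannot be charged to your bias term $\eta_n^{\rm bias}\le C(\Delta_\Sigma+\Delta_\mu)/h$; they live in the fluctuation. But your fluctuation bound $\eta_n^{\rm fluct}\le CB_n\log^{3/2}(nhd)/\sqrt{h}$ is derived with a variance proxy $CB_n^2$ per summand and carries no $\Delta_\mu$ dependence, so it does not cover these terms: their second moments scale like $\Delta_\mu^2+B_n^2$ (via $\E\{(\mu_{i+1,j}-\mu_{i,j})^2\epsilon_{i,k}^2\}\le\E(\Delta_\mu^2+\epsilon_{i,k}^4)/2$) and their envelope like $\Delta_\mu+B_n^2\log^2(nhd)$. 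This is precisely the paper's term $I_4$, which it bounds by $C\{(B_n^2+\Delta_\mu^2)\log^3(nhd)/h\}^{1/2}$ using the maximal inequalities in Lemmas \ref{apdx:lem:maximal-expec}--\ref{apdx:lem:maximal-prob}, and it is the reason $\bar{\eta}_n$ contains $B_n^2+\Delta_\mu^2$ rather than $B_n^2$ alone. Since $\Delta_\mu$ is not assumed bounded, your stated fluctuation bound is false in general. The repair is routine---fold $\Delta_\mu^2$ into the Bernstein variance proxy and $\Delta_\mu$ into the sub-Weibull envelope for the cross-term block---and the resulting bound is still dominated by $\bar{\eta}_n$, but as written the argument does not close.
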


Define $t_{\text{u}, \alpha}^{\bZ}=\inf\{t: \Pr\{\max_{h\leq \tau\leq n-h}g(S_{\tau,\bZ})> t\}\leq \alpha\}$. Notice that
\begin{align*}
\Pr_0^*&\big\{\max_{h\leq \tau\leq n-h}T_{\tau,\rm score}>t_{\text{u}, \alpha}\big\}\\
&\leq \Pr\big\{\max_{h\leq\tau\leq n-h}g(S_{\tau, \bZ})>t_{\text{u}, \alpha}^{\bZ}\big\}+\Big\vert\Pr_0^*\big\{\max_{h\leq \tau\leq n-h}T_{\tau,\rm score}>t_{\text{u}, \alpha}\big\}-\Pr_0^*\big\{\max_{h\leq \tau\leq n-h}T_{\tau,\rm score}>t_{\text{u}, \alpha}^{\bZ}\big\}\Big\vert\\
&~~~~~~+\Big\vert\Pr_0^*\big\{\max_{h\leq \tau\leq n-h}T_{\tau,\rm score}>t_{\text{u}, \alpha}^{\bZ}\big\}-\Pr\big\{\max_{h\leq \tau\leq n-h}g(S_{\tau, \bZ})>t_{\text{u}, \alpha}^{\bZ}\big\}\Big\vert.
\end{align*}
By applying Lemmas \ref{lem:BEB-stat}--\ref{apdx:prop:boots_cov} and following the proof of Theorem 3.6 in \cite{chen2018gaussian}, we have
\begin{align*}
\Pr_0^*&\big\{\max_{h\leq \tau\leq n-h}T_{\tau,\rm score}>t_{\text{u}, \alpha}\big\}\\
&\leq \alpha+2\sup_{t\geq 0}\Big\vert\Pr\big\{\max_{h\leq \tau\leq n-h}T_{\tau, \rm boots}\leq t\mid\{Z_i\}_{i=1}^n\big\}-\Pr\big\{\max_{h\leq \tau\leq n-h}g\left(S_{\tau,\bZ}\right)\leq t\big\}\Big\vert\\
&~~~~~~+2\Pr(\eta_n>\bar{\eta}_n)+3\sup_{t\geq 0}\Big\lvert\Pr_0^*\big(\max_{h\leq\tau\leq n-h}T_{\tau,\rm score}\le t\big)-\Pr\big\{\max_{h\leq \tau\leq n-h}g\left(S_{\tau,\bZ}\right)\leq t\big\}\Big\rvert\\
&\leq \alpha+C\{\bar{\eta}_n^{1/3}\log^{2/3}(nhd)+\log^{1/2}(nhd)/h\}+(nhd)^{-c}+C\{B_n^2\log^7(nhd)/h\}^{1/6}\\
&\leq \alpha+(nhd)^{-c}+ C\big(\{(B_n^2+\Delta_\mu^2)\log^7(nhd)/h\}^{1/6}+\{\Delta_\Sigma\log^2(nhd)/h\}^{1/3}\big).
\end{align*}
Therefore, Theorem \ref{thm:score} is proved.

\begin{proof}[Proof of Lemma \ref{lem:BEB-stat}]
Under $\Pr_0^*$, $S_{\tau, \S}=(2h)^{-1/2}\{\sum_{i=\tau-h+1}^{\tau}\epsilon_i-\sum_{i=\tau+1}^{\tau+h}\epsilon_i\}$ for $h\leq \tau\leq n-h$. Define the weights $\bomega_{i\tau}=(2h)^{-1/2}\bbI_{\{\tau-h<i\leq \tau\}}-(2h)^{-1/2}\bbI_{\{\tau<i\leq \tau+h\}}$. Then, $S_{\tau, \S}$ and its Gaussian analogue $S_{\tau,\bZ}$ can be written as weighted sums: $S_{\tau, \S}=\sum_{i=1}^n\bomega_{i\tau}\epsilon_i$ and $S_{\tau,\bZ}=\sum_{i=1}^n\bomega_{i\tau}\bZ_i$, respectively. Stack the $n-2h+1$ $d$-dimensional vectors $S_{\tau, \S}$ into a single $(n-2h+1)d$-dimensional vector: $\bfS_{\S}=(S_{h, \S}^{\top}, \dots, S_{n-h, \S}^{\top})^\top$. Similarly, define $\bfS_\bZ=(S_{h,\bZ}^{\top}, \dots, S_{n-h, \bZ}^{\top})^\top$. Let $\bfepsilon_i=(\bomega_{ih}\epsilon_i^\top,\ldots, \bomega_{i,n-h}\epsilon_i^\top)^\top$ and $\bfY_i=(\bomega_{ih}\bZ_i^\top,\ldots,\bomega_{i, n-h}\bZ_i^\top)^\top$. Then, $\bfS_{\S}=\sum_{i=1}^n\bfepsilon_i$ and $\bfS_\bZ=\sum_{i=1}^n\bfY_i$.

Define the function $\tg:\bbR^{(n-2h+1)d}\rightarrow\bbR$ by $\tg(\bfx)=\max_{1\leq k\leq n-2h+1}g(\bfx_{J_k})$, where $J_k=\{d(k-1)+1, \dots, dk\}$. By Assumption \ref{asmp:boots}(i), the set $\{x\in\bbR^{(n-2h+1)d}:\tg(x)\leq t\}=\cap_{k=1}^{n-2h+1}\{x\in\bbR^{(n-2h+1)d}: g(x_{J_{k}})\leq t\}$ is an intersection of $n-2h+1$ $s$-sparsely convex sets. By Definition \ref{def:s-sparsity}, such an intersection is also $s$-sparsely convex.

We aim to apply a Gaussian approximation theorem for high-dimensional random vectors over $s$-sparsely convex sets (cf. Lemma \ref{apdx:lem:gaussian:approx}). To do so, we need to verify Assumption \ref{apdx:asmp:gaussian-approx} based on Assumption \ref{asmp:boots}. For any $\bv\in\mathbb{S}^{d-1}$ with $\Vert \bv\Vert_0\leq s$, noticing that $\Var(S_{\tau, \S})=(2h)^{-1}\sum_{i=\tau-h+1}^{\tau+h}\bSigma_i$, we have $\E\{(v^\top \bfS_{\S})^2\}\geq b$. Given that  $\sum_{i=1}^{n}\vert\bomega_{i\tau}\vert^{2+k}=(2h)^{-k/2}$, we obtain $n^{-1}\sum_{i=1}^n\E\vert n^{1/2}\bfepsilon_{ij}\vert^{2+k}=n^{-1}\sum_{i=1}^n\E\vert n^{1/2}\bomega_{i\tau}\epsilon_{i\ell}\vert^{2+k}\leq B_n'^k$ for $j=(\tau-h)d+\ell$ and $\ell\in[d]$, where $B_n'=(2h)^{-1/2}n^{1/2}B_n$. Additionally, $\E[\exp\{\lvert n^{1/2}\bfepsilon_{ij}\rvert/B_n'\}]=\E[\exp\{\lvert n^{1/2}\bomega_{i\tau}\epsilon_{i\ell}\rvert/B_n'\}]\leq 2$. Thus, Assumption \ref{apdx:asmp:gaussian-approx} is satisfied. By Lemma \ref{apdx:lem:gaussian:approx}, we have
\begin{align*}
\sup_{t\geq 0}&\Big\lvert\Pr_0\big(\max_{h\leq\tau\leq n-h}T_{\tau,\rm score}\le t\big)-\Pr\big\{\max_{h\leq \tau\leq n-h}g\left(S_{\tau,\bZ}\right)\leq t\big\}\Big\rvert\\
&=\sup_{t\geq 0}\left\lvert\Pr_0^*\left\{\tg\left(\bfS_{\S}\right)\leq t\right\}-\Pr\left\{\tg\left(\bfS_{\bZ}\right)\leq t\right\}\right\rvert
\leq C\left\{B_n'^2\log^7(nhd)/n\right\}^{1/6}
\leq C\left\{B_n^2\log^7(nhd)/h\right\}^{1/6},
\end{align*}
completing the proof.
\end{proof}

\begin{proof}[Proof of Lemma \ref{apdx:lem:multi-boot-diff}]
Let $\bfS_{\text{boots}} = (S_{h, \text{boots}}^{\top}, \dots, S_{n-h, \text{boots}}^{\top})^\top$. Notice that $\max_{h\leq \tau\leq n-h}T_{\tau, \rm boots}=\tg(\bfS_{\rm{boots}})$ and $\eta_n=\|\Var(\bfS_{\bZ})-\Var(\bfS_{\rm{boots}}\mid \{Z_i\}_{i=1}^n)\|_\infty$. Lemma \ref{apdx:lem:multi-boot-diff} directly follows from Theorem 4.1 in \cite{chern2017central}.
\end{proof}

\begin{proof}[Proof of Lemma \ref{apdx:prop:boots_cov}]
Without loss of generality, assume that $(B_n^2+\Delta_\mu^2)\log^7(nhd)\leq h$. For $h\leq \tau_1\leq\tau_2\leq n-h$, notice that $\Cov(S_{\tau_1, \text{boots}}, S_{\tau_2, \text{boots}}\mid \{Z_i\}_{i=1}^n)$ and $\Cov(S_{\tau_1, \bZ}, S_{\tau_2, \bZ})$ are nonzero only when $\tau_1-h\leq \tau_2-h\leq \tau_1\leq \tau_2\leq \tau_1+h\leq \tau_2+h$ or $\tau_1-h\leq \tau_1\leq \tau_2-h\leq \tau_1+h\leq \tau_2\leq \tau_2+h$. We have $\eta_n\leq I+II+III+IV$, where 
\begin{align*}
I&=\max_{\tau_2-h\leq\tau_1\leq \tau_2\leq \tau_1+h, 1\leq j, k\leq d}\left\vert\frac{1}{4h}\sum_{i=\tau_2-h+1}^{\tau_1}\left\{\left(\S_{i+1, j}-\S_{i, j}\right)\left(\S_{i+1, k}-\S_{i, k}\right)-2\Sigma_{i, jk}\right\}\right\vert,\\
II&=\max_{\tau_2-h\leq\tau_1\leq \tau_2\leq \tau_1+h, 1\leq j, k\leq d}\left\vert\frac{1}{4h}\sum_{i=\tau_1+1}^{\tau_2}\left\{\left(\S_{i+1, j}-\S_{i, j}\right)\left(\S_{i, k}-\S_{i-1, k}\right)-2\Sigma_{i, jk}\right\}\right\vert,\\
III&=\max_{\tau_2-h\leq\tau_1\leq \tau_2\leq \tau_1+h, 1\leq j, k\leq d}\left\vert\frac{1}{4h}\sum_{i=\tau_2+1}^{\tau_1+h}\left\{\left(\S_{i, j}-\S_{i-1, j}\right)\left(\S_{i, k}-\S_{i-1, k}\right)-2\Sigma_{i, jk}\right\}\right\vert, \\
IV&=\max_{h\leq\tau_1\leq \tau_2-h\leq \tau_1+h\leq \tau_2\leq n-h, 1\leq j, k\leq d}\left\vert\frac{1}{4h}\sum_{i=\tau_2-h+1}^{\tau_1+h}\left\{\left(\S_{i+1, j}-\S_{i, j}\right)\left(\S_{i, k}-\S_{i-1, k}\right)-2\Sigma_{i, jk}\right\}\right\vert.
\end{align*}

We will demonstrate that each term is sufficiently small with high probability. Let us focus on term $I$, which we decompose further into six components: $I\leq C\sum_{i=1}^6I_i$, where
\begin{align*}
I_1&=\max_{\tau_2-h\leq\tau_1\leq \tau_2\leq \tau_1+h, 1\leq j, k\leq d}\left\vert\frac{1}{2h}\sum_{i=\tau_2-h+1}^{\tau_1}\left\{\epsilon_{i, j}\epsilon_{i, k}-\Sigma_{i, jk}\right\}\right\vert, \\
I_2&=\max_{\tau_2-h\leq\tau_1\leq \tau_2\leq \tau_1+h, 1\leq j, k\leq d}\left\vert\frac{1}{2h}\sum_{i=\tau_2-h+1}^{\tau_1}\left\{\epsilon_{i+1, j}\epsilon_{i+1, k}-\Sigma_{i+1, jk}\right\}\right\vert, \\
I_3&=\max_{\tau_2-h\leq\tau_1\leq \tau_2\leq \tau_1+h, 1\leq j, k\leq d}\left\vert\frac{1}{2h}\sum_{i=\tau_2-h+1}^{\tau_1}\epsilon_{i+1, j}\epsilon_{i, k}\right\vert, \\
I_4&=\max_{\tau_2-h\leq\tau_1\leq \tau_2\leq \tau_1+h, 1\leq j, k\leq d}\left\vert\frac{1}{2h}\sum_{i=\tau_2-h+1}^{\tau_1}\left(\mu_{i+1, j}-\mu_{i, j}\right)\left(\epsilon_{i+1, k}-\epsilon_{i, k}\right)\right\vert,\\
I_5&=\max_{h\leq \tau\leq n-h, 1\leq j\leq d}\left\vert\frac{1}{2h}\sum_{i=\tau-h+1}^{\tau+h-1}\left(\mu_{i+1, j}-\mu_{i, j}\right)^2\right\vert, \\
I_6&=\max_{h\leq \tau\leq n-h, 1\leq j, k\leq d}\frac{1}{2h}\sum_{i=\tau-h+1}^{\tau+h-1}\left\vert\Sigma_{i+1, jk}-\Sigma_{i, jk}\right\vert.
\end{align*}

\textit{Bounding $I_1$:} Following the proof of Proposition 4.1 in \cite{chern2017central}, for any $\zeta\in((nhd)^{-1}, e^{-1})$ and sufficiently large constant $C>0$, we have 
\begin{equation*}
\begin{aligned}
\Pr\left(I_1>C\{B_n^2\log^3(nhd)/h\}^{1/2}\right)\leq \Pr\left(I_1>C\{B_n^2\log(nhd)\log^2(1/\zeta)/h\}^{1/2}\right)\leq \zeta/20.
\end{aligned}
\end{equation*}

\textit{Bounding $I_2$ and $I_3$:} Similar arguments and bounds as for $I_1$ apply to $I_2$ and $I_3$.

\textit{Bounding $I_4$:} We decompose $I_4$ into two parts: $I_4\leq I_{4, 1}+I_{4, 2}$, where 
\begin{align*}
I_{4, 1}&=\max_{\tau_2-h\leq\tau_1\leq \tau_2\leq \tau_1+h, 1\leq j, k\leq d}\left\vert\frac{1}{2h}\sum_{i=\tau_2-h+1}^{\tau_1}\left(\mu_{i+1, j}-\mu_{i, j}\right)\epsilon_{i, k}\right\vert, \\
I_{4, 2}&=\max_{\tau_2-h\leq\tau_1\leq \tau_2\leq \tau_1+h, 1\leq j, k\leq d}\left\vert\frac{1}{2h}\sum_{i=\tau_2-h+1}^{\tau_1}\left(\mu_{i+1, j}-\mu_{i, j}\right)\epsilon_{i+1, k}\right\vert.
\end{align*}
To bounding $I_{4, 1}$, define 
\begin{align*}
U_{4, 1}&=\max_{1\leq i\leq n-1, 1\leq j, k\leq d}\left\vert\left(\mu_{i+1, j}-\mu_{i, j}\right)\epsilon_{i, k}\right\vert, \\
\phi_{4, 1}^2&=\max_{\tau_2-h\leq \tau_1\leq \tau_2\leq \tau_1+h, 1\leq j, k\leq d}\sum_{i=\tau_2-h+1}^{\tau_1}\E\left\{\left(\mu_{i+1, j}-\mu_{i, j}\right)^2\epsilon_{i, k}^2\right\}.
\end{align*}
Under Assumption \ref{asmp:boots}(ii) and noting that the number of change points is fixed, we have
\begin{align*}
\Vert U_{4, 1}\Vert_2&\leq \Vert U_{4, 1}\Vert_{\psi_{1/2}}\leq \Delta_\mu+\big\Vert\max_{1\leq i\leq n, 1\leq k\leq d}\epsilon_{i, k}^2\big\Vert_{\psi_{1/2}}\leq \Delta_\mu + CB_n^2\log^2\left(nhd\right), \\
\phi_{4, 1}^2&\leq \max_{\tau_2-h\leq \tau_1\leq \tau_2\leq \tau_1+h, 1\leq k\leq d}\sum_{i=\tau_2-h+1}^{\tau_1}\E\left(\Delta_\mu^2+\epsilon_{i, k}^4\right)\leq 2h(\Delta_\mu^2+ B_n^2),
\end{align*}
where $\Vert X\Vert_{\psi_{1/2}}=\inf\{t>0: \E\{\exp(\sqrt{\vert X\vert/t})\}\leq 2\}$. Applying Lemma \ref{apdx:lem:maximal-expec}, we obtain $\E(I_{4, 1})\leq C\{(\Delta_\mu^2+B_n^2)\log(nhd)/h\}^{1/2}$. Using Lemma \ref{apdx:lem:maximal-prob}, for any $t>0$,
\begin{equation*}
\begin{aligned}
\Pr&\left(I_{4, 1}>C\{(\Delta_\mu^2+B_n^2)\log(nhd)/h\}^{1/2}+t\right)\\
&\leq \exp\Big\{-\frac{2ht^2}{3(\Delta_\mu^2+B_n^2)}\Big\}+3\exp\Big\{-c\Big(\frac{2ht}{\Delta_\mu+B_n^2\log^2(nhd)}\Big)^{1/2}\Big\}.
\end{aligned}
\end{equation*}
Choose $t=C\{(\Delta_\mu^2+B_n^2)\log(nhd)/h\}^{1/2}\log(1/\zeta)$ with  $\zeta\in((nhd)^{-1}, e^{-1})$ and sufficient large constant $C>0$. Since $(B_n^2+\Delta_\mu^2)\log^7(nhd)\leq h$, we have
\begin{equation*}
\begin{aligned}
\Pr\left(I_{4, 1}>C\{(B_n^2+\Delta_\mu^2)\log^3(nhd)/h\}^{1/2}\right)\leq \zeta/20.
\end{aligned}
\end{equation*}
A similar bound holds for $I_{4, 2}$.

\textit{Bounding $I_5$ and $I_6$:} we have $I_5\leq\Delta_\mu/h$ and $I_6\leq\Delta_\Sigma/h$.

Combining the bounds on $I_1$ through $I_6$ and choosing an appropriate value of $\zeta$ (e.g., $\zeta=(nhd)^{-c}/2$), we find $\Pr(I>\bar{\eta}_n/4)\leq \zeta/4$.

\textit{Bounding $II$, $III$, and $IV$:} Similar techniques and arguments used for $I$ can be applied to $II$, $III$, and $IV$, yielding bounds of the same order.

Finally, summing the probabilities, we conclude that $\Pr(\eta_n>\bar{\eta}_n)\leq (nhd)^{-c}/2$.
\end{proof}

\subsection{Proof of Theorem \ref{thm:rank}}

Let $\mathcal{H}_k=(\tau^*_{k-1}+h,\tau^*_k-h]$ for $k\in[K^*+1]$ and $R^k_i$ be the rank of $Z_i$ among $Z_{\mathcal{H}_k}$, i.e., $R^k_i=\sum_{j\in\mathcal{H}_k}\bbI_{\{Z_j\le Z_i\}}$. Notice that $Z_j\le Z_i$ if and only if $R^k_j\le R^k_i$ for $i,j\in\mathcal{H}_k$. As a result, for $\tau\in\mathcal{H}_k$, 
\[
T_{\tau,\text{WMW}}
=\sum_{i=\tau+1}^{\tau+h}R_{\tau,i}
=\sum_{i=\tau+1}^{\tau+h}\sum_{j=\tau-h+1}^{\tau+h}\bbI_{\{R^k_j\le R^k_i\}},
\]
a function of ranks $R^k_i$. Consequently, $T_{\tau,\text{WMW}}$ is ($P^*_k$-)distribution-free. The independence among \{$Z_{\mathcal{H}_k}\}_{k\in[K^*+1]}$ ensures that $\max_{\tau\in\mathcal{H}}T_{\tau,\text{WMW}}$ has identical distribution under both $\Pr$ and $\Pr_0$, fulfilling Assumption \ref{asmp:null}. According to Theorem \ref{thm:FWER}, the conclusion follows.

\subsection{Proof of Theorem \ref{thm:compr}}

The statistic $T_{\tau,\rm compr}$ can be expressed as the $L_2$-norm of a U-statistic:
\[
T_{\tau,\rm compr}=\|2^{-1/2}h^{-3/2}\widehat{\Sigma}_\tau^{-1/2}\sum_{i=\tau-h+1}^\tau\sum_{j=\tau+1}^{\tau+h}\varphi(Z_i, Z_j)\|_2,
\]
where $\varphi(Z_i, Z_j)$ is a $d$-dimensional random vector with components $\varphi(Z_{i, k}, Z_{j, k})=\bbI_{\{Z_{i, k}\leq Z_{j, k}\}}-\bbI_{\{Z_{j, k}\leq Z_{i, k}\}}$ for $k\in[d]$; see also \cite{lung2015homogeneity}. For $i\in[n]$, define $\varphi_i(y)=\int \varphi(x, y)dF_{i}(x)$ and $\widetilde{\varphi}_i(x)=\int\varphi(x, y)dF_{i}(y)$. Due to the continuity of $F_i$, $\varphi_i(y)=2F_i(y)-1$ and $\widetilde{\varphi}_i(x)=1-2F_i(x)$, and consequently, $\varphi_i(x)=-\widetilde{\varphi}_i(x)$.

\textbf{Part I:} We first show that $\Pr_0(\max_{h\leq \tau\leq n-h}T_{\tau,\rm compr}>t_{\rm{u}, \alpha})\leq\alpha+o(1)$. Let $T_{\tau, {\rm compr, proj}}$ be the $L_2$-norm of the Hoeffding projection of $T_{\tau,\rm compr}$:
\[
T_{\tau, {\rm compr, proj}} = \|(2h)^{-1/2} \Sigma_\tau^{-1/2} \{\sum_{i=\tau-h+1}^{\tau} \widetilde{\varphi}_1(Z_i) + \sum_{j=\tau+1}^{\tau+h} \varphi_1(Z_j)\}\|_2,
\]
where $\Sigma_\tau=4\Var\{F_\tau(Z_\tau)\}$. Notice that
\begin{equation*}
\begin{aligned}
\big|\max_{h\leq \tau\leq n-h}T_{\tau,\rm compr}-\max_{h\leq \tau\leq n-h}T_{\tau, {\rm compr, proj}}\big|\leq C\big\{\max_{h\leq \tau\leq n-h}I_{ \tau}+\max_{h\leq \tau\leq n-h}II_{ \tau}\big\},
\end{aligned}
\end{equation*}
where 
\begin{align*}
I_{\tau}&=\left\|\frac{1}{\sqrt{2h^3}}\widehat{\Sigma}_\tau^{-1/2}\sum_{i=\tau-h+1}^{\tau}\sum_{j=\tau+1}^{\tau+h}\varphi(Z_i, Z_j)-\frac{1}{\sqrt{2h^3}}\Sigma_\tau^{-1/2}\sum_{i=\tau-h+1}^{\tau}\sum_{j=\tau+1}^{\tau+h}\varphi(Z_i, Z_j)\right\|_2,\\
II_{\tau}&=\left\|\frac{1}{\sqrt{2h^3}}\Sigma_\tau^{-1/2}\sum_{i=\tau-h+1}^{\tau}\sum_{j=\tau+1}^{\tau+h}\varphi(Z_i, Z_j)-\frac{1}{\sqrt{2h}}\Sigma_\tau^{-1/2}\left\{\sum_{i=\tau-h+1}^\tau\widetilde{\varphi}_1(Z_i)+\sum_{j=\tau+1}^{\tau+h}\varphi_1(Z_j)\right\}\right\|_2.
\end{align*}

Since $\E\{|\varphi_1(Z_1)|^2\}<\infty$, applying the invariance principle (cf. Lemma \ref{apdx:lem:invariance}), there exists a $d$-dimensional standard Wiener process $\{W(t):t\geq 0\}$ such that, 
\begin{equation}
\begin{aligned}\label{apdx:eq:proj}
\max_{h\leq \tau\leq n-h}&T_{\tau, {\rm compr, proj}}\\
&=\max_{1\leq t\leq n/h-1}2^{-1/2}\left\|\{W(t)-W(t-1)\}-\{W(t+1)-W(t)\}\right\|_2+o_P(\{\log(n/h)\}^{-1/2}).
\end{aligned}
\end{equation}
Using Lemma \ref{apdx:lem:wiener}(ii), we conclude that  $\max_{h\leq \tau\leq n-h}T_{\tau, {\rm compr, proj}}=O_P(\sqrt{\log(n/h)})$.

Next, we bound $\max_{h\leq \tau\leq n-h}II_{\tau}$. Let $\varrho(Z_i, Z_j)=\varphi(Z_i, Z_j)-h^{-1}\{\sum_{i=\tau-h+1}^{\tau}\widetilde{\varphi}_1(Z_i)+\sum_{j=\tau+1}^{\tau+h}\varphi_1(Z_j)\}$. Because $\Sigma_\tau$ is positive definite,
\begin{align*}
II_{\tau}
=\|(2h^3)^{-1/2}\Sigma_\tau^{-1/2}\sum_{i=\tau-h+1}^\tau\sum_{j=\tau+1}^{\tau+h}\varrho(Z_i, Z_j)\|_2
\leq C\max_{1\leq k\leq d}\left|\frac{1}{\sqrt{2h^3}}\sum_{i=\tau-h+1}^\tau\sum_{j=\tau+1}^{\tau+h}\varrho(Z_{i, k}, Z_{j, k})\right|.
\end{align*}
Notice that each term for $k\in[d]$ inside the absolute value is a degenerate U-statistic. Applying a Bernstein-type inequality for degenerate U-statistics (see Proposition 2.3(d) in \cite{Arcones1993limit}), for any $t>0$,
\begin{equation*}
\begin{aligned}
\Pr_0&\Big(\log^{1/2} (n/h)\max_{\substack{h\leq \tau\leq n-h,\\1\leq k\leq d}}\left|\frac{1}{\sqrt{2h^3}}\sum_{i=\tau-h+1}^\tau\sum_{j=\tau+1}^{\tau+h}\varrho(Z_{i, k}, Z_{j, k})\right|\geq t\Big)\\
&\leq nd \max_{\substack{h\leq \tau\leq n-h,\\1\leq k\leq d}}\Pr_0\Big(\left|\frac{1}{\sqrt{2h^3}}\sum_{i=\tau-h+1}^\tau\sum_{j=\tau+1}^{\tau+h}\varrho(Z_{i, k}, Z_{j, k})\right|\geq \log^{-1/2}(n/h)t\Big)\\
&\leq nd\exp\{-C\log^{-1}(n/h)h^2t^2\}.
\end{aligned}
\end{equation*}
Given that $\log^3 n/h\to 0$, it follows that $\max_{h\leq \tau\leq n-h}II_{\tau}=o_P(\{\log(n/h)\}^{-1/2})$.

We now proceed to bound $\max_{h\leq \tau\leq n-h}I_{\tau}$. From the above results, we have
\[
\max_{h\leq \tau\leq n-h}\|(2h^3)^{-1/2}\sum_{i=\tau-h+1}^\tau\sum_{j=\tau+1}^{\tau+h}\varphi(Z_i, Z_j)\|_2=O_P(\sqrt{\log(n/h)}).
\]
Notice that
\begin{align*}
\max_{h\leq \tau\leq n-h}\|\widehat{\Sigma}_\tau-\Sigma_\tau\|_2
&\leq C\max_{\substack{h\leq \tau\leq n-h,\\1\leq k\leq d}}\frac{1}{2h}\sum_{i=\tau-h+1}^{\tau+h}\{\widehat{F}_\tau(Z_{i, k})-F_\tau(Z_{i, k})\}^2\\
&~~+C\max_{\substack{h\leq \tau\leq n-h,\\1\leq k\leq d}}\frac{1}{2h}\sum_{i=\tau-h+1}^{\tau+h}\vert\widehat{F}_\tau(Z_{i, k})-F_\tau(Z_{i, k})\vert+\max_{h\leq \tau\leq n-h}\|\widehat{\Sigma}_{\tau, \rm{proj}}-\Sigma_\tau\|_2,  
\end{align*}
where $\widehat{\Sigma}_{\tau, \rm{proj}}=2h^{-1}\sum_{i=\tau-h+1}^{\tau+h}\{F_\tau(Z_i)-1/2\}\{F_\tau(Z_i)-1/2\}^\top$. Using the Dvoretzky–Kiefer–Wolfowitz inequality, for any $t>0$, 
\begin{align*}
\Pr_0&\Big(\log(n/h)\max_{\substack{h\leq \tau\leq n-h,\\1\leq k\leq d}}\frac{1}{2h}\sum_{i=\tau-h+1}^{\tau+h}\{\widehat{F}_\tau(Z_{i, k})-F_\tau(Z_{i, k})\}^2>t\Big)\\
&\leq Cn\max_{\substack{h\leq \tau\leq n-h,\\\tau-h<i\leq \tau+h,\\1\leq k\leq d}}\Pr_0\left(|\widehat{F}_\tau(Z_{i, k})-F_\tau(Z_{i, k})|>\log^{-1/2}(n/h)t^{1/2}\right)\\
&\leq Cn\exp\{-ch\log^{-1}(n/h)t\},\\
\Pr_0&\Big(\log(n/h)\max_{\substack{h\leq \tau\leq n-h,\\1\leq k\leq d}}\frac{1}{2h}\sum_{i=\tau-h+1}^{\tau+h}\vert\widehat{F}_\tau(Z_{i, k})-F_\tau(Z_{i, k})\vert>t\Big)\\
&\leq Cn\exp\{-ch\log^{-2}(n/h)t^2\}.
\end{align*}
Applying Lemma \ref{apdx:lem:covariance}, for any $t>0$, we obtain
\begin{align*}
\Pr_0&\Big(\log(n/h)\max_{h\leq \tau\leq n-h}\|\widehat{\Sigma}_{\tau, \rm{proj}}-\Sigma_\tau\|_
2\geq t\Big)\\
&\leq n\max_{h\leq \tau\leq n-h}\Pr_0\left(\|\widehat{\Sigma}_{\tau, \rm{proj}}-\Sigma_\tau\|_2\geq \log^{-1} (n/h)t\right)\\
&\leq 2nd\exp\left(-\frac{Ch\log^{-2}(n/h)t^2}{\max_{h\leq \tau\leq n-h}\|\Sigma_\tau\|_2+\log^{-1}(n/h)t}\right).
\end{align*}
Since $\max_{h\leq \tau\leq n-h}\|\Sigma_\tau\|_2<\infty$ and $\log^3n/h\to0$, it follows that $\max_{h\leq \tau\leq n-h}\|\widehat{\Sigma}_\tau-\Sigma_\tau\|_2=o_P(\{\log(n/h)\}^{-1})$. 
Combining these results, we conclude that $\max_{h\leq \tau\leq n-h}I_{\tau}=o_P(\log^{-1/2}(n/h))$.

Therefore, we have
\begin{align}\label{apdx:eq:remainder}
\big|\max_{h\leq \tau\leq n-h}T_{\tau,\rm compr}-\max_{h\leq \tau\leq n-h}T_{\tau, {\rm compr, proj}}\big|=o_P(\{\log(n/h)\}^{-1/2}).
\end{align}
Combining Eq. \eqref{apdx:eq:proj} and Eq. \eqref{apdx:eq:remainder}, we obtain
\[
\max_{h\leq \tau\leq n-h}T_{\tau,\rm compr}=\sup_{1\leq t<\infty}2^{-1/2}\|\{W(t)-W(t-1)\}-\{W(t+1)-W(t)\}\|_2+o_P(\{\log(n/h)\}^{-1/2}).
\]
By the choice of $t_{\rm u,\alpha}$, $\Pr_0(\max_{h\leq \tau\leq n-h}T_{\tau,\rm compr}>t_{\rm{u}, \alpha})\leq \alpha+o(1)$.

\textbf{Part II:} Next, we verify Assumption \ref{asmp:null}. Consider $T_{\tau,\rm compr}$ for $\tau \in \mathcal{H}$. Under $\Pr_0$, following similar arguments, we have
\begin{align*}
\max_{\tau\in\mathcal{H}}T_{\tau,\rm compr}=\max_{\tau\in\mathcal{H}}(2h)^{-1/2}\|\{W(\tau)-W(\tau-h)\}-\{W(\tau+h)-W(\tau)\}\|_2+o_P(\{\log(n/h)\}^{-1/2}).
\end{align*}
In the presence of changepoints, for $k\in[K^*+1]$ and $i\in[\tau_{k-1}^*+1, \tau_k^*]$, $\varphi_i(x)=\varphi_{\tau_k^*}(x)$,  $\widetilde{\varphi}_i(y)=\widetilde{\varphi}_{\tau_k^*}(y)$, and $\Sigma_i=\Sigma_{\tau_k^*}$. Thus, for $k\in[K^*+1]$ and $\tau\in\mathcal{H}_k=[\tau_{k-1}^*+h, \tau_k^*-h]$, we can similarly show that
\begin{align*}
\max_{\tau\in\mathcal{H}}T_{\tau,\rm compr}=\max_{\tau\in\mathcal{H}}(2h)^{-1/2}\|\{W(\tau)-W(\tau-h)\}-\{W(\tau+h)-W(\tau)\}\|_2+o_P(\{\log(n/h)\}^{-1/2}).
\end{align*}
Thus, $\Pr(\max_{\tau\in\mathcal{H}}T_{\tau, \rm compr}>t_{\rm{u}, \alpha})=\Pr_0(\max_{\tau\in\mathcal{H}}T_{\tau, \rm compr}>t_{\rm{u}, \alpha})+o(1)$, verifying Assumption \ref{asmp:null}. 

The conclusion follows from Theorem \ref{thm:FWER}.

\subsection{Proof of Proposition \ref{prop:power}}

Observe that
\begin{align*}
M_{\tau,\rm mean} &\ge \sqrt{\frac{\{\tau^*_k-(\tau-h)\}(\tau+h-\tau^*_k)}{2h}}\left\vert\bar{Z}_{(\tau-h,\tau^*_k]}-\bar{Z}_{(\tau^*_k,\tau+h]}\right\vert\\
&= \sqrt{\frac{\{\tau^*_k-(\tau-h)\}(\tau+h-\tau^*_k)}{2h}}\left\vert\bar{\epsilon}_{(\tau-h,\tau^*_k]}-\bar{\epsilon}_{(\tau^*_k,\tau+h]}+(\theta^*_{k-1}-\theta^*_k)\right\vert\\
&\gtrsim \sqrt{\frac{m_n(2h-m_n)}{2h}}|\theta^*_{k-1}-\theta^*_k| - M_\epsilon.
\end{align*}
Using arguments similar to those in the proof of Theorem \ref{thm:Wald}, it can be concluded that $M_\epsilon=O_P(\sqrt{\log(n/h)})$, provided that Assumption \ref{asmp:power} holds. Hence, the conclusion follows.

\subsection{Proof of Proposition \ref{prop:sn}}

It suffices to show Proposition \ref{prop:sn}(i). Since $\sum_{i=1}^n\epsilon_i-\sigma W(n)=O(n^{1/(2+\nu)})$ almost surely (a.s.), under $\Pr_0$, we have $L_{\tau, \tau-h, \tau+h}=\sigma\bL_{t, t-1, t+1}+O(n^{1/(2+\nu)}/\sqrt{h})$, a.s., for $h\leq \tau\leq n-h$ and $t=\tau/h$. Applying Lemma \ref{apdx:lem:wiener}(i), $\max_{1\leq t\leq n/h-1}\bL_{t, t-1, t+1}=O(\sqrt{\log (n/h)})$ a.s.. Since $n/h\to\infty$ and $\{n^{2/(2+\nu)}\log n\}/h\to0$,  $L_{\tau, \tau-h, \tau+h}^2=\sigma^2 \bL_{t, t-1, t+1}^2+o(1)$, a.s.. Similarly, we can show that $L^2_{j, \tau-h, \tau}=\sigma^2\bL_{u, t-1, t}^2+o(1)$ and $L^2_{j, \tau, \tau+h}=\sigma^2\bL_{u, t, t+1}^2+o(1)$ a.s. with $u=j/h$. By continuous mapping theorem, $h^{-1}(\sum_{j=\tau-h+1}^\tau L_{j, \tau-h, \tau}^2+\sum_{j=\tau+1}^{\tau+h}L_{j, \tau, \tau+h}^2)=\bV_{t, t-1, t+1}+o(1)$ a.s.. Hence, we conclude that $\max_{h\leq \tau\leq n-h}S_{\tau, {\rm mean}}\to\sup_{1\leq t<\infty}\bL^2_{t, t-1, t+1}/\bV_{t, t-1, t+1}$ in distribution.

\subsection{Proof of Proposition \ref{prop:H0_3}}

\begin{align*}
\text{FWER} &= \Pr(T_{\th_j\in(\th_{j-1}, \th_{j+1}]}>t_{\text{u}, \alpha}\text{ for some }(\th_{j-1}, \th_{j+1}]\in\widetilde{\mathcal{H}})\\
&\stackrel{(i)}{\leq} \Pr\left(\max_{0\leq \underline{\tau}<\tau<\overline{\tau}\leq n ;(\underline{\tau}, \overline{\tau}]\in\widetilde{\mathcal{H}}}T_{\tau\in(\underline{\tau}, \overline{\tau}]}>t_{\text{u}, \alpha}\right)\stackrel{(ii)}{=}\Pr_0\left(\max_{0\leq \underline{\tau}<\tau<\overline{\tau}\leq n ;(\underline{\tau}, \overline{\tau}]\in\widetilde{\mathcal{H}}}T_{\tau\in(\underline{\tau}, \overline{\tau}]}>t_{\text{u}, \alpha}\right)+o(1)\\
&\stackrel{(iii)}{\leq} \Pr_0\left(\max_{0\leq \underline{\tau}<\tau<\overline{\tau}\leq n}T_{\tau\in(\underline{\tau}, \overline{\tau}]}>t_{\text{u}, \alpha}\right)+o(1)\stackrel{(iv)}{\leq} \alpha+o(1).
\end{align*}
Inequality (i) extends detected true nulls to all potential true nulls. Inequality (ii) applies the assumption of nullifiability. Inequality (iii) further expands the true null pool to all possible changepoint locations. Finally, Inequality (iv) is justified by the threshold selection assumption.

\section{Auxiliary Definitions and Lemmas}\label{apdx:sec:details}

\begin{definition}[Sparsely convex sets]\label{def:s-sparsity}
For some integer $s>0$, we say that $A\subset \bbR^d$ is an $s$-sparsely convex set if there exists an integer $Q>0$ and convex sets $A_q\subset \bbR^d$, $q\in[Q]$, such that $A=\cap_{q=1}^QA_q$ and the indicator function of each $A_q$, $w\mapsto \bbI(w\in A_q)$, depends on at most $s$ elements of its argument $w=(w_1, \dots, w_d)$. 
\end{definition}

Let $X_1, \dots, X_n$ be independent centered random vectors in $\bbR^d$ with $d\geq 3$. Assume that $\E(X_{ij}^2)<\infty$ for $i\in[n]$ and $j\in[d]$. Define the normalized sum $S_X=n^{-1/2}\sum_{i=1}^n X_i$. Let $Y_1, \dots, Y_n$ be independent centered Gaussian random vectors in $\bbR^d$ such that each $Y_i$ has the same covariance matrix as $X_i$, that is, $Y_i\sim N(0, \E(X_iX_i^\top))$. Define the normalized sum for the Gaussian random vectors: $S_Y=n^{-1/2}\sum_{i=1}^n Y_i$. Fix an integer $s>0$, and let $\mathcal{A}^{\text{sp}}(s)$ denote the class of all $s$-sparsely convex  sets in $\bbR^d$. 
\begin{assumption}\label{apdx:asmp:gaussian-approx}
(i) $n^{-1}\sum_{i=1}^n\E\{(v^\top X_i)^2\}\geq b$ for all $v\in\S^{d-1}$ with $\Vert v\Vert_0\leq s$. (ii) $n^{-1}\sum_{i=1}^n\E(\vert X_{ij}\vert^{2+k})\leq B_n^k$ for all $j\in[d]$ and $k=1, 2$. (iii) $\E\{\exp(\vert X_{ij}\vert/B_n)\}\leq 2$ for all $i\in[n]$ and $j\in[d]$.
\end{assumption}
\begin{lemma}\label{apdx:lem:gaussian:approx}
Suppose that Assumption \ref{apdx:asmp:gaussian-approx} is satisfied. We have
\begin{align*}
\sup_{A\in\mathcal{A}^{\text{sp}}(s)}\left\vert \Pr\left(S_X\in A\right)-\Pr\left(S_Y\in A\right)\right\vert\leq C\left\{B_n^2\log^7(nd)/n\right\}^{1/6},
\end{align*}
where the positive constant $C$ depends only on $s$ and $b$.
\end{lemma}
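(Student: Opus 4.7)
The plan is to adapt the Stein--Slepian smart-path and anti-concentration framework of Chernozhukov, Chetverikov, and Kato (2017) from hyperrectangles to the $s$-sparsely convex class $\mathcal{A}^{\text{sp}}(s)$. By Definition \ref{def:s-sparsity}, any $A\in\mathcal{A}^{\text{sp}}(s)$ can be written as $A=\cap_{q=1}^Q A_q$, where each $A_q$ is convex and its indicator depends only on coordinates in some $I_q\subset[d]$ with $|I_q|\le s$. The first step is therefore to represent $\mathbb{I}_A(w)=\prod_{q=1}^Q\mathbb{I}_{A_q}(w)$ as a product of fiberwise-convex indicators in at most $s$ variables and to build a smooth surrogate for it.

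For the smoothing step I would use a soft-max/soft-indicator construction. Let $\varphi_q(w)=\text{dist}(w_{I_q},A'_q)$ denote the Euclidean distance from the projection $w_{I_q}\in\mathbb{R}^s$ to the convex set $A'_q\subset\mathbb{R}^s$ associated with $A_q$, and define $m_\beta(w)=\beta^{-1}\log\sum_{q=1}^Q\exp(\beta\varphi_q(w))$, a soft-max approximation of $\max_q\varphi_q$. Composing with a smooth nondecreasing cutoff $\eta:\mathbb{R}\to[0,1]$ that equals $1$ on $(-\infty,0]$ and vanishes on $[\varepsilon,\infty)$ yields a function $f_{\beta,\varepsilon}$ sandwiching $\mathbb{I}_A$ as $\mathbb{I}_{A^{-\varepsilon'}}\le f_{\beta,\varepsilon}\le\mathbb{I}_{A^{\varepsilon'}}$ for $\varepsilon'\asymp\varepsilon+\beta^{-1}\log Q$. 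Because each $\varphi_q$ depends on only $s$ coordinates, the third-order derivative tensor of $f_{\beta,\varepsilon}$ is sparse, with bounds scaling only polynomially in $s$, $\beta$, and $\varepsilon^{-1}$ rather than in $Q$ or $d$.

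The core comparison step is Gaussian interpolation: set $Z(t)=\sqrt{t}\,S_X+\sqrt{1-t}\,S_Y$ and expand $(d/dt)\E f_{\beta,\varepsilon}(Z(t))$ using Stein's identity on the Gaussian component. After cancellation of first- and second-moment contributions, the remainder reduces to sums of third partial derivatives of $f_{\beta,\varepsilon}$ contracted against $n^{-3/2}\sum_i \E[X_{ij_1}X_{ij_2}X_{ij_3}]$. Combining the sparse derivative bounds with Assumption \ref{apdx:asmp:gaussian-approx}(ii)--(iii)---so that moments of $X_{ij}$ up to order three are controlled by $B_n$ and truncation at scale $B_n\log(nd)$ has negligible tail---yields $|\E f_{\beta,\varepsilon}(S_X)-\E f_{\beta,\varepsilon}(S_Y)|\lesssim C(s)\beta^2\varepsilon^{-1}B_n\log^{C(s)}(nd)/\sqrt{n}$. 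To pass from $f_{\beta,\varepsilon}$ back to $\mathbb{I}_A$ I would invoke an anti-concentration inequality tailored to $\mathcal{A}^{\text{sp}}(s)$: under Assumption \ref{apdx:asmp:gaussian-approx}(i), $\Pr(S_Y\in A^{\varepsilon'}\setminus A^{-\varepsilon'})\le C(s,b)\varepsilon'\sqrt{\log(nd)}$, obtained by projecting onto each $I_q$, applying Nazarov's inequality to the $s$-dimensional convex $A'_q$, and absorbing the union over $q\le Q\le\binom{d}{s}$ into a logarithmic factor. Balancing $\beta$ and $\varepsilon$ against these three error terms delivers the target rate $\{B_n^2\log^7(nd)/n\}^{1/6}$.

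The main obstacle will be propagating the sparsity structure cleanly through the smoothing, interpolation, and anti-concentration steps: the third-derivative bounds must depend only polynomially on $s$ (and only logarithmically on $Q$), because a naive union bound over the $\binom{d}{s}$ possible sparse patterns would destroy the $\log^7(nd)$ scaling. The sub-exponential tail condition in Assumption \ref{apdx:asmp:gaussian-approx}(iii) is used precisely to control the truncation remainder in the Stein expansion and to keep the soft-max well-behaved uniformly in $(n,d)$, while the lower-variance condition in Assumption \ref{apdx:asmp:gaussian-approx}(i) drives the sparsified Nazarov bound and hence the final $1/6$ exponent.
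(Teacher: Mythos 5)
The paper does not prove this lemma at all: it is quoted verbatim as Proposition 3.2 of \cite{chern2017central}, so your proposal is measured against the argument in that reference rather than against anything in this manuscript. Your high-level toolkit (smooth surrogate for the indicator, Stein/Slepian interpolation, Gaussian anti-concentration, then balancing the smoothing parameters to get the $1/6$ exponent) is the right one, but the specific route you describe has two genuine gaps, and they are exactly the points where the reference's proof does something structurally different.

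First, the smoothing step: you build the surrogate from $\varphi_q(w)=\mathrm{dist}(w_{I_q},A_q')$ composed with a soft-max and a cutoff. The distance function to a convex set is Lipschitz but not three times differentiable (it vanishes identically on $A_q'$ and has a kink across $\partial A_q'$), so the ``third-order derivative tensor of $f_{\beta,\varepsilon}$'' that your Stein expansion contracts against need not exist, and no polynomial-in-$s$ bound on it can be asserted. Second, and more seriously, the anti-concentration step: you claim $\Pr(S_Y\in A^{\varepsilon'}\setminus A^{-\varepsilon'})\lesssim\varepsilon'\sqrt{\log(nd)}$ by applying Nazarov's inequality to each $s$-dimensional convex piece $A_q'$ and ``absorbing the union over $q\le Q\le\binom{d}{s}$ into a logarithmic factor.'' Nazarov's inequality is a statement about the maximum of finitely many \emph{linear} functionals of a Gaussian vector; it does not apply to a general convex $A_q'$, and a union bound over the boundary layers of the $Q$ convex pieces costs a factor of $Q\asymp d^{s}$, not $\sqrt{\log Q}$, which destroys the $\log^{7}(nd)$ rate for growing $d$. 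The missing idea --- which is the actual content of Proposition 3.2 in \cite{chern2017central} --- is to first approximate each $A_q$ from inside and outside by an intersection of polynomially many supporting half-spaces with $s$-sparse normal vectors $v$; Assumption \ref{apdx:asmp:gaussian-approx}(i) then guarantees $\Var(v^{\top}S_X)\ge b$ for each such normal, the event $\{S_X\in A\}$ becomes a hyperrectangle event for the lifted vector of linear statistics $(v_j^{\top}S_X)_j$, and both the smooth-max interpolation and Nazarov anti-concentration apply to that linear family with only logarithmic cost in its cardinality. Without that linearization, neither your derivative bounds nor your claimed anti-concentration rate goes through.
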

\begin{proof}
See Proposition 3.2 in \cite{chern2017central}.
\end{proof}

\begin{lemma}\label{apdx:lem:maximal-expec}
Let $X_1, \ldots, X_n$ be independent centered random vectors in $\mathbb{R}^d$ with $d \geq 2$. Define $ V= \max_{j\in[d]} \left|\sum_{i=1}^n X_{ij}\right|$, $U = \max_{i\in[n], j\in[d]} |X_{ij}|$ and $\phi^2= \max_{j\in[d]} \sum_{i=1}^n \mathbb{E}(X_{ij}^2)$. Then  
\[
\E(V) \leq C \left( \phi \sqrt{\log d} + \sqrt{\E(U^2)}\log d \right),
\]
where $C$ is a universal constant.
\end{lemma}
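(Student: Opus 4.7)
The plan is a two-stage argument combining symmetrization with conditional sub-Gaussian maximal inequalities, iterated once to sharply bound the random second-moment envelope. First, I would apply the standard symmetrization inequality for sums of independent centered random vectors: with iid Rademacher signs $\epsilon_i$ independent of the $X_i$'s,
\[
\E(V) \le 2\,\E\max_{j\in[d]}\Big|\sum_{i=1}^n \epsilon_i X_{ij}\Big|.
\]
Conditional on $(X_i)_{i=1}^n$, each sum $\sum_i \epsilon_i X_{ij}$ is sub-Gaussian with variance proxy $\sum_i X_{ij}^2$, so Hoeffding's inequality combined with a union bound over $j\in[d]$ yields
\[
\E_\epsilon \max_{j\in[d]} \Big|\sum_{i=1}^n \epsilon_i X_{ij}\Big| \le C\sqrt{\log d}\,\max_{j\in[d]}\Big(\sum_{i=1}^n X_{ij}^2\Big)^{1/2}.
\]
Taking outer expectations and applying Jensen's inequality gives $\E(V)\le C\sqrt{\log d}\,M$, where $M := (\E\max_{j\in[d]}\sum_{i=1}^n X_{ij}^2)^{1/2}$.

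Next, I would bound $M^2$ by centering the second moments: $M^2 \le \phi^2 + \E\max_{j\in[d]}|\sum_{i=1}^n (X_{ij}^2 - \E X_{ij}^2)|$. Applying symmetrization a second time to the centered independent variables $X_{ij}^2-\E X_{ij}^2$, then the conditional sub-Gaussian bound together with the pointwise estimate $\sum_i X_{ij}^4 \le U^2\sum_i X_{ij}^2$, one obtains
\[
\E\max_{j\in[d]}\Big|\sum_{i=1}^n (X_{ij}^2-\E X_{ij}^2)\Big| \le C\sqrt{\log d}\,\E\Big[U\max_{j\in[d]}\Big(\sum_{i=1}^n X_{ij}^2\Big)^{1/2}\Big] \le C\sqrt{\log d}\,\sqrt{\E(U^2)}\,M,
\]
where the last inequality is Cauchy-Schwarz. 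This produces the quadratic inequality $M^2 \le \phi^2 + C\sqrt{\log d}\,\sqrt{\E(U^2)}\,M$, whose solution satisfies $M \le C(\phi + \sqrt{\log d}\sqrt{\E(U^2)})$. Substituting back into $\E(V)\le C\sqrt{\log d}\,M$ delivers the claimed bound.

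The main subtlety is this second stage: a crude bound such as $\E\max_j\sum_i X_{ij}^2 \le n\,\E(U^2)$ would forfeit the sharp dependence $\phi\sqrt{\log d}$ in the leading term. Isolating the deterministic part $\phi^2$, controlling only the fluctuation via a second symmetrization plus the elementary $\ell_4 \le U\cdot\ell_2$ bound, and then closing the recursion by solving a quadratic in $M$, is the key maneuver. The remaining ingredients---symmetrization for centered independent sums, conditional Hoeffding, and union bounds---are all textbook.
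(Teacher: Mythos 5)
Your argument is correct, and it is essentially the standard proof of this inequality: the paper itself only cites Lemma E.1 of \cite{chern2017central}, and the proof given there proceeds exactly via your route---symmetrization, the conditional sub-Gaussian maximal inequality over $d$ coordinates, a second symmetrization of the centered squares with the bound $\sum_i X_{ij}^4 \le U^2 \sum_i X_{ij}^2$, Cauchy--Schwarz, and solving the resulting quadratic inequality in $M$. You have simply unpacked the citation into a complete, correct derivation.
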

\begin{proof}
See Lemma E.1 in \cite{chern2017central}.
\end{proof}
\begin{lemma}\label{apdx:lem:maximal-prob}
Consider the setting of Lemma \ref{apdx:lem:maximal-expec}.  For every $\gamma > 0$, $\beta \in (0, 1]$ and $t > 0$,
\[
\mathbb{P}\left\{ V \geq (1 + \gamma) \mathbb{E}(V) + t \right\} \leq \exp\left\{-\frac{t^2}{3 \phi^2}\right\} + 3 \exp\left\{-\left(\frac{t}{C \| U\|_{\psi_\beta}}\right)^\beta \right\},
\]
where $\Vert U\Vert_{\psi_{\beta}}=\inf\{t>0: \E[\exp\{(\vert U_1\vert/t)^\beta\}]\leq 2\}$ and $C$ is a constant depending only on $\gamma$, $\beta$.
\end{lemma}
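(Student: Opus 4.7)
The plan is to combine a truncation argument with Bousquet's form of Talagrand's concentration inequality for the supremum of an empirical process, following the now-standard template of Adamczak (2008) for unbounded empirical processes. Observe that $V = \sup_{f \in \mathcal{F}} \left|\sum_{i=1}^{n} f(X_i)\right|$ with $\mathcal{F} = \{\pm e_j^\top x : j \in [d]\}$, so the task reduces to a concentration inequality for the supremum of a centered independent-sum empirical process indexed by a finite class with envelope $U = \max_{i,j} |X_{ij}|$.

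First I would fix a truncation level $\lambda > 0$ (to be calibrated later) and decompose $X_{ij} = Y_{ij} + Z_{ij}$, where $Y_{ij} = X_{ij}\,\bbI\{|X_{ij}| \leq \lambda\} - \E[X_{ij}\,\bbI\{|X_{ij}| \leq \lambda\}]$ is centered and bounded by $2\lambda$, and $Z_{ij}$ is the centered tail. Writing $V_Y = \max_j |\sum_i Y_{ij}|$ and $V_Z = \max_j |\sum_i Z_{ij}|$ gives $V \leq V_Y + V_Z$, which cleanly separates the Gaussian-regime bound from the heavy-tail bound.

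For the bounded piece, I would invoke Bousquet's sharp version of Talagrand's inequality to get $\Pr\{V_Y \geq (1 + \gamma/2)\E V_Y + t_1\} \leq \exp\{-t_1^2/(2\sigma_Y^2 + c(\gamma)\lambda t_1)\}$, where $\sigma_Y^2 \leq \phi^2$ is the weak variance. Choosing $\lambda$ of order $\|U\|_{\psi_\beta}(\log d)^{1/\beta}$ up to a $\gamma$-dependent factor renders the linear-in-$t_1$ term negligible in the Gaussian regime, leaving the first term $\exp\{-t_1^2/(3\phi^2)\}$. For the heavy-tailed residual I would use $\Pr(U > \lambda) \leq 2\exp\{-(\lambda/\|U\|_{\psi_\beta})^\beta\}$, together with a union bound over the rare event $\{\max_i U_i > \lambda\}$ and integration of the Orlicz tails of the centering corrections, to obtain $\Pr(V_Z > t_2) \leq 3\exp\{-(t_2/C'\|U\|_{\psi_\beta})^\beta\}$. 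The factor $3$ absorbs the contributions from the centering correction and from the tail event itself.

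Finally, the expectation passage $\E V_Y \leq (1 + \gamma/2)\E V$ follows from $\E V_Y \leq \E V + \E V_Z$ once $\lambda$ is large enough that $\E V_Z$ is absorbed into $\gamma\E V$; this is the source of the $\gamma$-dependence of $C$. A union bound over an allocation $t_1 + t_2 \leq t$ (say $t_1 = t_2 = t$ after rescaling $C$) then delivers the stated inequality. The main obstacle is the calibration of $\lambda$: it must be large enough that the Talagrand denominator collapses to essentially $3\phi^2$ and that $\E V_Z$ is controlled by $\gamma\E V$, yet small enough to preserve the exponent $(t/C\|U\|_{\psi_\beta})^\beta$. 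Because $\beta \leq 1$ the Orlicz tails are heavier than Gaussian, making this balance delicate, and it is precisely this step that forces the factor $(1 + \gamma)$ in front of $\E V$ rather than $1$.
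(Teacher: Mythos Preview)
The paper does not actually prove this lemma: its entire ``proof'' is the one-line citation ``See Lemma~E.2(i) in \cite{chern2017central}.'' Your sketch is correct and in fact reproduces the argument that lies behind that cited result---the Adamczak (2008) truncation scheme combined with Bousquet's sharp Talagrand inequality for the bounded part and Orlicz-norm tail control for the residual is precisely how Chernozhukov, Chetverikov and Kato establish their Lemma~E.2. So you have supplied what the paper merely outsources.

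One small remark on the calibration you flag as delicate: rather than fixing $\lambda$ of order $\|U\|_{\psi_\beta}(\log d)^{1/\beta}$ in advance, the cleaner route (and the one taken in the cited proof) is to truncate at a level comparable to a multiple of $\|U\|_{\psi_\beta}$ alone, then split the Bousquet bound into the two regimes $t \lesssim \phi^2/\lambda$ and $t \gtrsim \phi^2/\lambda$; in the first regime the Gaussian term $\exp\{-t^2/(3\phi^2)\}$ dominates, while in the second the sub-Weibull term absorbs the linear-in-$t$ contribution. This avoids having to argue that $\E V_Z \lesssim \gamma\,\E V$---which, as you implicitly note, need not hold in general---and makes the role of $\gamma$ purely that of a slack constant in the Bousquet expectation comparison.
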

\begin{proof}
See Lemma E.2(i) in \cite{chern2017central}.
\end{proof}

\begin{lemma}\label{apdx:lem:invariance}
Denote $\mathscr{H}$ as the set of all continuous functions $H:[0, \infty)\to [0, \infty)$ such that $t^{-2}H(t)$ is non-decreasing and $t^{-4+r}H(t)$ is non-increasing for some $r>0$. Let $\{X_i:i\in[n]\}$ be a sequence of independent random vectors with zero means and $\Var(X_i)=\sigma_i^2\Gamma$. Suppose that there exists a non-negative divergent sequence $\{a_i\}$ such that $\sum_{i=1}^\infty H(a_i)^{-1}\E\{H(|X_i|)\}<\infty$ for some $H\in\mathscr{H}$. Then one can construct a sequence $\{Y_i:i\in[n]\}$, where $Y_i$ follows the distribution $N(0, \sigma_i^2\Gamma)$.  The partial sums $S_n = \sum_{i=1}^n X_i$ and $T_n = \sum_{i=1}^n Y_i$ satisfy $S_n - T_n = o(a_n)$ almost surely.
\end{lemma}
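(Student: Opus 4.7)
The plan is to derive the result as a direct consequence of the multivariate strong invariance principle of \cite{Einmahl1987invariance}, adapted to the shared-covariance-structure setting $\Var(X_i)=\sigma_i^2\Gamma$. First, I would reduce the problem to the standardized case: write $\Gamma = U\Lambda U^\top$ for an orthogonal $U$ and nonnegative diagonal $\Lambda$, and set $\widetilde{X}_i = \Lambda^{+1/2} U^\top X_i$ (with the pseudo-inverse on degenerate coordinates). The $\widetilde{X}_i$ are independent, centered, and have isotropic scaled covariance $\sigma_i^2 I$ on the nondegenerate block, so it suffices to build a Gaussian coupling $\widetilde{Y}_i \sim N(0,\sigma_i^2 I)$ and then set $Y_i = U\Lambda^{1/2}\widetilde{Y}_i$; linearity of partial sums then transfers any rate of coupling back to the original scale.

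Second, I would invoke Einmahl's coupling theorem coordinate-wise (or via its vector form when stated in that generality): under the moment/tail regularity encoded by some $H\in\mathscr{H}$ (quadratic growth from below via $t^{-2}H(t)\uparrow$, and subquartic growth from above via $t^{-4+r}H(t)\downarrow$), one may construct, on a possibly enlarged probability space, independent Gaussian vectors $Y_i$ with the prescribed covariances such that the partial-sum difference satisfies a Borel--Cantelli-type bound governed by the tail series $\sum_i H(a_i)^{-1}\E\{H(|X_i|)\}$. The convergence of this series is precisely the hypothesis in the lemma, so the coupling can be tuned to rate $a_n$.

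Third, I would extract the almost-sure rate by combining the blockwise coupling error bounds with the Borel--Cantelli lemma. Each block contributes an error that, after summation, is bounded by a tail of the series $\sum_i H(a_i)^{-1}\E\{H(|X_i|)\}$, which tends to zero, so that $|S_n-T_n|/a_n \to 0$ almost surely. The monotonicity conditions on $H$ are used here to ensure that truncation levels can be chosen at the scale of $a_i$ without losing the summability, and to convert the $H$-moment bound into a tail estimate for truncated increments.

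The main obstacle is the multivariate aspect: Einmahl's original construction is typically stated for scalar (or for $\mathbb{R}^d$ with general independent summands), so some care is needed in citing or re-deriving the joint coupling so that a single Wiener-type process matches all $d$ coordinates simultaneously. The assumption $\Var(X_i)=\sigma_i^2\Gamma$ streamlines this by making the covariance structure proportional to a common matrix, so that a single orthogonal rotation reduces the multivariate construction to the independent coordinate case. Once this reduction is in place, the rest is a direct application of Einmahl's approximation and the Borel--Cantelli argument.
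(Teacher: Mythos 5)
Your proposal is correct and takes essentially the same route as the paper: the paper's entire proof of this lemma is a citation of Theorem 2 in \cite{Einmahl1987invariance}, which is already stated for partial sums of independent random vectors in $\mathbb{R}^d$ under exactly the hypotheses of the lemma. The diagonalization/standardization reduction and the blockwise Borel--Cantelli machinery you sketch are therefore unnecessary here---the lemma is a direct restatement of Einmahl's multivariate result specialized to $\Var(X_i)=\sigma_i^2\Gamma$.
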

\begin{proof}
See Theorem 2 in \cite{Einmahl1987invariance}.
\end{proof}

\begin{lemma}\label{apdx:lem:wiener}
\begin{itemize}
\item[(i)] Let $\{W(t):t\geq 0\}$ be a standard Wiener process, then it holds
\begin{align*}
\sup_{0\leq t\leq n-1}\sup_{0\leq u\leq 1}\left|W(t+u)-W(t)\right| = O(\sqrt{\log n}), ~~\text{almost surely.}
\end{align*}
\item[(ii)] Let $\{W(t):t\geq0\}$ be a $d$-dimensional standard Wiener process with identity matrix $I_d$ as a covariance matrix, then it holds
\begin{align*}
\sup_{0\leq u\leq n-1}\sup_{0\leq u\leq 1}\left\|W(t+u)-W(t)\right\|_2 = O(\sqrt{\log n}), ~~\text{almost surely.}
\end{align*}
\end{itemize} 
\end{lemma}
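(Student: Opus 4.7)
The plan is to combine the reflection-principle tail bound for the Brownian supremum with a discretization in $t$ and a Borel--Cantelli argument along a geometric subsequence. First, I reduce the continuous sup over $t\in[0,n-1]$ to a discrete one. For any $s\in[0,n-1]$, write $t=\lfloor s\rfloor$ and $a=s-t\in[0,1)$; then for any $v\in[0,1]$, since $a,\,a+v\in[0,2]$,
\[
|W(s+v)-W(s)| \le |W(t+a+v)-W(t)| + |W(t+a)-W(t)| \le 2\sup_{0\le u\le 2}|W(t+u)-W(t)|.
\]
Hence it suffices to bound the discrete maximum $M_n := \max_{t\in\{0,1,\dots,n-1\}}\sup_{0\le u\le 2}|W(t+u)-W(t)|$. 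By stationarity of increments and the Markov property, each inner supremum has the same distribution as $\sup_{0\le u\le 2}|W(u)|$, which by Brownian scaling equals $\sqrt{2}\sup_{0\le u\le 1}|W(u)|$ in distribution. The reflection principle gives $\Pr\{\sup_{0\le u\le 1}|W(u)|>x\}\le 4\exp(-x^2/2)$ for $x>0$, and a union bound then yields
\[
\Pr\bigl\{M_n > C\sqrt{\log n}\bigr\} \;\le\; 4n\exp\!\left(-\frac{C^2\log n}{4}\right) \;=\; 4\,n^{\,1-C^2/4}.
\]

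Next, fix any $C>2$, so that the exponent $1-C^2/4<0$. Along the geometric subsequence $n_k=2^k$, the probabilities $\Pr(M_{n_k}>C\sqrt{\log n_k})$ bound a summable geometric series, so Borel--Cantelli ensures that almost surely $M_{n_k}\le C\sqrt{\log n_k}$ for all sufficiently large $k$. For a general $n$, pick $k$ with $2^{k-1}\le n<2^k$; then monotonicity $M_n\le M_{n_k}$ in $n$, together with $\log n_k\le 2\log n$ for $n$ large, gives $M_n\le C\sqrt{2\log n}$ almost surely eventually. Combining this with the discretization inequality above proves part (i) with constant $C\sqrt{2}$ (and in fact any constant strictly larger than $2\sqrt{2}$ works).

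For part (ii), since the dimension $d$ is fixed, write $W=(W_1,\dots,W_d)^\top$ with the coordinates being independent scalar standard Wiener processes. Then
\[
\|W(t+u)-W(t)\|_2 \;\le\; \sqrt{d}\,\max_{j\in[d]}\,|W_j(t+u)-W_j(t)|,
\]
and applying part (i) to each of the finitely many coordinates immediately yields the desired $O(\sqrt{\log n})$ bound almost surely.

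The only mildly delicate step is the reduction from a continuous parameter $t\in[0,n-1]$ to a discrete maximum over integer $t$; this is handled by absorbing the fractional part into a length-$2$ window and invoking Brownian scaling. Everything else is a standard reflection-principle plus Borel--Cantelli calculation, so no substantive obstacle is anticipated.
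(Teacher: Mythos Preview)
Your argument is correct. The paper does not actually prove this lemma; it simply cites Lemmas~E.2.1 and~E.2.2 in \cite{reckruhm2019estimating}. By contrast, you supply a self-contained proof via the standard route of discretizing the time parameter, applying the reflection-principle Gaussian tail bound together with a union bound, and then using Borel--Cantelli along a geometric subsequence to upgrade the probability bound to an almost-sure one. The coordinatewise reduction for part~(ii) is also the natural argument when $d$ is fixed. One cosmetic slip: after combining the discretization factor~$2$ with the bound $M_n\le C\sqrt{2\log n}$, the effective constant is $2C\sqrt{2}$ rather than $C\sqrt{2}$, so the admissible constants are those strictly larger than $4\sqrt{2}$; this does not affect the $O(\sqrt{\log n})$ conclusion.
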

\begin{proof}
See Lemmas E.2.1 and E.2.2 in \cite{reckruhm2019estimating}.
\end{proof}

\begin{lemma}\label{apdx:lem:covariance}
Let $X_1, \dots, X_n$ be iid zero-mean random vectors with covariance matrix $\Sigma$ such that $\|X_j\|_2\leq \sqrt{b}$ almost surely. Then for all $t>0$, the sample covariance matrix $\widehat{\Sigma}=n^{-1}\sum_{i=1}^nX_iX_i^\top$ satisfies
\begin{align*}
\Pr\left(\|\widehat{\Sigma}-\Sigma\|_2\geq t\right)\leq 2d\exp\left\{-\frac{nt^2}{2b(\|\Sigma\|_2+t)}\right\}.
\end{align*}
\end{lemma}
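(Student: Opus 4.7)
The plan is to write $\widehat{\Sigma}-\Sigma$ as the normalized sum of iid zero-mean symmetric random matrices and then invoke the matrix Bernstein inequality of Tropp. Specifically, set $Y_i=X_iX_i^\top-\Sigma$ for $i\in[n]$, so that $\widehat{\Sigma}-\Sigma=n^{-1}\sum_{i=1}^n Y_i$, the $Y_i$ are iid symmetric $d\times d$ matrices with $\E(Y_i)=0$, and matrix Bernstein will give a bound of the form $2d\exp\{-nt^2/(2\sigma^2+2Lt/3)\}$ once I control the uniform almost-sure bound $L$ on $\|Y_i\|_2$ and the matrix variance $\sigma^2=\|\E(Y_i^2)\|_2$.

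First I would bound the operator norm of each $Y_i$. Since $X_iX_i^\top\succeq 0$ and $\|X_iX_i^\top\|_2=\|X_i\|_2^2\le b$ almost surely, and since $0\preceq\Sigma\preceq bI_d$ (as $\Sigma=\E(X_iX_i^\top)$), the symmetric matrix $Y_i$ has eigenvalues in $[-\|\Sigma\|_2,b]$, whence $\|Y_i\|_2\le b$ almost surely, which will serve as the constant $L$. Next I would compute the variance proxy. Using $Y_i^2=(X_iX_i^\top)^2-\Sigma(X_iX_i^\top)-(X_iX_i^\top)\Sigma+\Sigma^2$, taking expectations, and using $(X_iX_i^\top)^2=\|X_i\|_2^2\,X_iX_i^\top$, I obtain $\E(Y_i^2)=\E(\|X_i\|_2^2\,X_iX_i^\top)-\Sigma^2\preceq b\,\Sigma$, which yields $\|\E(Y_i^2)\|_2\le b\|\Sigma\|_2$. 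This gives $\sigma^2=b\|\Sigma\|_2$.

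Finally, plugging these two ingredients into the matrix Bernstein inequality for sums of iid bounded centered symmetric matrices gives, for every $t>0$,
\[
\Pr\big(\|\widehat{\Sigma}-\Sigma\|_2\ge t\big)
\le 2d\exp\left\{-\frac{nt^2}{2b\|\Sigma\|_2+2bt/3}\right\}
\le 2d\exp\left\{-\frac{nt^2}{2b(\|\Sigma\|_2+t)}\right\},
\]
where the last inequality weakens $2bt/3$ to $2bt$ in the denominator to match the stated form. There is no substantive obstacle here: the entire argument reduces to verifying the two operator-norm inequalities $\|Y_i\|_2\le b$ and $\|\E(Y_i^2)\|_2\le b\|\Sigma\|_2$; the most delicate step is the matrix inequality $\E(\|X_i\|_2^2\,X_iX_i^\top)\preceq b\,\Sigma$, which follows directly from the almost-sure bound $\|X_i\|_2^2\le b$.
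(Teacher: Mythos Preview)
Your argument is correct and matches the paper's approach: the paper simply cites Corollary~6.3 in \cite{wainwright2019high}, whose proof is exactly the matrix Bernstein route you outline---writing $\widehat\Sigma-\Sigma$ as a sum of centered bounded symmetric matrices, bounding $\|Y_i\|_2\le b$ and $\|\E(Y_i^2)\|_2\le b\|\Sigma\|_2$, and then applying Tropp's inequality. Your derivation is sound and nothing is missing.
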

\begin{proof}
See Corollary 6.3 in \cite{wainwright2019high}.
\end{proof}

\section{Additional Numerical Results}\label{sec:impact_h}

\subsection{Scenario I: Hausdorff Distances}

\begin{figure}[!h]
\centering
\includegraphics[width=.8\textwidth]{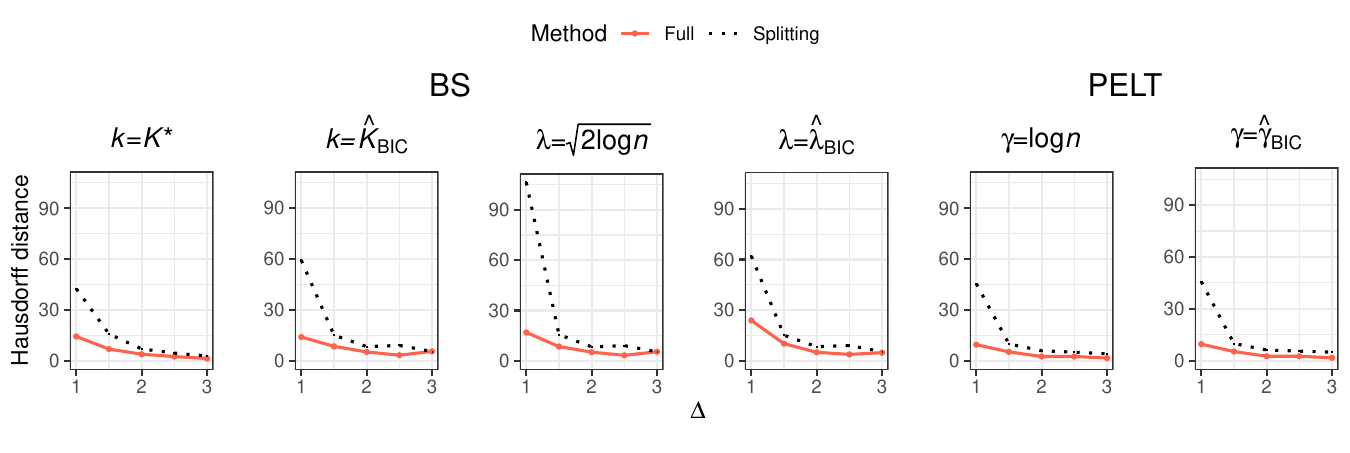}
\caption{Empirical Hausdorff distance between estimated and true changepoints under Scenario I, comparing results using the full dataset versus half the dataset (e.g., for the sample-splitting method).}
\label{fig:I_hausdorff}
\end{figure}

We assess changepoint detection accuracy using the Hausdorff distance between estimated and true changepoints
\[
\max\{\max_{1\leq i\leq K^*}\min_{1\leq j \leq K}\vert\tau_i^*-\widehat\tau_j\vert,\max_{1\leq j\leq K}\min_{1\leq i \leq K^*}\vert\tau_i^*-\widehat\tau_j\vert\}.
\]
It is evident that utilizing the full dataset results in a smaller Hausdorff distance, as illustrated in Figure \ref{fig:I_hausdorff}.

\subsection{Scenario I: Selective Errors}

For the selective methods, extra experiments are carried out with $\Delta=0$ (indicating no changes) to evaluate the selective error rate associated with a randomly selected true null from those detected, as explored in \cite{Jewell+Fearnhead+Witten-2022-p1082} and \cite{Carrington+Fearnhead-2023}, without applying Bonferroni corrections.

\begin{table}[!h]
\begin{center}
\caption{Empirical selective error (in $\%$) under Scenario I with $\Delta=0$. Entries marked with ``-'' indicate the method not available for those configurations.}
\begin{tabular}{lcccccc}
\toprule
\multirow{2}{*}{}&\multicolumn{4}{c}{BS}&\multicolumn{2}{c}{PELT}\\ \cmidrule(r){2-5}\cmidrule(r){6-7}
& $k=K^*$ &$k=\widehat{K}_{\rm BIC}$ & $\lambda=\sqrt{2\log n}$ & $\lambda=\widehat{\lambda}_{\rm BIC}$ & $\gamma=\log n$ & $\gamma=\widehat{\gamma}_{\rm BIC}$ \\
\midrule
JCW&3.5&30.5&0.5&15.0&5.0&23.55\\
CF&-&-&0.5&19.0&16.5&34.5\\
\bottomrule
\end{tabular}
\label{tab:I_selective}
\end{center}
\end{table}

\subsection{Scenario I: Computation Costs}

\begin{figure}[!h]
\centering
\includegraphics[width=.8\textwidth]{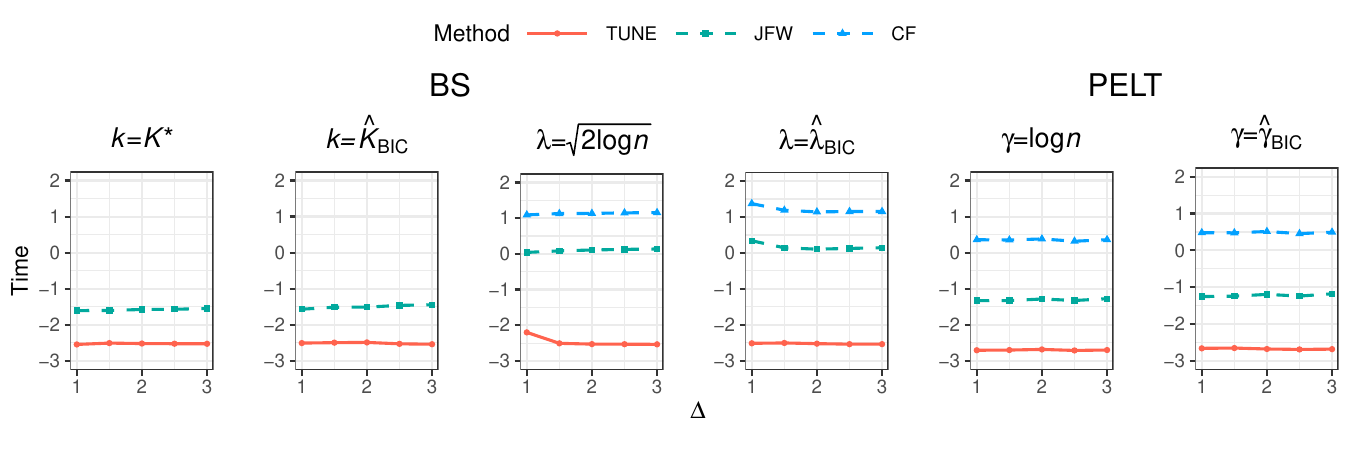}
\caption{Computation time (in seconds) under Scenario I for the TUNE and selective methods. The y-axis is displayed in $\log_{10}$-scale.}
\label{fig:I_time}
\end{figure}

Figure \ref{fig:I_time} displays the average running time across each detection scheme. The computation cost of the selective method, as proposed by \cite{Carrington+Fearnhead-2023}, increases with a parameter $N$, which is a critical component of this method. This increase is evident from the comparison between $N=1$ (reduced to the JFW method) and $N=10$ (the CF method considered in our paper). Conversely, our method exhibits negligible runtime while maintaining power comparable to that of the CF method, as illustrated in Figure \ref{fig:I_FWER_power}.

\subsection{Scenario I: Effect of Window Size}\label{SM:h}

\begin{figure}[!h]
\centering
\includegraphics[width=.8\textwidth]{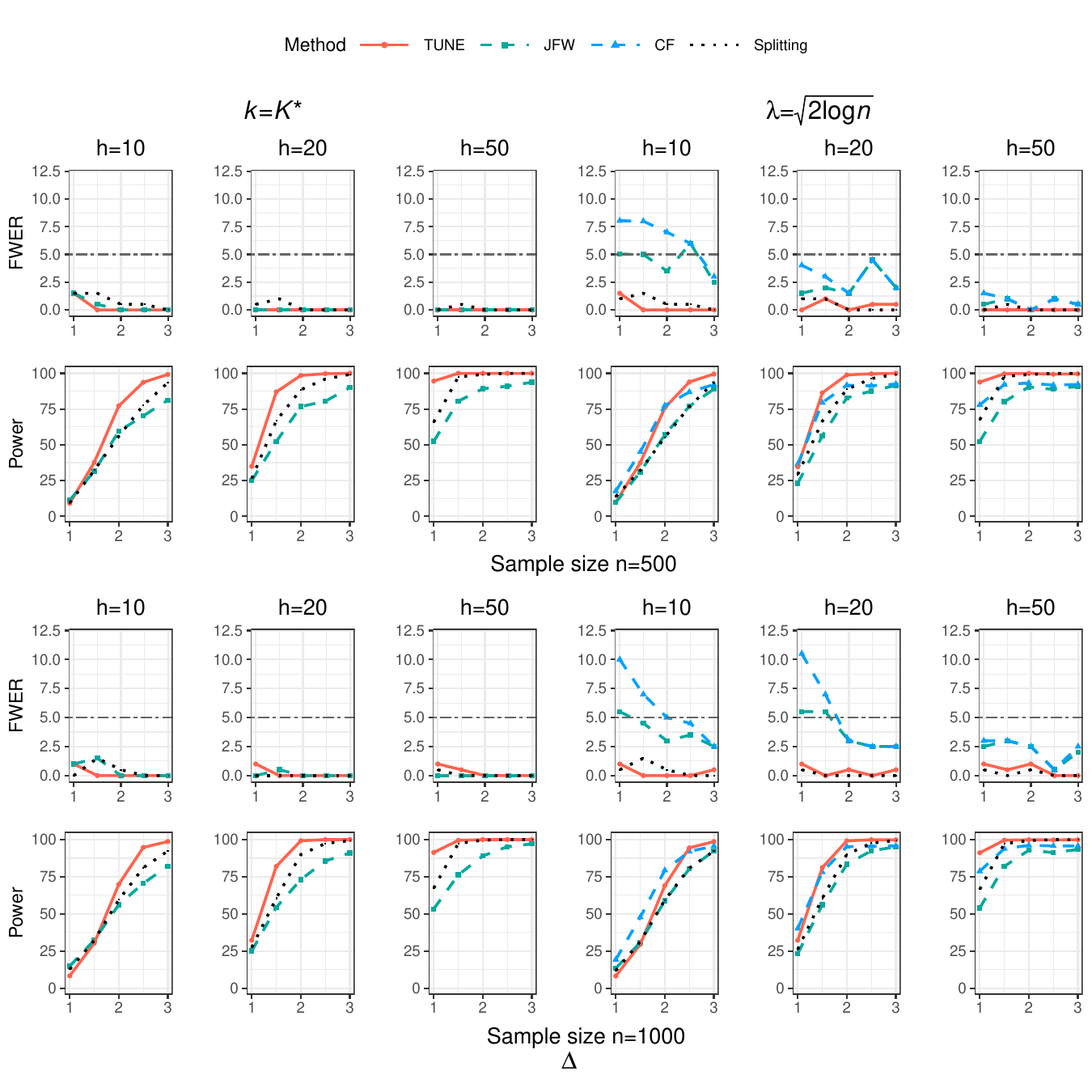}
\caption{Empirical FWER and power (in $\%$) for TUNE, selective and sample-splitting methods under Scenario I, with different sample size $n$, signal strength $\Delta$, and window size $h$.}
\label{fig:I_h}
\end{figure}

To assess the impact of varying the window size $h$, we revisit Scenario I, with $n\in\{500,1000\}$ and $h\in\{10,20,50\}$. Figure \ref{fig:I_h} illustrates the empirical FWER and power for the $K^*$-step BS and BS with threshold $\lambda=\sqrt{2\log n}$. The TUNE method consistently maintains robust FWER control across all window sizes, while the CF method experiences FWER inflation at smaller $h$ values sometimes. In terms of power, TUNE not only outperforms the sample-splitting and JFW methods but also achieves comparable power to the CF method at smaller $h$ values and surpasses it as $h$ increases. 

\subsection{Array CGH Data: Detected and Validated Changepoints}

Figure \ref{fig:aGCH} provides visualizations of detected and validated changepoints for window sizes $h=20$ and $50$. Changepoints detected by TUNE align more closely with those identified by the validation approach as $h$ increases, indicating enhanced power. Notably, the visual representation reveals that changepoints tend to cluster at specific loci, particularly when $h=20$ is used. This clustering suggests that these loci may be common sites of genomic instability or possess biological significance in the context of bladder tumors, meriting further investigation.

\begin{figure}[!h]
\centering
\subfigure[$h=20$]{
\includegraphics[width=0.5\textwidth]{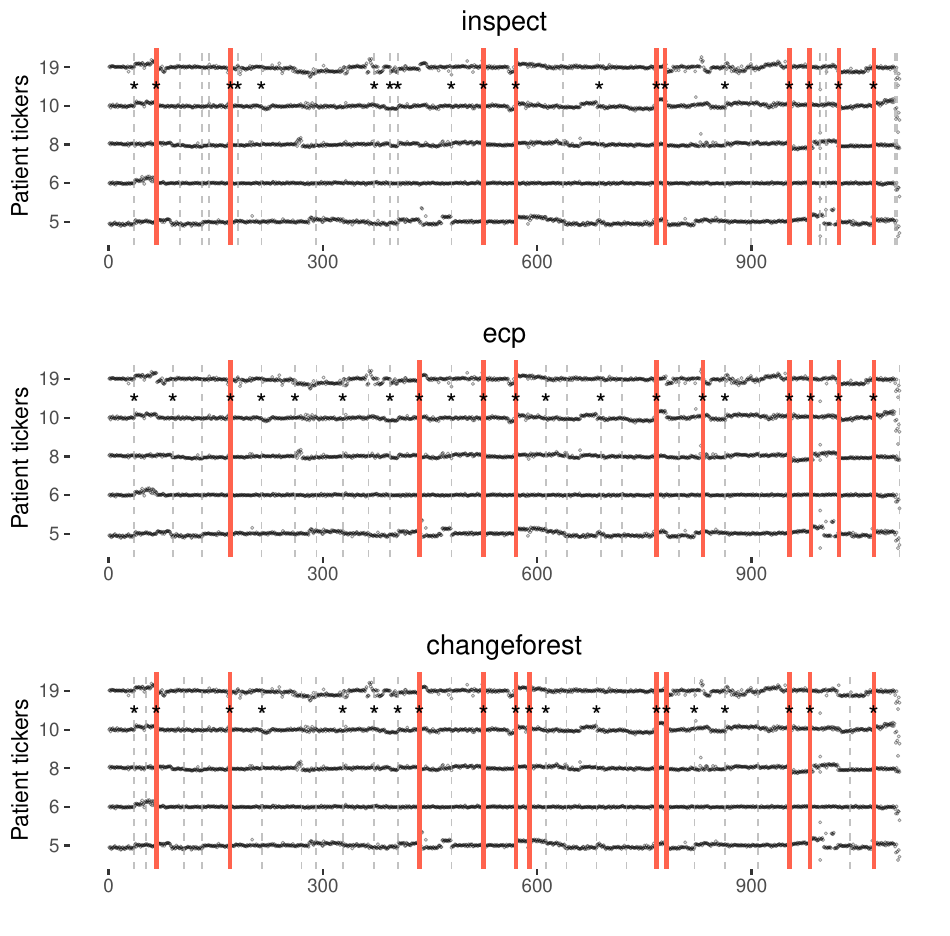}}\subfigure[$h=50$]{
\includegraphics[width=0.5\textwidth]{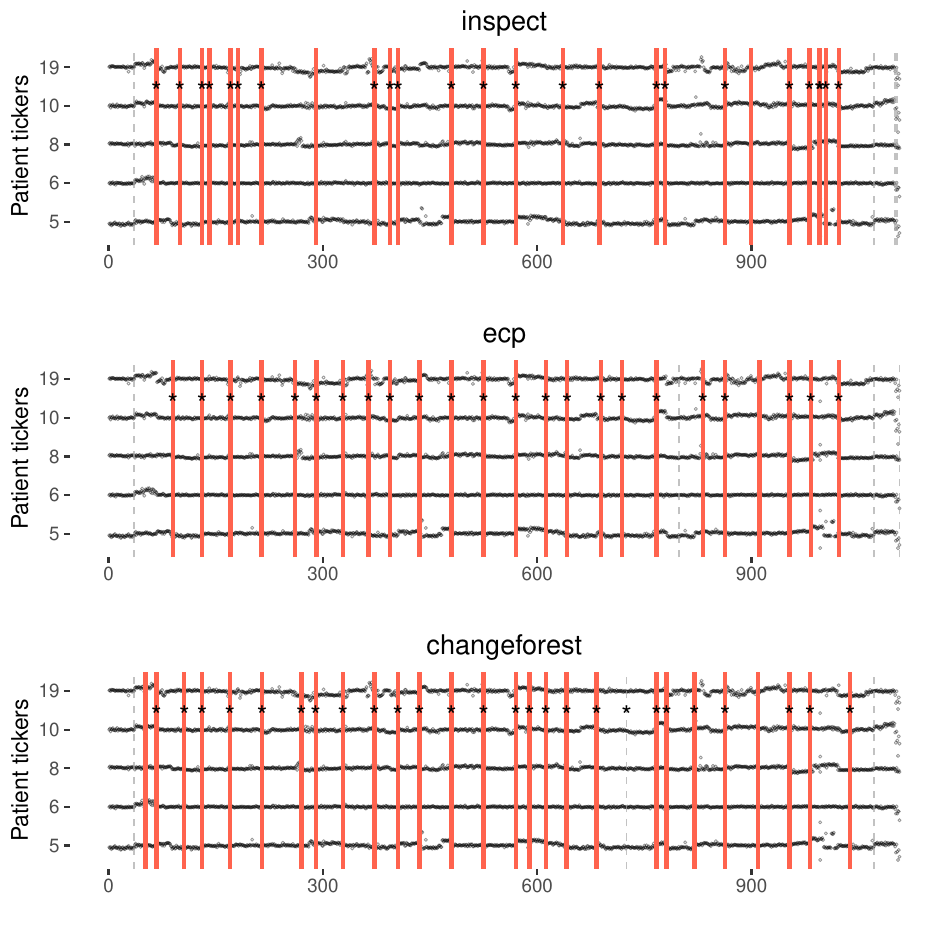}}
\caption{Detected and validated changepoints in the array CGH dataset for $h\in\{20,50\}$. Data points are represented by circles. Detected changepoints are indicated by dotted lines, and those validated by the TUNE method are indicated by vertical lines. Changepoints validated by the held-out dataset are marked with asterisks.}
\label{fig:aGCH}
\end{figure}

\end{document}